\definecolor{citegreen}{rgb}{0.2,0.2,0.6}
\newcommand{\beq}{\begin{equation}}
\newcommand{\eeq}{\end{equation}}
\newcommand{\comm}[1]{}
\newcommand{\tr}{\mathfrak{t}_\Sigma}
\def\@listI{
    \leftmargin\leftmargini
    \parsep 1.5pt plus 1pt minus 1pt
    \topsep -1.5pt plus 1pt minus 1pt
    \itemsep \parsep}
\let\@listi\@listI
\newcommand{\ps}[2]{\left\langle#1,#2\right\rangle}
\newcommand\sfD{\mathsf{D}}
\newcommand\myemph[1]{\textbf{\emph{#1}}} 
\definecolor{darkred}{rgb}{0.5,0.1,0.1}
\newcommand\dd{{\mathsf{d}}}
\def\softness{0.4}
\definecolor{softred}{rgb}{1,\softness,\softness}
\definecolor{softgreen}{rgb}{\softness,1,\softness}
\definecolor{softblue}{rgb}{\softness,\softness,1}
\definecolor{softrg}{rgb}{1,1,\softness}
\definecolor{softrb}{rgb}{1,\softness,1}
\definecolor{softgb}{rgb}{\softness,1,1}
\newcounter{counter_a}
\numberwithin{figure}{section}
\numberwithin{equation}{section}
\theoremstyle{plain}
\newtheorem*{thm*}{Theorem}
\newtheorem{thm}{Theorem}[section]
\newtheorem{lem}[thm]{Lemma}
\newtheorem{prop}[thm]{Proposition}
\newtheorem{corollary}[thm]{Corollary}
\newtheorem{cor}[thm]{Corollary}
\newtheorem{dfn}[thm]{Definition}
\theoremstyle{remark}
\newtheorem{remark}[thm]{Remark}
\theoremstyle{plain}
\newcommand{\bfn}{\mathbf{n}}
\newcommand{\beu}{\begin{equation*}}
\newcommand{\eeu}{\end{equation*}}
\newcommand{\besu}{\begin{equation*}
\begin{aligned}}
\newcommand{\eesu}{\end{aligned}
\end{equation*}}
\newcommand{\bes}{\begin{equation}
\begin{aligned}}
\newcommand{\ees}{\end{aligned}
\end{equation}}
\newcommand\CC{\mathbb C}
\newcommand\void[1]{}
\newcommand{\dom}{\mathrm{dom}\,}
\newcommand{\Omegat}{\widetilde{\Omega}}
\author{$\text{Thomas Ourmi\`eres-Bonafos}^{1}$}
\address[1]{Laboratoire de Math\'ematiques d'Orsay, Univ.~Paris-Sud, CNRS, Universit\'e Paris-Saclay, 91405 Orsay, France}
\email{thomas.ourmieres-bonafos@math.u-psud.fr}
\urladdr{http://www.math.u-psud.fr/~ourmieres-bonafos/}
\author{$\text{Luis Vega}^{2,3}$}
\address[2]{BCAM -Basque Center for Applied Mathematics, Alameda de Mazarredo, 14 E48009 Bilbao, Basque Country -  Spain}
\address[3]{Departamento de Matem\'aticas - Universidad del Pa\'is Vasco UPV-EHU, Apdo 644, 48080  Bilbao, Spain}
\email{lvega@bcamath.org}
\urladdr{http://www.bcamath.org/en/people/lvega}
\title[\textsc{Self-adjointness of Dirac operators}]{\textsc{A strategy for self-adjointness of Dirac operators : applications to the MIT bag model and $\delta$-shell interactions}}
\date{}
\begin{document}
\subjclass[2010]{Primary 47B25; Secondary 31B10, 35J67, 35Q40, 58J32, 81Q10}
\keywords{Dirac operators, self-adjoint extensions, MIT bag model, $\delta$-shell interactions}
\maketitle

\begin{abstract} We develop an approach to prove self-adjointness of Dirac operators with boundary or transmission conditions at a $\mathcal{C}^2$-compact surface without boundary. To do so we are lead to study the layer potential induced by the Dirac system as well as  to define traces in a weak sense for functions in the appropriate Sobolev space. Finally, we introduce Calder\'on projectors associated with the problem and illustrate the method in two special cases: the well-known MIT bag model and an electrostatic $\delta$-shell interaction.
\end{abstract}

\tableofcontents
\section{Introduction}
\subsection{Motivations} The study of massive relativistic particles of spin-$1/2$ such as electrons or quarks involves the Dirac operator and such systems are of great importance in elementary particle physics.

From a mathematical physics point of view, this operator attracted a lot of attention in the past few years and the first step in its study is the understanding of either boundary or transmission conditions through a surface in order to prove self-adjointness of the Dirac operator with such conditions.

We aim to develop a general strategy involving boundary integral operators and associated Calder\'on projectors and we apply this method to the well-known MIT bag model and the Dirac operators coupled with an electrostatic $\delta$-shell interaction.

The MIT bag model is used to study confined particles of spin-$1/2$ into domains of $\mathbb{R}^3$ (for more physical motivations, see \cite{Cho75,Cho74,Cho74-2,Jo75}). This system has recently been studied in \cite{ALTR16} and, in particular, the authors prove self-adjointness of the operator for smooth domains.

The Dirac operator coupled with a $\delta$-interaction also attracted a lot of attention in the past few years. To our knowledge, the first paper dealing with this question is \cite{DES89} where the authors study the particular case of an interaction supported on a sphere. They take advantage of the symmetry of the system in order to answer the question of self-adjointness and study spectral properties of the system. In the sequence of papers \cite{AMV14,AMV15,AMV16} the question of self-adjointness is handled for $\mathcal{C}^2$-surfaces and spectral properties are investigated. Recently, a strategy using quasi-boundary triplets was exposed in \cite{BEHL16} in order to study this system and the authors recover and extend some results of \cite{AMV14} for $\mathcal{C}^\infty$-smooth surfaces. However, in these works, the authors fail to prove self-adjointness for critical values of the coupling constant. Our initial motivation here was to understand this phenomenom and we prove that for these critical values the operator is self-adjoint on a larger domain. It is worth mentionning that simultaneously, in \cite{BH16}, the authors recover similar results with a boundary triplet technique. This phenomenom is reminescent of similar questions in the context of negative-index materials investigated in \cite{BDR99,CKP16}.

This paper is inspired by the strategy developped in \cite{BFSVDB16} about two-dimensional Dirac operators with graphene boundary conditions. However, in our case, the situation is more involved because we study a layer potential on a general $\mathcal{C}^2$-surface which, in dimension two, was done studying the twin of this layer potential on the circle and extending the results by the Riemann mapping theorem. Hence, in the present article, we introduce a framework for boundary integral operators in the same spirit as the one developed, for instance in \cite{Cos88}, for elliptic operators of order two. Indeed, we study the layer potential operator associated with the Dirac problem and study regularisation properties. To do so, we are led to introduce and study various properties about the natural Sobolev space associated with the Dirac operator. It allows to define boundary values of such a layer potential in a weak sense.

Finally, we mention that the layer potential for the Dirac operator was studied in \cite{AGHMC01} for Lipschitz hypersurfaces. As here we ask for $\mathcal{C}^2$-regularity of the surface, it allows to define weaker data on the boundary. It is also worth mentionning that our strategy share similarities with the work exposed in \cite{BLC09} about $\mathcal{C}^\infty$-boundary techniques and pseudo-differential tools.

\subsection{Notations and definitions}
Before going any further we need to introduce a few notation and definition.

\subsubsection{Basic notations}

The set $\mathbb{N} = \{0,1,2,\dots\}$ denotes the set of natural integers and we define $\mathbb{N}^* = \mathbb{N}\setminus\{0\}$.  $\mathbb{R}$ and $\mathbb{C}$ are the fields of real and complex numbers, respectively. For $z\in\mathbb{C}$, $\overline{z}$ is its conjugate.

Let $d\in\mathbb{N}^*$, $x=(x_1,\cdots,x_d)$ denote the cartesian coordinates in the euclidean space $\mathbb{R}^d$ and $\mathbf{0}$ the origin.

For a Hilbert space $\mathfrak{H}$, $\ps{\cdot}{\cdot}_{\mathfrak{H}}$ and $\|\cdot\|_{\mathfrak{H}}$ denote the scalar product and the norm on $\mathfrak{H}$, respectively. When $\mathfrak{H} =\mathbb{C}^d$ the scalar product $\ps{\cdot}{\cdot}_{\mathbb{C}^d}$ is taken antilinear with respect to the second variable.

For a matrix $A=(A_{i,j})_{i,j\in\{1,\dots,d\}}\in \mathbb{C}^{d\times d}$, $A^*$ is the conjugate transpose of $A$, that is $(A_{i,j}^*)_{i,j\in\{1,\dots,d\}}=(\overline{A}_{j,i})_{i,j\in\{1,\dots,d\}}$. For any $z_1,z_2\in\mathbb{C}^d$, it satisfies $\ps{Az_1}{z_2}_{\mathbb{C}^d} = \ps{z_1}{A^*z_2}_{\mathbb{C}^d}$. We also introduce $\|\cdot\|_{\mathcal{M}}$ the matricial norm defined as $\|A\|_{\mathcal{M}} = \sup_{\|z\|_{\mathbb{C}^d} = 1}\|A z\|_{\mathbb{C}^d}$. The identity of $\mathbb{C}^{d\times d}$ will be denoted $\rm{Id}$.

For a metric space $\mathsf{X}$, $\mathsf{X}'$ denotes its topological dual and$\ps{\cdot}{\cdot}_{\mathsf{X}',\mathsf{X}}$ the duality pairing between $\mathsf{X}'$ and $\mathsf{X}$. Let $\mathsf{X}$ and $\mathsf{Y}$ be two metric spaces and $\mathcal{L}$ a bounded linear operator from $\mathsf{X}$ to $\mathsf{Y}$. $\mathcal{L}'$ denote its adjoint and we recall that it is a bounded linear operator from $\mathsf{Y}'$ to $\mathsf{X}'$.

Let $\mathcal{U} \subset \mathbb{R}^d$. Its closure in $\mathbb{R}^d$ is denoted $\overline{\mathcal{U}}$. We also introduce the open ball of center the origin $\mathbf{0}$ and radius $R>0$ as $B(R) := \{ x\in\mathbb{R}^d : \|x\|_{\mathbb{R}^d} < R\}$. $\mathsf{dist}(x,K)$ denotes the distance of a point $x\in\mathbb{R}^d$ to a compact subset $K\subset \mathbb{R}^d$.\\

From now on $p\in\mathbb{N}^*$ and when $p=1$, the mention $p$ is dropped in the following notation. Let $\Omega$ be a $\mathcal{C}^2$-domain of $\mathbb{R}^d$. If the boundary $\partial\Omega$ of $\Omega$ is non-empty, we set $\Sigma:=\partial\Omega$ and denote by $\bfn$ its outward pointing normal and $\dd\mathfrak{s}$ the $(d-1)$-dimensional Hausdorff measure on $\Sigma$. We assume that $\Sigma$ is compact, connected and without boundary.

\subsubsection{Spaces of smooth functions and distributions}
$\mathcal{C}^\infty(\Omega)^p$ denotes the usual space of indefinitely differentiable functions with values in $\CC^p$. Similarly, $\mathcal{C}_0^\infty(\Omega)^p$ is the set of indefinitely differentiable functions with values in $\CC^p$ with compact support. If $\Omega$ is bounded, the space $\mathcal{C}_0^\infty(\overline{\Omega})^p$ can be identified with $\mathcal{C}^\infty(\overline{\Omega})^p$. $\mathcal{C}_0^\infty(\Omega)^p$ can also be denoted $\mathcal{D}(\Omega)^p$ and endowed with its usual family of semi-norms it is a metric space. The space of distributions is defined as $\mathcal{D}'(\Omega)^p = \big(\mathcal{C}_0^\infty(\Omega)^p \big )'$. For $u\in\mathcal{D}'(\Omega)$, $\mathsf{supp}(u)$ denotes the support of the distribution $u$.

The Schwarz class $\mathcal{S}(\mathbb{R}^d)^p$ is defined as
\[
	\mathcal{S}(\mathbb{R}^d)^p := \{f\in\mathcal{C}^\infty(\mathbb{R}^d)^p : \text{for all } (k,l)\in\mathbb{N}^d\times\mathbb{N}^d, \sup_{x\in\mathbb{R}^d} \|x^k \partial^l f(x)\|_{\mathbb{C}^p}<+\infty\},
\]
where we used the multi-index notation. More precisely if $k=(k_1,\dots,k_d)\in \mathbb{N}^d$ and $x\in\mathbb{R}^d$, $x^k =x_1^{k_1}\dots x_d^{k_d}\in\mathbb{R}$ and $\partial^k=\partial_1^{k_1}\dots\partial_d^{k_d}$. Endowed with its usual family of semi-norms, $\mathcal{S}(\mathbb{R}^d)^p$ is a metric space and the space of tempered distributions is defined as $\mathcal{S}'(\mathbb{R}^d)^p = \big(\mathcal{S}(\mathbb{R}^d)^p\big)'$.

\subsubsection{$L^q$-spaces}
Let $q\geq 1$. $L^q(\Omega)^p$ is the space of functions $f$, which are measurable with respect to the Lebesgue measure and with values in $\CC^p$, such that
\[
	\|f\|_{L^q(\Omega)^p}^q := \int_{\mathbb{R}^d}\|f\|_{\CC^p}^q\dd x < +\infty.
\]
When $q=2$, $L^2(\Omega)^p$ is a Hilbert space and its scalar product is given by
\[
	\ps{f}{g}_{L^2(\Omega)^p} = \int_{\Omega}\ps{f(x)}{g(x)}_{\CC^p}\dd x,\quad f,g\in L^2(\Omega)^p.
\]
If $f\in L^2(\mathbb{R}^d)^p$, $f|_\mathcal{\Omega}$ denotes the restriction of $f$ to the domain $\Omega$.

We also introduce the space $L^\infty(\Omega)^p$ as the space of bounded $\CC^p$-valued functions. For $f\in L^\infty(\Omega)^p$, the associated norm is defined as
\[	
	\|f\|_{L^\infty(\Omega)^p} = \sup_{x\in\Omega} \|f(x)\|_{\CC^p}.
\]

\subsubsection{Fourier transform}
\label{subsub:fourtrans}
For a function $f\in L^1(\mathbb{R}^d)^p$, we introduce its Fourier transform as
\[
	\mathcal{F}(f)(\xi) = \frac{1}{(2\pi)^{d/2}}\int_{\mathbb{R}^d}e^{-i\ps{x}{\xi}_{\mathbb{R}^d}} f(x) \dd x\in\mathbb{C}^p,\quad \text{for all } \xi\in\mathbb{R}^d.
\]
The Fourier transform can be extended into an isometry of $L^2(\mathbb{R}^d)^p$ and it is well known that $\mathcal{F}$, seen as an operator from $\mathcal{S}(\mathbb{R}^d)^p$ to $\mathcal{S}(\mathbb{R}^d)^p$ is invertible and the inverse Fourier transform $\mathcal{F}^{-1}$ is given by
\[
	\mathcal{F}^{-1}(f)(x) = \frac{1}{(2\pi)^{d/2}}\int_{\mathbb{R}^d}e^{i\ps{\xi}{x}_{\mathbb{R}^d}} f(\xi) \dd \xi\in\mathbb{C}^p,\quad \text{for all } f\in\mathcal{S}(\mathbb{R}^d)^p \text{ and }x\in\mathbb{R}^d.
\]
By duality, we can also extend $\mathcal{F}$ to the space of tempered distributions $\mathcal{S}'(\mathbb{R}^d)^p$.

\subsubsection{Sobolev spaces}
Let $|s|\leq1$, we introduce the usual Sobolev space $H^s(\mathbb{R}^d)^p$ as:
\[
	H^s(\mathbb{R}^d)^p := \{f\in L^2(\mathbb{R}^d)^p : 	\|f\|_{H^s(\mathbb{R}^d)^p} < +\infty\},
\]
where
\[
	\|f\|_{H^s(\mathbb{R}^d)^p}^2 = \int_{\xi\in\mathbb{R}^d}(1+|\xi|^2)^{s}\|\mathcal{F}(f)(\xi)\|_{\CC^p}^2\dd \xi.
\]
If $s=0$, as the Fourier transform is unitary on $L^2(\mathbb{R}^d)^p$, we recover by definition $H^0(\mathbb{R}^d)^p = L^2(\mathbb{R}^d)^p$.

The space $H^s(\Omega)^p$ is defined as follows
\[
	H^s(\Omega)^p = \{f \in L^2(\Omega)^p : \text{there exists } \widetilde{f}\in H^s(\mathbb{R}^d)^p\text{ such that }\widetilde{f}|_\Omega = f\}.
\]
and for $f\in H^s(\Omega)^p$ the associated norm is given by
\[
	\|f\|_{H^s(\Omega)^p} = \inf_{\widetilde{f}\in H^s(\mathbb{R}^d)^p, \widetilde{f}|_\Omega=f} \|\widetilde{f}\|_{H^s(\mathbb{R}^d)^p}.
\]
Now, let and $\mathfrak{H}^1(\Omega)^p$ be the space
\[
	\mathfrak{H}^1(\Omega)^p := \{f = (f_j)_{j\in\{1,\dots,p\}}\in L^2(\Omega)^p : \text{for all } j\in\{1,\dots,p\}, \nabla f_j \in L^2(\Omega)^p \},
\]
associated with the norm
\[
	\|f\|_{\mathfrak{H}^1(\Omega)^p}^2 = \|f\|_{L^2(\Omega)^p}^2 + \sum_{j=1}^p\|\nabla f_j\|_{L^2(\Omega)^p}^2.
\]
Because of the invariance of the Sobolev spaces we have $\mathfrak{H}^1(\Omega)^p ={H}^1(\Omega)^p$ and the norms $\|\cdot\|_{H^1(\Omega)^p}$ and $\|\cdot\|_{\mathfrak{H}^1(\Omega)^p}$ are equivalent (see \cite[Lemma 1.3.]{GR86}). By abuse of notation, both of them will be denoted $\|\cdot\|_{H^1(\Omega)^p}$.

\subsubsection{Sobolev spaces on the boundary}
Let $|s|\leq1$. We recall that $\Sigma$ has no boundary. We define the Sobolev space of $\CC^p$-valued functions $H^s(\Sigma)^p$ as usual (see \cite[\S 2.4.]{SS11}), that is using local coordinates representation on the manifold $\Sigma$.
As $\Sigma$ has no boundary, we have $H^{-s}(\Sigma)^p = (H^s(\Sigma)^p)'$. For $f\in H^{-s}(\Sigma)^p$, 
the norm on $H^{-s}(\Sigma)^d$ can be characterised by duality, that is:
\[
	\|f\|_{H^{-s}(\Sigma)^p} = \sup_{ g\in H^{s}(\Sigma)^p, g\neq0}\frac{\ps{f}{g}_{H^{-s}(\Sigma)^p,H^{s}(\Sigma)^p}}{\|g\|_{H^{s}(\Sigma)^p}}.
\]

For a function $g\in\mathcal{C}_0^\infty(\overline{\Omega})^p$, $\tr g$ denotes its trace on $\Sigma$. $\tr:g\mapsto\tr g$ is a linear operator from $\mathcal{C}_0^\infty(\overline{\Omega})^p$ to $\mathcal{C}^\infty(\Sigma)^p$ and we have the following well-known trace theorem (see, for instance, \cite[Th. 3.37]{Mc00}):
\begin{prop}[Trace theorem] The linear operator $\tr$ extends into a bounded operator from $H^1(\Omega)^p$ to $H^{1/2}(\Sigma)^p$ also denoted $\tr$. Moreover, there exists a bounded linear extension operator $E : H^{1/2}(\Sigma)^p \rightarrow H^1(\Omega)$ satisfying
\[
\tr E(g) = g,\quad \text{for all } g\in H^{1/2}(\Sigma)^p.
\]
\label{prop:tr_th_cla}
\end{prop}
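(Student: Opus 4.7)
The plan is to reduce the proof to a scalar half-space problem and then globalise using a partition of unity on $\Sigma$. Since the norms on $H^s(\Omega)^p$ and $H^s(\Sigma)^p$ are simply sums of componentwise scalar $H^s$-norms, it suffices to treat $p=1$. Because $\Sigma$ is compact and of class $\mathcal{C}^2$ without boundary, one can cover a tubular neighbourhood of $\Sigma$ by finitely many open balls $U_j \subset \mathbb{R}^d$ carrying $\mathcal{C}^2$-diffeomorphisms $\Phi_j$ that flatten $\Sigma \cap U_j$ onto a piece of $\{x_d = 0\}$ and send $\Omega \cap U_j$ into $\mathbb{R}^d_+ = \{x_d > 0\}$. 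Fix a partition of unity $(\chi_j)$ subordinate to this cover.

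The key local estimate is the half-space trace inequality. For $f \in \mathcal{S}(\mathbb{R}^d)$, the one-dimensional Fourier inversion in the normal variable gives
\[
	\mathcal{F}_{x'}\bigl(f(\cdot, 0)\bigr)(\xi') \,=\, (2\pi)^{-1/2}\int_{\mathbb{R}} \mathcal{F}(f)(\xi',\xi_d)\,\dd \xi_d,
\]
and a Cauchy--Schwarz argument weighted by $(1+|\xi|^2)^{-1}$ (whose integral over $\xi_d$ is comparable to $(1+|\xi'|^2)^{-1/2}$) yields
\[
	(1+|\xi'|^2)^{1/2}\, \bigl|\mathcal{F}_{x'}(f(\cdot, 0))(\xi')\bigr|^2 \,\leq\, C \int_{\mathbb{R}} (1+|\xi|^2)\, \bigl|\mathcal{F}(f)(\xi)\bigr|^2\,\dd \xi_d.
\]
Integration in $\xi'$ produces $\|f(\cdot,0)\|_{H^{1/2}(\mathbb{R}^{d-1})} \leq C\,\|f\|_{H^1(\mathbb{R}^d)}$. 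Transporting this inequality through the charts $\Phi_j$ (which preserve $H^1$ and $H^{1/2}$ since they are $\mathcal{C}^2$-bi-Lipschitz on compact pieces) and the cutoffs $\chi_j$, then summing, yields $\|\tr g\|_{H^{1/2}(\Sigma)} \leq C\|g\|_{H^1(\Omega)}$ for $g \in \mathcal{C}_0^\infty(\overline{\Omega})$; density of this space in $H^1(\Omega)$ extends $\tr$ by continuity.

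For the right inverse $E$, I plan to build it first on the half-space by a Fourier multiplier: for $h \in \mathcal{S}(\mathbb{R}^{d-1})$, define
\[
	(E_0 h)(x', x_d) \,=\, \mathcal{F}^{-1}_{\xi'}\Bigl( \chi\bigl(x_d (1+|\xi'|^2)^{1/2}\bigr)\, \mathcal{F}(h)(\xi')\Bigr)(x'),
\]
with $\chi \in \mathcal{C}_0^\infty(\mathbb{R})$ satisfying $\chi(0)=1$. Plancherel's theorem, together with the observation that $x_d \mapsto \chi(x_d (1+|\xi'|^2)^{1/2})$ is a dilation of $\chi$ by the factor $(1+|\xi'|^2)^{-1/2}$, produces $\|E_0 h\|_{H^1(\mathbb{R}^d_+)} \leq C \|h\|_{H^{1/2}(\mathbb{R}^{d-1})}$ and $(E_0 h)(x',0) = h(x')$, and a density argument extends $E_0$ to $H^{1/2}(\mathbb{R}^{d-1})$. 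The global extension is then assembled by decomposing $g \in H^{1/2}(\Sigma)$ through the partition of unity, transporting each piece through $\Phi_j$ to the half-space, applying $E_0$, multiplying by a cutoff supported in $U_j$, and pulling back to $\Omega$; setting $E(g) = 0$ outside the tubular neighbourhood yields the required bounded right inverse.

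The main obstacle is not the Fourier computation, which is routine, but the local-to-global bookkeeping: one must verify that the diffeomorphisms $\Phi_j$ preserve $H^1$ and $H^{1/2}$ coherently, that the boundary pieces glue into an element of $H^{1/2}(\Sigma)$ independently (up to equivalence of norms) of the chosen atlas, and that the cutoffs used to localise $E_0 h$ do not destroy continuity. All of this is classical once the $\mathcal{C}^2$-regularity of $\Sigma$ is in hand, and detailed proofs along these lines can be found in \cite{Mc00}.
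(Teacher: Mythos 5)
Your proposal is correct: the half-space trace inequality via Cauchy--Schwarz in the normal frequency variable, the Fourier-multiplier extension $E_0$ with a dilated cutoff, and the localisation through $\mathcal{C}^2$ charts and a partition of unity are all sound, and together they do prove the statement. The paper itself offers no proof --- it records this as the classical trace theorem and cites \cite[Th.~3.37]{Mc00} --- and your argument is essentially the standard one found in that reference, so there is nothing to reconcile.
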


\subsubsection{Dirac operator and fundamental solutions}
Let ${\alpha} = (\alpha_1, \alpha_2, \alpha_3)$ and ${\beta}$ be the $4\times4$ Hermitian and unitary matrices given by:
\[
	\alpha_j=\bigg(\begin{array}{cc}
0 & \sigma_j \\
\sigma_j & 0
\end{array}\bigg)\quad \text{for } j=1,2,3,\quad {\beta} =\bigg(\begin{array}{cc}
I_2 & 0 \\
0 & -I_2
\end{array}\bigg).
\]
Here $(\sigma_1,\sigma_2,\sigma_3)$ are the Pauli matrices defined as
\[
\sigma_1=\bigg(\begin{array}{cc}
0 & 1 \\
1 & 0
\end{array}\bigg),\quad\sigma_2=\bigg(\begin{array}{cc}
0 & -i \\
i & 0
\end{array}\bigg),\quad\sigma_3=\bigg(\begin{array}{cc}
1 & 0 \\
0 & -1
\end{array}\bigg).
\]
The Dirac operator is the differential operator acting on the space of distributions $\mathcal{D}'(\Omega)^4$ defined as
\[
	\mathcal{H}(\mu):= \mathcal{H} = \alpha\cdot\sfD + \mu\beta,\quad \sfD=-i\nabla,
\]
where for $X=(X_1,X_2,X_3)$,  $\alpha\cdot X = \sum_{j=1}^3\alpha_jX_j$.

We introduce the Sobolev space associated with the Dirac operator on the domain $\Omega$ as
\begin{equation}
	H(\alpha,\Omega) := \big\{u \in L^2(\Omega)^4 : \mathcal{H} u\in L^2(\Omega)^4\big\}= \big\{u \in L^2(\Omega)^4 : (\alpha\cdot\sfD) u\in L^2(\Omega)^4\big\},
\end{equation}
where $\mathcal{H} u$ and $(\alpha\cdot\sfD) u$ have to be understood in the sense of distributions. As the multiplication by $\beta$ is a bounded operator from $L^2(\Omega)^4$ onto itself, the equality between these spaces hold and we can endow them with the scalar product
\[
	\ps{u}{v}_{H(\alpha,\Omega)} = \ps{u}{v}_{L^2(\Omega)^4} + \ps{(\alpha\cdot\sfD)u}{(\alpha\cdot\sfD)v}_{L^2(\Omega)^4},\quad u,v\in H(\alpha,\Omega),
\]
it is a Hilbert space (see Section \ref{subsec:Sob_spa} for more details) and the associated norm is denoted $\|\cdot\|_{H(\alpha,\Omega)}$.
\begin{remark}
For any $u\in H(\alpha,\Omega)$, the norm $\|\cdot\|_{H(\alpha,\Omega)}$ is equivalent to the operator norm $\|u\|_{\mathcal{H}} = \|u\|_{L^2(\Omega)^4} + \|\mathcal{H} u\|_{L^2(\Omega)^4}$ and, by abuse of notation, we also denote $\|\cdot\|_{\mathcal{H}}$ by $\|\cdot\|_{H(\alpha,\Omega)}$.
\label{rmk:eq_dir_sob}
\end{remark}

\subsection{Structure of the paper} This paper is organized as follows. In Section \ref{sec:lay_pot_cald} we study the layer potential associated with the Dirac system and introduce various tools that will be helpful in the following, such as the Calder\'on projectors. The main result in this section is Theorem \ref{thm:regu_distri} and its consequences regarding the Calder\'on projectors.

In Section \ref{sec:MIT-bag} we prove Theorem \ref{thm:MIT_sa} about the self-adjointness of the MIT bag model and in Section \ref{sec:Dir-shell} we study the self-adjointness of the Dirac operator coupled with an electrostatic $\delta$-shell interaction, the main result being Theorem \ref{th:sadirac}. Note that the question of self-adjointness for the critical values of the coupling constant that motivated us in the beginning is dealt with.  

\section{Layer potential and Calder\'on projectors for the Dirac System}
\label{sec:lay_pot_cald}
In the following two subsections we state the main results of this section.
\subsection{Trace operator and layer potential}\label{subsec:tr-laypot}
Let $\psi$ denote the following fundamental solution of  $-\Delta^2 +\mu^2$:
\begin{equation}
	\psi(x):= \psi_\mu(x) = \frac{e^{-|\mu x|}}{4\pi|x|},\quad x\in\mathbb{R}^3.
	\label{eqn:fun_lapl_3}
\end{equation}
Since $\mathcal{H}^2 = (-\Delta + \mu^2)\rm{Id}$, $\phi:=\mathcal{H}(\psi\rm{Id})$ is a fundamental solution of $\mathcal{H}$. For $g \in\mathcal{C}^{\infty}(\Sigma)^4$ we define the layer potential
\begin{equation}
	\Phi(g)(x) := \Phi_{\Omega,\mu}(g)(x) = \int_{y\in\Sigma} \phi(x-y) g(y) \dd \mathfrak{s}(y),\quad x\in\Omega.
\label{eqn:dfn_laypot}
\end{equation}

We have the following extension of Proposition \ref{prop:tr_th_cla}.
\begin{prop}The trace operator $\mathfrak{t}_\Sigma$ extends into a continuous map $\mathfrak{t}_\Sigma: H(\alpha,\Omega) \rightarrow {H}^{-1/2}(\Sigma)^4$.
\label{prop:transmweak}
\end{prop}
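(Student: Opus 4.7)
The plan is to extend $\mathfrak{t}_\Sigma$ to $H(\alpha,\Omega)$ by duality, using a Green-type identity for $\mathcal{H}$ together with the classical trace theorem (Proposition \ref{prop:tr_th_cla}). The key algebraic observation is that on $\Sigma$ the matrix $\alpha\cdot\bfn$ is Hermitian and involutive: from the anticommutation relations $\alpha_i\alpha_j+\alpha_j\alpha_i=2\delta_{ij}\,\mathrm{Id}$ one has $(\alpha\cdot\bfn)^2=|\bfn|^2\,\mathrm{Id}=\mathrm{Id}$. Because $\Sigma$ is $\mathcal{C}^2$, the field $\bfn$ belongs to $\mathcal{C}^1(\Sigma)^3$, so multiplication by $\alpha\cdot\bfn$ is a bounded involutive automorphism of $H^{1/2}(\Sigma)^4$ and, by duality, of $H^{-1/2}(\Sigma)^4$.

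For $u,v\in\mathcal{C}^\infty(\overline{\Omega})^4$, Hermitianity of the $\alpha_j$'s and $\beta$ combined with componentwise integration by parts yield the Green identity
$$\ps{\mathcal{H} u}{v}_{L^2(\Omega)^4}-\ps{u}{\mathcal{H} v}_{L^2(\Omega)^4}=-i\int_\Sigma\ps{(\alpha\cdot\bfn)\mathfrak{t}_\Sigma u}{\mathfrak{t}_\Sigma v}_{\mathbb{C}^4}\dd\mathfrak{s},$$
which, by Proposition \ref{prop:tr_th_cla}, extends to any $v\in H^1(\Omega)^4$. For $u\in H(\alpha,\Omega)$ and $h\in H^{1/2}(\Sigma)^4$ I would then set
$$\langle\mathfrak{t}_\Sigma u,h\rangle_{H^{-1/2}(\Sigma)^4,H^{1/2}(\Sigma)^4}:=i\bigl(\ps{\mathcal{H} u}{E((\alpha\cdot\bfn)h)}_{L^2(\Omega)^4}-\ps{u}{\mathcal{H} E((\alpha\cdot\bfn)h)}_{L^2(\Omega)^4}\bigr),$$
with $E$ the extension of Proposition \ref{prop:tr_th_cla}. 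Since $E((\alpha\cdot\bfn)h)\in H^1(\Omega)^4\subset H(\alpha,\Omega)$, the right-hand side is finite, and continuity of $E$ together with boundedness of $\mathcal{H}$ from $H^1(\Omega)^4$ to $L^2(\Omega)^4$ give $|\langle\mathfrak{t}_\Sigma u,h\rangle|\leq C\|u\|_{H(\alpha,\Omega)}\|h\|_{H^{1/2}(\Sigma)^4}$, whence the continuity statement.

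Two verifications remain: independence from the choice of extension, and coincidence with the classical trace when $u\in\mathcal{C}^\infty(\overline{\Omega})^4$. The second point is immediate from the Green identity and from $(\alpha\cdot\bfn)^2=\mathrm{Id}$, which allows one to read the boundary integral as a pairing against $h$ rather than against $(\alpha\cdot\bfn)h$. For the first, if $w\in H^1_0(\Omega)^4$ one needs $\ps{\mathcal{H} u}{w}_{L^2(\Omega)^4}-\ps{u}{\mathcal{H} w}_{L^2(\Omega)^4}=0$: this follows by approximating $w$ by a sequence in $\mathcal{D}(\Omega)^4$, applying the distributional definition of $\mathcal{H} u$, and passing to the limit using $u,\mathcal{H} u\in L^2(\Omega)^4$. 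This last step is the main technical point; an alternative route is to first establish density of $\mathcal{C}^\infty(\overline{\Omega})^4$ in $H(\alpha,\Omega)$ (as should be done in Section \ref{subsec:Sob_spa}) and derive the estimate on smooth $u$ directly from Green's identity before extending by continuity, both approaches ultimately relying on the same density facts and on the multiplier structure of $\alpha\cdot\bfn$ on $H^{\pm 1/2}(\Sigma)^4$.
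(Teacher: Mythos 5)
Your proposal is correct, and it reorganizes the argument compared to the paper while using the same three ingredients (the Green identity of Lemma \ref{lem:Green_dir}, the extension operator $E$ of Proposition \ref{prop:tr_th_cla}, and the fact that $\alpha\cdot\bfn$ is a Hermitian involution with $\mathcal{C}^1$ coefficients, hence a bounded multiplier on $H^{\pm1/2}(\Sigma)^4$). The paper proceeds density-first: it first proves that $\mathcal{C}_0^\infty(\overline{\Omega})^4$ is dense in $H(\alpha,\Omega)$ (Proposition \ref{prop:dens_K}, whose proof is itself a nontrivial orthogonality argument via Lemma \ref{ref:lemdis}), then shows that $(\tr v_n)_n$ is Cauchy in $H^{-1/2}(\Sigma)^4$ for any approximating sequence, by testing against $E(f)$ exactly as you do, and extends by continuity. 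You instead define $\mathfrak{t}_\Sigma u$ directly as the antilinear functional $h\mapsto i\big(\ps{\mathcal{H} u}{E((\alpha\cdot\bfn)h)}_{L^2(\Omega)^4}-\ps{u}{\mathcal{H} E((\alpha\cdot\bfn)h)}_{L^2(\Omega)^4}\big)$, which bypasses Proposition \ref{prop:dens_K} entirely: the only density you need is that of $\mathcal{D}(\Omega)^4$ in $H^1_0(\Omega)^4$, which holds by definition and settles independence of the extension. Your two remaining verifications are handled correctly — the well-definedness step is exactly the limiting argument the paper uses later in Proposition \ref{prop:regul_trac}, and the consistency with the classical trace follows from $(\alpha\cdot\bfn)^2=\mathrm{Id}$ and Hermitianity as you indicate. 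What each approach buys: yours gives existence of a continuous extension with less machinery; the paper's gives, thanks to density, that this extension is the unique continuous one (and the density result is reused elsewhere, e.g.\ in the proofs of Proposition \ref{prop:regCaldtr} and of the extended Green formula, Corollary \ref{prop:Green_weak}), so it is not wasted effort in the paper's economy.
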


We have the next theorem.
\begin{thm}The following holds:
\begin{itemize}
	\item[i)] If $\mu=0$ and $\Omega$ is unbounded then, for any $R>0$ such that $\Sigma\subset B(R)$, $\Phi$ extends into a bounded operator from $H^{-1/2}(\Sigma)^4$ to $H(\alpha,\Omega\cap B(R))$.
	\item[ii)] Otherwise, $\Phi$ extends into a bounded operator from $H^{-1/2}(\Sigma)^p$ to $H(\alpha,\Omega)$.
\end{itemize}
\label{thm:regu_distri}
\end{thm}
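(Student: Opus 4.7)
The plan is to factor $\Phi$ through the scalar single layer potential and reduce the theorem to classical mapping properties of the latter. Since $\phi=\mathcal{H}(\psi\,\mathrm{Id})$, a direct calculation on smooth data $g\in\mathcal{C}^\infty(\Sigma)^4$ yields $\Phi(g)=\mathcal{H}\bigl(\Psi(g)\bigr)$, where
\[
\Psi(g)(x):=\int_\Sigma \psi(x-y)\,g(y)\,\dd\mathfrak{s}(y)
\]
is the scalar single layer potential for $-\Delta+\mu^2$ applied componentwise. I would then invoke the classical boundedness result for $\mathcal{C}^2$-surfaces (e.g., \cite[Ch.~6]{Mc00}): $\Psi\colon H^{-1/2}(\Sigma)^4 \to H^1(\mathbb{R}^3)^4$ is continuous when $\mu>0$, and $\Psi\colon H^{-1/2}(\Sigma)^4 \to H^1_{\mathrm{loc}}(\mathbb{R}^3)^4$ when $\mu=0$. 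Since $\mathcal{H}=\alpha\cdot\sfD+\mu\beta$ is a first-order differential operator with bounded coefficients, it maps $H^1$ into $L^2$ boundedly, so composing gives
\[
\|\Phi(g)\|_{L^2(\Omega)^4}\leq C\,\|g\|_{H^{-1/2}(\Sigma)^4}
\]
in case (ii), and the analogous estimate on $\Omega\cap B(R)$ in case (i).

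To upgrade this to a bound in the $H(\alpha,\Omega)$-norm, I would next verify that $\mathcal{H}\Phi(g)\equiv 0$ in $\Omega$. The Clifford anticommutation relations among $\alpha_1,\alpha_2,\alpha_3,\beta$ yield the algebraic identity $\mathcal{H}^2=(-\Delta+\mu^2)\,\mathrm{Id}$, whence, as distributions on $\mathbb{R}^3$,
\[
\mathcal{H}\Phi(g)=\mathcal{H}^2\Psi(g)=(-\Delta+\mu^2)\Psi(g).
\]
The right-hand side is supported on $\Sigma$ because $\psi$ is a fundamental solution of $-\Delta+\mu^2$; as $\Sigma=\partial\Omega$ is disjoint from the open set $\Omega$, we conclude $\mathcal{H}\Phi(g)=0$ in $\mathcal{D}'(\Omega)^4$. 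Combined with the previous $L^2$-estimate, this yields $\|\Phi(g)\|_{H(\alpha,\Omega)}\leq C\,\|g\|_{H^{-1/2}(\Sigma)^4}$ for every $g\in\mathcal{C}^\infty(\Sigma)^4$, and density of $\mathcal{C}^\infty(\Sigma)^4$ in $H^{-1/2}(\Sigma)^4$ extends $\Phi$ to a bounded operator on the full space.

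The main obstacle I anticipate is technical rather than conceptual: making the factorisation $\Phi=\mathcal{H}\Psi$ meaningful when the density $g$ is only a distribution of order $1/2$, since commuting $\mathcal{H}$ with the layer integral is not a priori justified on such data. The cleanest remedy is to work on smooth $g$ first, where the manipulations are elementary, establish the estimate there, and then \emph{define} the extension of $\Phi$ to $H^{-1/2}(\Sigma)^4$ by continuity, so that $\Phi=\mathcal{H}\Psi$ becomes an identity of bounded operators valid on a dense subspace and hence throughout. A secondary subtlety is the slow decay of $\psi_0(x)=(4\pi|x|)^{-1}$ in the massless unbounded case, which prevents global $L^2$-boundedness of $\Psi$ and motivates the localisation to $\Omega\cap B(R)$ in item (i); inside $B(R)$, the interior argument carries over verbatim.
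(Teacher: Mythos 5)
Your argument is correct, but it takes a genuinely different route from the paper. You factor the layer potential as $\Phi=\mathcal{H}\circ\Psi$, putting the first-order operator on the left of the scalar single layer potential and importing the classical mapping property $\Psi\colon H^{-1/2}(\Sigma)^4\to H^1_{\rm loc}(\mathbb{R}^3)^4$ from the second-order theory; the harmonicity $\mathcal{H}\Phi(g)=(-\Delta+\mu^2)\Psi(g)=0$ off $\Sigma$ then upgrades the $L^2$-bound to an $H(\alpha,\cdot)$-bound, and you extend by density. The paper instead puts the duality on the right: it proves from scratch (via the Fourier multiplier $(\alpha\cdot\xi+\mu)/(|\xi|^2+\mu^2)$) that the Dirac volume potential $V$ maps $L^2(\widetilde\Omega)^4$ to $H^1(\widetilde\Omega)^4$, dualizes to get $V\colon\big(H^1(\widetilde\Omega)^4\big)'\to L^2(\widetilde\Omega)^4$, and \emph{defines} the extension directly as $\mathcal{L}=V\circ\mathfrak{t}_\Sigma'$, checking afterwards that $\mathcal{L}$ agrees with $\Phi$ on smooth densities and that $\mathcal{H}\mathcal{L}g=0$ by a duality computation using $V\mathcal{H}u=u$ for test functions. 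Your version buys brevity by outsourcing the analytic core to a standard reference; the paper's version is self-contained and, more importantly, the explicit formula $\langle\mathcal{L}g,u\rangle=\langle g,\mathfrak{t}_\Sigma Vu\rangle$ is reused later (Propositions \ref{prop:nul_distrib} and \ref{prop:regCaldtr}, and the reproducing-formula remark), which a density-defined extension would make more awkward. Two small points to tighten: (a) the identity $\Phi(g)=\mathcal{H}\Psi(g)$ on $\Omega$ should be justified by differentiating under the integral for $x\notin\Sigma$ and then identifying distributional with classical derivatives there; (b) for $\mu\neq0$ with $\Omega$ unbounded the standard references state the single-layer bound in $H^1_{\rm loc}$ or on bounded domains, so you should add one line noting that the exponential decay of $\psi_\mu$ (equivalently, the uniform multiplier bound for $(1+|\xi|^2)^{1/2}(|\xi|^2+\mu^2)^{-1}$) yields the global $H^1(\mathbb{R}^3)^4$ estimate needed for item \emph{ii)}.
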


The boundary integral operator is defined taking the boundary data of $\Phi$ on $\Sigma$ (in a distributional sense, see Proposition \ref{prop:tr_th_cla} and Proposition \ref{prop:transmweak}):
\[
	C_\mu(g):= C(g) = \tr\big(\Phi(g)\big).
\]
An important consequence of Proposition \ref{prop:transmweak} and Theorem \ref{thm:regu_distri} is that the boundary integral operator $C$ satisfies the following corollary.
\begin{corollary} The following operator is continuous:
\begin{equation}
	C : H^{-1/2}(\Sigma)^4 \rightarrow H^{-1/2}(\Sigma)^4,
	\label{eqn:def_C}
\end{equation}
\label{thm:cont_tr}
\end{corollary}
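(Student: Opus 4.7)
The plan is to obtain Corollary \ref{thm:cont_tr} as the direct composition of the two preceding results, namely Theorem \ref{thm:regu_distri} (mapping properties of the layer potential $\Phi$) and Proposition \ref{prop:transmweak} (extension of $\mathfrak{t}_\Sigma$ to the graph space $H(\alpha,\Omega)$). Since $C = \mathfrak{t}_\Sigma \circ \Phi$ by definition, continuity will follow by transitivity, and the work reduces to matching the source/target domains of these two maps. A technical nuisance, which I think is the only real obstacle, is the distinction between the two cases of Theorem \ref{thm:regu_distri}: in case ii) everything fits together immediately, while in case i) ($\mu=0$ and $\Omega$ unbounded) the potential $\Phi(g)$ only lies in $H(\alpha,\Omega\cap B(R))$ rather than in $H(\alpha,\Omega)$ itself, so Proposition \ref{prop:transmweak} cannot be applied verbatim.

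First I would treat case ii). Theorem \ref{thm:regu_distri} gives
\[
\|\Phi(g)\|_{H(\alpha,\Omega)} \leq c_1\|g\|_{H^{-1/2}(\Sigma)^4}, \qquad g\in H^{-1/2}(\Sigma)^4.
\]
Applying Proposition \ref{prop:transmweak} yields
\[
\|C(g)\|_{H^{-1/2}(\Sigma)^4} = \|\mathfrak{t}_\Sigma \Phi(g)\|_{H^{-1/2}(\Sigma)^4} \leq c_2\|\Phi(g)\|_{H(\alpha,\Omega)} \leq c_1c_2\|g\|_{H^{-1/2}(\Sigma)^4},
\]
which is the claimed continuity.

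For case i), the strategy is to localize near $\Sigma$. Choose $R>0$ so that $\Sigma\subset B(R)$, and fix a larger $R'>R$ with $\Sigma\subset B(R)\Subset B(R')$. Pick a cutoff $\chi\in\mathcal{C}_0^\infty(\mathbb{R}^3)$ with $\chi\equiv 1$ on a neighbourhood of $\Sigma$ and $\mathsf{supp}(\chi)\subset B(R)$. By Theorem \ref{thm:regu_distri}\,i), $\Phi(g)\in H(\alpha,\Omega\cap B(R'))$, so $\chi\Phi(g)$ is well defined on $\Omega\cap B(R')$ and one readily checks, using the Leibniz rule and the fact that $\nabla\chi\in L^\infty$, that it defines a bounded map from $H(\alpha,\Omega\cap B(R'))$ into $H(\alpha,\Omega\cap B(R'))$; moreover $\chi\Phi(g)$ has compact support in $B(R')$, so extending by zero produces an element of $H(\alpha,\widetilde{\Omega})$ for $\widetilde{\Omega}:=\Omega\cap B(R')$, which is a bounded $\mathcal{C}^2$-domain admitting $\Sigma$ as a connected component of its boundary. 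Proposition \ref{prop:transmweak} applies to $\widetilde{\Omega}$ (or rather to the sub-domain of $\widetilde{\Omega}$ enclosed by $\Sigma$ together with its complement in $\widetilde{\Omega}$), giving $\mathfrak{t}_\Sigma\big(\chi\Phi(g)\big)\in H^{-1/2}(\Sigma)^4$ with the corresponding continuity estimate. Since $\chi\equiv 1$ on $\Sigma$, one has $\mathfrak{t}_\Sigma(\chi\Phi(g))=\mathfrak{t}_\Sigma\Phi(g) = C(g)$, and the estimate
\[
\|C(g)\|_{H^{-1/2}(\Sigma)^4}\leq c\|g\|_{H^{-1/2}(\Sigma)^4}
\]
follows.

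The only step that needs care, and where I expect the main technical work to live, is the localization argument in case i): one has to verify that multiplication by a smooth compactly supported cutoff is bounded on the graph space $H(\alpha,\cdot)$ (this uses $[\alpha\cdot\mathsf{D},\chi]=-i\alpha\cdot\nabla\chi$ acting as a bounded multiplication operator on $L^2$) and that Proposition \ref{prop:transmweak}, although stated for a domain whose boundary is the single compact surface $\Sigma$, applies to the relevant component of $\partial\widetilde{\Omega}$ since the cutoff kills all traces away from $\Sigma$. Both verifications are standard and rely on nothing beyond the tools collected in Section \ref{subsec:tr-laypot}.
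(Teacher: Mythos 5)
Your proof is correct and follows essentially the same route as the paper, which simply observes that $C=\mathfrak{t}_\Sigma\circ\Phi$ is continuous by composing Proposition \ref{prop:transmweak} with Theorem \ref{thm:regu_distri}. The localization argument you give for case i) ($\mu=0$, $\Omega$ unbounded) is a legitimate extra precaution that the paper's one-line proof silently glosses over (the trace is local and the cutoff kills the contribution of $\partial B(R)$), so it is a welcome, but not conceptually different, addition.
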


\subsection{Calder\'on projectors}\label{subsec:Cald_proj} The aim of this subsection is to define the Calder\'on projectors and give some of their properties. Set
\[
	\Omega_+:=\Omega\quad\text{and}\quad\Omega_-:=\mathbb{R}^3\setminus\overline{\Omega}.
\]
We can define two operators $\Phi_\pm := \Phi_{\Omega_\pm,\mu}$ as in \eqref{eqn:dfn_laypot} which allows us to set $C_{\pm}:= C_{\pm,\mu} =\mathfrak{t}_{\Sigma,\pm} \circ \Phi_\pm$ where $\mathfrak{t}_{\Sigma,\pm}$ denotes the trace operator of Proposition \ref{prop:transmweak} from $H(\alpha,\Omega_\pm)$ to $H^{-1/2}(\Sigma)^4$.

Now, we define the Calder\'on projectors and give some properties.
\begin{dfn} The Calder\'on projectors associated with $\mu\in\mathbb{R}$ are the bounded linear operators from $H^{-1/2}(\Sigma)^4$ onto itself defined as:
\[
	\mathcal{C}_{\pm} := \mathcal{C}_{\pm,\mu} = \pm i C_{\pm}(\alpha\cdot\bfn).
\]
\label{dfn:cald_proj}
\end{dfn}

\begin{remark} As $\Sigma$ is $\mathcal{C}^2$, the multiplication by $\alpha\cdot\bfn$ is a bounded linear operator from $H^{-1/2}(\Sigma)^4$ onto itself. Thus the definition makes sense.
\end{remark}

Before giving the first properties on the Calder\'on projectors we define their formal adjoints as
\[
	\mathcal{C}_\pm^* = \mp i (\alpha\cdot\bfn)C_\mp.
\]
By definition, $\mathcal{C}_\pm^*$ is a linear bounded operator from $H^{-1/2}(\Sigma)^4$ onto itself.
The following proposition justifies that the Calder\'on projectors are actual projectors.
\begin{prop} We have:
\begin{itemize}
	\item[i)] $\mathcal{C}_\pm' = \mathcal{C}_\pm^*|_{H^{1/2}(\Sigma)^4}$ and $(\mathcal{C}_\pm^*)' = \mathcal{C}_\pm|_{H^{1/2}(\Sigma)^4}$. In particular $\mathcal{C}_\pm|_{H^{1/2}(\Sigma)^4}$ and $\mathcal{C}_\pm^*|_{H^{1/2}(\Sigma)^4}$ are bounded operators from $H^{1/2}(\Sigma)^4$ onto itself,
	\item[ii)] $(\mathcal{C}_\pm)^2 = \mathcal{C}_\pm$ and $(\mathcal{C}_\pm^*)^2 = \mathcal{C}_\pm^*$,
	\item[iii)] $\mathcal{C}_+ + \mathcal{C}_- = Id$ and $\mathcal{C}_+^* + \mathcal{C}_-^* = Id$,
	\item[iv)] $(\alpha\cdot \bfn) \mathcal{C}_{\pm} = \mathcal{C}_\mp^*(\alpha\cdot \bfn)$ and $\mathcal{C}_\pm(\alpha\cdot \bfn)=(\alpha\cdot \bfn)\mathcal{C}_\mp^*.$
\end{itemize}
\label{prop:propcaldproj}
\end{prop}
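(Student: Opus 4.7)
The plan is to prove the four items in the order (iv), (iii), (ii), (i), each exploiting the ones already proved. For (iv), the identities are purely algebraic: unfolding the definitions of $\mathcal{C}_\pm$ and $\mathcal{C}_\mp^*$, and using the Clifford identity $(\alpha\cdot\bfn)^2=\mathrm{Id}$ (which follows from $\alpha_j\alpha_k+\alpha_k\alpha_j=2\delta_{jk}\mathrm{Id}$), both sides of $(\alpha\cdot\bfn)\mathcal{C}_\pm=\mathcal{C}_\mp^*(\alpha\cdot\bfn)$ reduce to $\pm i(\alpha\cdot\bfn)C_\pm(\alpha\cdot\bfn)$, while both sides of $\mathcal{C}_\pm(\alpha\cdot\bfn)=(\alpha\cdot\bfn)\mathcal{C}_\mp^*$ reduce to $\pm iC_\pm$.

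For (iii), the cornerstone is the jump relation $C_+g-C_-g=-i(\alpha\cdot\bfn)g$. For smooth $g$ I view the global layer potential $\Phi(g):=\Phi_+(g)\mathbbm{1}_{\Omega_+}+\Phi_-(g)\mathbbm{1}_{\Omega_-}$ as an element of $\mathcal{D}'(\mathbb{R}^3)^4$. Since $\phi$ is a fundamental solution of $\mathcal{H}$, one has $\mathcal{H}\Phi(g)=g\otimes\dd\mathfrak{s}|_\Sigma$, while applying $\mathcal{H}$ piecewise yields only the surface contribution coming from $\nabla\mathbbm{1}_{\Omega_\pm}$, giving $\mathcal{H}\Phi(g)=i(\alpha\cdot\bfn)(C_+g-C_-g)\,\dd\mathfrak{s}|_\Sigma$. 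Equating and extending to $H^{-1/2}(\Sigma)^4$ via Corollary \ref{thm:cont_tr} and density yields the jump; multiplying by $i(\alpha\cdot\bfn)$ then produces $\mathcal{C}_++\mathcal{C}_-=\mathrm{Id}$, and the dual identity $\mathcal{C}_+^*+\mathcal{C}_-^*=\mathrm{Id}$ follows either by the same analysis on the transposed kernel or from (i).

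For (ii), I use a Green's representation: any $u\in H(\alpha,\Omega_+)$ with $\mathcal{H}u=0$ in $\Omega_+$ satisfies $u=\Phi_+(i(\alpha\cdot\bfn)\mathfrak{t}_+u)$ in $\Omega_+$. This comes from the Dirac integration-by-parts identity $\int_{\Omega_+}(\ps{\mathcal{H}u}{v}-\ps{u}{\mathcal{H}v})\,\dd y=-i\int_\Sigma\ps{(\alpha\cdot\bfn)u}{v}\,\dd\mathfrak{s}$, applied with $v(y)=\phi(x-y)^*\xi$ for fixed $x\in\Omega_+$ and $\xi\in\mathbb{C}^4$, together with $\mathcal{H}_y\phi(x-y)^*=\delta_x\mathrm{Id}$ (relying on the pointwise symmetry $\phi(z)^*=\phi(-z)$, itself a consequence of the evenness of $\psi$ and the Hermiticity of $\alpha,\beta$). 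Applying the formula to $u=\Phi_+(g)$ yields $C_+=\mathcal{C}_+C_+$, whence $\mathcal{C}_+^2=i\mathcal{C}_+C_+(\alpha\cdot\bfn)=iC_+(\alpha\cdot\bfn)=\mathcal{C}_+$; the same argument in $\Omega_-$, whose outward normal is $-\bfn$, gives $\mathcal{C}_-^2=\mathcal{C}_-$. The projector property of $\mathcal{C}_\pm^*$ then follows from (i) by duality.

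For (i), for smooth $f,g$ on $\Sigma$, Fubini combined with the pointwise symmetry $\phi(x-y)^*=\phi(y-x)$ shows that the $L^2(\Sigma)$-adjoint of $C_\pm$ is $C_\mp$, so $\mathcal{C}_\pm^*$ as defined coincides with the $L^2$-Hermitian adjoint of $\mathcal{C}_\pm$ on $\mathcal{C}^\infty(\Sigma)^4$. Since both operators are bounded on $H^{-1/2}(\Sigma)^4$ by Corollary \ref{thm:cont_tr} and $\mathcal{C}^\infty(\Sigma)^4$ is dense in $H^{1/2}(\Sigma)^4$, the identification $\mathcal{C}_\pm'=\mathcal{C}_\pm^*|_{H^{1/2}(\Sigma)^4}$ extends to the whole of $H^{1/2}(\Sigma)^4$, and in particular the latter is bounded on $H^{1/2}(\Sigma)^4$; the identification for $(\mathcal{C}_\pm^*)'$ is obtained symmetrically. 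The main obstacle is the Green's representation used in (ii): it must hold for $u\in H(\alpha,\Omega_+)$ whose trace lies only in $H^{-1/2}(\Sigma)^4$, so the classical derivation has to be complemented by an approximation argument relying on Proposition \ref{prop:transmweak}, Theorem \ref{thm:regu_distri}, and density.
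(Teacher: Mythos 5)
Your proposal is correct, but it follows a genuinely different route from the paper for the substantive items. The paper imports the Plemelj--Sokhotski relations $C_\pm = \mp\tfrac{i}{2}(\alpha\cdot\bfn) + C_\mathfrak{s}$ together with the self-adjointness of $C_\mathfrak{s}$ and the identity $-4\big(C_\mathfrak{s}(\alpha\cdot\bfn)\big)^2={\rm Id}$ from \cite[Lemmas 3.3 \& 3.7]{AMV14}, and then reads off {\it (i)}--{\it (iii)} as short algebraic consequences; {\it (iv)} is definitional in both treatments. You instead re-derive the two structural inputs from the potential-theoretic machinery of Section \ref{sec:lay_pot_cald}: the jump relation $C_+-C_-=-i(\alpha\cdot\bfn)$ from the distributional identity $\mathcal{H}\Phi(g)=g\,\dd\mathfrak{s}|_\Sigma$ (which, via $\Phi=V\circ\tr'$, $V\mathcal{H}v=v$, Proposition \ref{prop:nul_distrib} and Corollary \ref{prop:Green_weak} applied in each $\Omega_\pm$, works directly for $g\in H^{-1/2}(\Sigma)^4$ without needing nontangential limits), and the projector property from the reproducing formula $u=\Phi_\pm\big(\pm i(\alpha\cdot\bfn)\mathfrak{t}_{\Sigma,\pm}u\big)$ for $\mathcal{H}$-harmonic $u$ --- which is precisely Equation \eqref{eqn:equiv_norm}'s remark in the paper, there derived more cleanly through $V$ and duality than through the excision argument with $v(y)=\phi(x-y)^*\xi$. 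Your route is more self-contained (it avoids quoting the nontrivial square identity for $C_\mathfrak{s}(\alpha\cdot\bfn)$, which in \cite{AMV14} is itself proved by a reproducing-formula argument), at the cost of being longer. Two points deserve tightening. First, in {\it (i)} the bare invocation of ``Fubini'' is too glib: $C_\pm$ is a one-sided boundary limit, not an absolutely convergent integral on $\Sigma$, and a naive kernel transposition would wrongly suggest $C_\pm^*=C_\pm$; the honest argument pairs the off-boundary potentials, $\ps{\Phi_+(f)(\cdot-t\bfn)}{g}_{L^2(\Sigma)^4}=\ps{f}{\int_\Sigma\phi(\cdot-x+t\bfn(x))g(x)\dd\mathfrak{s}(x)}$, and lets $t\to0^+$, which is exactly where the side swap $+\leftrightarrow-$ comes from (or one simply uses the Plemelj decomposition, as the paper does). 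Second, in {\it (ii)} the reproducing formula on the unbounded component $\Omega_-$ (and in the case $\mu=0$) requires the truncation-to-$B(R)$ device of Theorem \ref{thm:regu_distri} {\it i)} to control the contribution at infinity; you acknowledge an approximation argument is needed but should make this explicit. Neither point is a gap in the logic, only in the level of detail.
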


The following two propositions are key-points in the proof of self-adjointness of Dirac operators. Both are regularisation properties related to the Calder\'on projectors.
\begin{prop} The operator $\mathcal{C}_{\pm}\circ\mathfrak{t}_{\Sigma,\mp}$ is a linear bounded operator from $H(\alpha,\Omega_\mp)$ to $H^{1/2}(\Sigma)^p$.
\label{prop:regCaldtr}
\end{prop}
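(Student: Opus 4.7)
The idea is to exploit a cancellation: feeding the Calder\'on projector with the trace taken from the \emph{opposite} side of $\Sigma$ makes the singular contribution of the layer potential disappear, leaving only a volume convolution with the fundamental solution, which gains a full derivative and therefore has an $H^{1/2}$-trace. Concretely, for $u \in H(\alpha, \Omega_\mp)$, I denote by $\widetilde{u}$ the extension by zero to $\mathbb{R}^3$ and, through integration by parts --- first on smooth $u$ and then extending to $H(\alpha,\Omega_\mp)$ by density and the continuity of the weak trace in Proposition~\ref{prop:transmweak} --- establish the distributional jump formula
\[
    \mathcal{H}\widetilde{u} \;=\; (\mathcal{H} u)_{\mathrm{ext}} \,\mp\, i\,(\alpha\cdot\bfn)\,\mathfrak{t}_{\Sigma,\mp}u\,\delta_\Sigma,
\]
where the sign $\mp$ reflects that the outward unit normal of $\Omega_\mp$ is $\mp\bfn$.

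Applying the fundamental solution $\phi$ and using that $\phi\ast$ inverts $\mathcal{H}$ on tempered distributions then yields
\[
    \widetilde{u} \;=\; \phi \ast (\mathcal{H} u)_{\mathrm{ext}} \;\mp\; i\,\Phi\bigl((\alpha\cdot\bfn)\mathfrak{t}_{\Sigma,\mp}u\bigr).
\]
Since $\widetilde{u}\equiv 0$ on $\Omega_\pm$, restricting this identity to $\Omega_\pm$ and taking the trace from that side --- legitimate for the layer-potential term by Theorem~\ref{thm:regu_distri} combined with Proposition~\ref{prop:transmweak}, and for the volume term by the $H^1$-bound obtained below --- gives, after recognising $\pm i\, C_\pm(\alpha\cdot\bfn)=\mathcal{C}_\pm$, the key identity
\[
    \mathcal{C}_\pm\bigl(\mathfrak{t}_{\Sigma,\mp}u\bigr) \;=\; \mathfrak{t}_{\Sigma,\pm}\bigl(\phi \ast (\mathcal{H} u)_{\mathrm{ext}}\bigr).
\]

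It remains to show that $v:=\phi \ast (\mathcal{H} u)_{\mathrm{ext}}$ belongs to $H^1$ in a neighbourhood of $\Sigma$, with norm controlled by $\|u\|_{H(\alpha,\Omega_\mp)}$; the classical trace theorem (Proposition~\ref{prop:tr_th_cla}) will then close the argument. In Fourier variables the multiplier associated with $\phi\ast$ is $(\alpha\cdot\xi+\mu\beta)/(|\xi|^2+\mu^2)$, whose operator norm equals $(|\xi|^2+\mu^2)^{-1/2}$. For $\mu\neq 0$ this immediately yields $v\in H^1(\mathbb{R}^3)^4$ with continuous dependence on $(\mathcal{H} u)_{\mathrm{ext}}\in L^2(\mathbb{R}^3)^4$. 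The principal obstacle is the case $\mu=0$ with $\Omega$ unbounded: there $\nabla v\in L^2(\mathbb{R}^3)^4$ by a Riesz-transform bound, but $v$ itself may fail to be globally $L^2$, so only $v\in H^1_{\mathrm{loc}}(\mathbb{R}^3)^4$ is available. This dichotomy is already present in Theorem~\ref{thm:regu_distri}, and I would circumvent it by the standard trick of choosing $R>0$ with $\Sigma\subset B(R)$ and a cut-off $\chi\in\mathcal{C}_0^\infty(B(R))$ satisfying $\chi\equiv 1$ near $\Sigma$: then $\chi v\in H^1(\mathbb{R}^3)^4$ with $\mathfrak{t}_{\Sigma,\pm}(\chi v)=\mathfrak{t}_{\Sigma,\pm}v$, and Proposition~\ref{prop:tr_th_cla} applies with $\|\chi v\|_{H^1(\mathbb{R}^3)^4}\lesssim \|u\|_{H(\alpha,\Omega_\mp)}$.
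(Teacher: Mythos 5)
Your argument is correct, but it takes a genuinely different route from the paper. You prove the \emph{strong} representation formula: extending $u$ by zero, computing the distributional jump $\mathcal{H}\widetilde u=(\mathcal{H}u)_{\mathrm{ext}}\mp i(\alpha\cdot\bfn)\,\mathfrak{t}_{\Sigma,\mp}u\,\delta_\Sigma$, inverting with $\phi\ast$, and restricting to the side where $\widetilde u$ vanishes, which gives $\mathcal{C}_\pm(\mathfrak{t}_{\Sigma,\mp}u)=\mathfrak{t}_{\Sigma,\pm}\big(\phi\ast(\mathcal{H}u)_{\mathrm{ext}}\big)$; the right-hand side is the trace of $V$ applied to an $L^2$ function, hence in $H^{1/2}(\Sigma)^4$ by Proposition \ref{prop:pot_regul} and the classical trace theorem. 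The paper instead argues by duality: it approximates $u$ by smooth $u_n$, pairs $\mathcal{C}_\mp(\mathfrak{t}_{\Sigma,\pm}u_n)$ against $f\in\mathcal{C}^\infty(\Sigma)^4$, and uses the extended Green formula (Corollary \ref{prop:Green_weak}) together with $\mathcal{H}\Phi(f)=0$ (Proposition \ref{prop:nul_distrib}) to get $\langle \mathcal{C}_\mp(\mathfrak{t}_{\Sigma,\pm}u_n),f\rangle=\langle\mathcal{H}u_n,\Phi(f)\rangle_{L^2}$, then bounds this by $c\|u_n\|_{H(\alpha,\Omega_\pm)}\|f\|_{H^{-1/2}(\Sigma)^4}$ via Theorem \ref{thm:regu_distri}. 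Since $\Phi=V\circ\mathfrak{t}_\Sigma'$, the paper's identity is exactly the weak form of yours: the two proofs rest on the dual mapping properties $V:L^2\to H^1$ (yours) versus $V\circ\mathfrak{t}_\Sigma':H^{-1/2}\to L^2$ (theirs). What your version buys is transparency --- it exhibits $\mathcal{C}_\pm(\mathfrak{t}_{\Sigma,\mp}u)$ as a concrete trace and explains \emph{why} the singular layer-potential contribution cancels --- at the price of justifying the inversion $\phi\ast\mathcal{H}\widetilde u=\widetilde u$ on tempered distributions; in particular, for $\mu=0$ the Riesz-transform bound only controls $\nabla v$ globally, and you must still check (e.g.\ by splitting low and high frequencies) that $v\in L^2_{\mathrm{loc}}$ with norm controlled by $\|\mathcal{H}u\|_{L^2}$ before the cut-off step applies. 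The paper's duality argument avoids any explicit inversion and needs only the already-established boundedness of $\Phi$ into $H(\alpha,\Omega)$; both treatments dispose of the unbounded massless case with the same cut-off device.
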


Note that the Calder\'on projectors satisfy:
\begin{equation}
	\mathcal{C}_{\pm} - \mathcal{C}_{\pm}^* = \pm i \mathcal{A},
	\label{eqn:def_anticom}
\end{equation}
where $\mathcal{A} = \mathcal{A}_{\mu}$ does not depend on the sign $\pm$ and is an anticommutator that will be specified in Part \ref{subsubsec:reg_anticom}. Roughly speaking, $\mathcal{A}$ measures the defect of self-adjointness of the Calder\'on projectors.

\begin{prop} The operator $\mathcal{A}$ extends into a bounded operator from $H^{-1/2}(\Sigma)^4$ to $H^{1/2}(\Sigma)^4$.
\label{ref:reg_anti}
\end{prop}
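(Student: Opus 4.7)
The idea is to obtain an explicit integral representation of $\mathcal{A}$ on smooth densities and show that, thanks to an algebraic cancellation coming from the Clifford anticommutation relations, its kernel is in fact only weakly singular of order $|x-y|^{-1}$. The conclusion then follows from the classical one-derivative smoothing of such kernels on a compact $\mathcal{C}^2$-surface.

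Starting from Definition \ref{dfn:cald_proj}, the definition of the formal adjoints and \eqref{eqn:def_anticom}, a direct computation gives $\mathcal{A} = C_+(\alpha\cdot\bfn) + (\alpha\cdot\bfn)C_-$. For any $g\in\mathcal{C}^\infty(\Sigma)^4$, the two $\pm\frac{i}{2}(\alpha\cdot\bfn)\,\delta_\Sigma$ contributions hidden in the Plemelj--Sokhotski formulas for $C_\pm$ (themselves encoded in item (iii) of Proposition \ref{prop:propcaldproj}, which yields the jump $C_+-C_-=-i(\alpha\cdot\bfn)$) cancel, and one arrives at
\[
\mathcal{A}g(x) = \mathrm{p.v.}\!\int_\Sigma K(x,y)\,g(y)\,\dd\mathfrak{s}(y),\qquad K(x,y):=\phi(x-y)(\alpha\cdot\bfn(y)) + (\alpha\cdot\bfn(x))\,\phi(x-y).
\]
The algebraic heart of the argument is then to analyse $K$. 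Writing the fundamental solution as $\phi(x)=i\,a(x)(\alpha\cdot x)+\mu\,b(x)\,\beta$ with $a(x)=(1+\mu|x|)e^{-\mu|x|}/(4\pi|x|^3)$ and $b(x)=e^{-\mu|x|}/(4\pi|x|)$, the $\beta$-part already contributes a kernel of order $|x-y|^{-1}$. For the a priori worse gradient part, the Clifford identity $\{\alpha\cdot A,\alpha\cdot B\}=2(A\cdot B)\,I_4$ yields
\[
(\alpha\cdot(x-y))(\alpha\cdot\bfn(y))+(\alpha\cdot\bfn(x))(\alpha\cdot(x-y)) = 2(x-y)\cdot\bfn(y)\,I_4 + (\alpha\cdot(\bfn(x)-\bfn(y)))(\alpha\cdot(x-y)).
\]
Combined with the two classical $\mathcal{C}^2$-estimates $|(x-y)\cdot\bfn(y)|\leq C|x-y|^2$ and $|\bfn(x)-\bfn(y)|\leq C|x-y|$, and with $|a(x-y)|\leq C|x-y|^{-3}$ (uniformly on the compact set $\Sigma$), this collapses the formal $|x-y|^{-2}$ singularity and gives the pointwise bound $|K(x,y)|\leq C|x-y|^{-1}$ on $\Sigma\times\Sigma$ away from the diagonal.

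It remains to deduce that a kernel of order $|x-y|^{-1}$ on the two-dimensional compact $\mathcal{C}^2$-surface $\Sigma$ defines a bounded operator $H^{-1/2}(\Sigma)^4\to H^{1/2}(\Sigma)^4$. This is achieved through a partition of unity subordinate to local charts, transporting $\mathcal{A}$ to a finite sum of integral operators on compact subsets of $\mathbb{R}^2$ whose kernels are dominated by $|x-y|^{-1}$; the one-derivative gain then follows from the mapping property of the Riesz potential $I_1:H^{s}(\mathbb{R}^2)\to H^{s+1}(\mathbb{R}^2)$. A density argument then extends the bound from $\mathcal{C}^\infty(\Sigma)^4$ to $H^{-1/2}(\Sigma)^4$.

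The main obstacle is precisely this last step: promoting the naive pointwise estimate $|K(x,y)|\lesssim|x-y|^{-1}$, which immediately gives only $L^2\to L^2$ boundedness (via Schur's test or Young's inequality), to the full derivative gain $H^{-1/2}\to H^{1/2}$. This requires exploiting the pseudo-homogeneous structure of the kernel (not merely its size) together with careful control of the curvature-driven remainder terms in the chart-by-chart computations. The $\mathcal{C}^2$-only regularity of $\Sigma$ rules out a direct appeal to the smooth pseudodifferential calculus and forces a more hands-on treatment of the chart symbols.
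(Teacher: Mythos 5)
Your algebraic reduction is sound and in fact mirrors the paper's own decomposition: the identity
\[
(\alpha\cdot(x-y))(\alpha\cdot\bfn(y))+(\alpha\cdot\bfn(x))(\alpha\cdot(x-y)) = 2\,(x-y)\cdot\bfn(y)\,I_4 + (\alpha\cdot(\bfn(x)-\bfn(y)))(\alpha\cdot(x-y))
\]
produces exactly the two pieces the paper calls $K_1$ (the normal-component term, weakened by $|(x-y)\cdot\bfn(y)|\leq C|x-y|^2$) and $K_2$ (the term carrying $\bfn(x)-\bfn(y)$). The problem is the final step, and you have in effect diagnosed it yourself: a pointwise bound $|K(x,y)|\leq C|x-y|^{-1}$ does \emph{not} imply that the associated operator gains a derivative. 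The smoothing property of the Riesz potential $I_1$ comes from its exact convolution structure (the Fourier multiplier $|\xi|^{-1}$), not from the size of its kernel; a kernel merely \emph{dominated} by $|x-y|^{-1}$ yields only $L^2\to L^2$ boundedness by Schur's test, and differentiating such a kernel in $x$ generically produces a non-integrable $|x-y|^{-2}$ singularity with no cancellation. So as written, the proposal proves $L^2(\Sigma)^4\to L^2(\Sigma)^4$ boundedness and then asserts, without justification, the one-derivative gain that is the entire content of the proposition.

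The paper closes this gap with two specific inputs, neither of which is a size estimate. For the $(x-y)\cdot\bfn(y)$ piece, it invokes the theory of pseudo-homogeneous kernels of class $-1$ (N\'ed\'elec, Th.~4.3.2), which exploits the homogeneity and parity structure of each term in the expansion of $\psi$, not just its decay, to get $L^2(\Sigma)\to H^1(\Sigma)$. For the piece $\phi_0(x-y)\,(\alpha\cdot(\bfn(y)-\bfn(x)))$, it recognizes, after passing to local charts and freezing the Jacobian, the Calder\'on commutator $[R_q,b]$ of a planar Riesz transform with the $\mathcal{C}^1$ function $b=b_j\,\bfn_k\circ\Lambda_j^{-1}$, and appeals to Calder\'on's 1965 commutator theorem to obtain the derivative gain, with the curvature-driven remainders handled by explicit $|\partial_s\tilde R(s,t)|\lesssim|s-t|^{-1}$ bounds. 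Finally, the paper establishes $L^2(\Sigma)^4\to H^1(\Sigma)^4$ first and only then passes to $H^{-1/2}\to H^{1/2}$ by self-adjointness, duality and interpolation; your proposal would also need this (or an equivalent) reduction, since the chart-by-chart kernel estimates are naturally formulated on $L^2$. To complete your argument you must therefore replace the appeal to the mapping property of $I_1$ by these structural tools (or equivalents); without them the key analytic step is missing.
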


\begin{remark}
Proposition \ref{ref:reg_anti} is reminiscent of \cite[Lemma 3.5]{AMV14} that states that $\mathcal{A}$ is compact as an operator from $L^2(\Sigma)^4$ onto itself. In fact, one can prove that $\mathcal{A}$ is a linear bounded operator from  $L^{2}(\Sigma)^4$ to $H^1(\Sigma)^4$. As the injection $H^1(\Sigma)^4$ into $L^2(\Sigma)^4$ is compact, we recover \cite[Lemma 3.5]{AMV14}.
\label{rmk:compac_anticom}
\end{remark}

The rest of this section is splitted into three subsections. In Subsection \ref{subsec:Sob_spa} we study the Sobolev space $H(\alpha,\Omega)$ in order to prove Proposition \ref{prop:transmweak}. Subsection \ref{subsec:proof_th} deals with Theorem \ref{thm:regu_distri} and Corollary \ref{thm:cont_tr} while the various properties of the Calder\'on projectors are proven in Subsection \ref{subsec:cald-proj}.

\subsection{The Sobolev space $H(\alpha,\Omega)$}
\label{subsec:Sob_spa}
In this subsection, for the sake of clarity, we set $\mathcal{K}:= H(\alpha,\Omega)$. Recall that $\mathcal{K}$ is endowed with the scalar product:
\[
	\ps{u}{v}_\mathcal{K} = \ps{u}{v}_{L^2(\Omega)^4} + \ps{(\alpha\cdot\sfD)u}{(\alpha\cdot\sfD)v}_{L^2(\Omega)^4},\quad u,v\in\mathcal{K}.
\]
Let $\|\cdot\|_\mathcal{K}$ denotes the norm associated with the scalar product $\ps{\cdot}{\cdot}_{\mathcal{K}}$.

We aim to prove Proposition \ref{prop:transmweak} in order to give a sense to the boundary value of a function in $\mathcal{K}$.

This subsection is organised as follows: In Part \ref{subsec:basic_pt} we give basic properties of the space $\mathcal{K}$ and Proposition \ref{prop:transmweak} is proven in Part \ref{subsec:tr_th_K}.

\subsubsection{Basic properties}
\label{subsec:basic_pt}
Let us start with the following properties.
\begin{prop} $(\mathcal{K},\|\cdot\|_\mathcal{K})$ is a Hilbert space.
\label{prop:K_hilbsp}
\end{prop}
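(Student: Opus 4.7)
The statement is a completeness claim for the graph-type inner product, so the plan is the standard one for spaces of the form $\{u \in L^2 : Pu \in L^2\}$ with $P$ a first-order differential operator with constant coefficients.

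First I would verify that $\langle\cdot,\cdot\rangle_\mathcal{K}$ is indeed a (Hermitian, positive definite) inner product: sesquilinearity and conjugate symmetry are immediate from the corresponding properties of $\langle\cdot,\cdot\rangle_{L^2(\Omega)^4}$, and positive definiteness follows from the $L^2$ part of the sum, which already controls $\|u\|^2$. So $(\mathcal{K}, \|\cdot\|_\mathcal{K})$ is a pre-Hilbert space, and only completeness requires real work.

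For completeness, I would take a Cauchy sequence $(u_n)_{n\in\mathbb{N}}$ in $\mathcal{K}$. By the definition of the norm, both $(u_n)$ and $((\alpha\cdot\mathsf{D})u_n)$ are Cauchy in $L^2(\Omega)^4$. Since $L^2(\Omega)^4$ is complete, there exist $u,v \in L^2(\Omega)^4$ with $u_n \to u$ and $(\alpha\cdot\mathsf{D})u_n \to v$ in $L^2(\Omega)^4$.

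The key step is to identify $v$ with the distributional derivative $(\alpha\cdot\mathsf{D})u$. For any test function $\varphi \in \mathcal{C}_0^\infty(\Omega)^4$, the Hermitian character of the $\alpha_j$ together with integration by parts gives
\[
	\langle (\alpha\cdot\mathsf{D})u_n, \varphi\rangle_{L^2(\Omega)^4} = \langle u_n, (\alpha\cdot\mathsf{D})\varphi\rangle_{L^2(\Omega)^4},
\]
since $\alpha\cdot\mathsf{D}$ is formally self-adjoint. Passing to the limit $n\to\infty$ in both sides (using the $L^2$-convergences just established) yields
\[
	\langle v, \varphi\rangle_{L^2(\Omega)^4} = \langle u, (\alpha\cdot\mathsf{D})\varphi\rangle_{L^2(\Omega)^4} = \langle (\alpha\cdot\mathsf{D})u, \varphi\rangle_{\mathcal{D}'(\Omega)^4,\mathcal{D}(\Omega)^4}.
\]
By uniqueness of the distributional limit, $(\alpha\cdot\mathsf{D})u = v \in L^2(\Omega)^4$; hence $u \in \mathcal{K}$ and $u_n \to u$ in $\mathcal{K}$.

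There is no real obstacle here: the argument is purely formal and relies only on the completeness of $L^2(\Omega)^4$, the continuity of the distributional derivative on $\mathcal{D}'(\Omega)^4$, and the formal self-adjointness of $\alpha\cdot\mathsf{D}$. The only point deserving a line of justification is the identification of the distributional action $(\alpha\cdot\mathsf{D})u$ with the $L^2$-limit $v$, which is obtained as described above.
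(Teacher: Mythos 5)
Your proof is correct and follows the same route as the paper: extract $L^2$-limits $u$ and $v$ from the Cauchy sequence and its image under $\alpha\cdot\mathsf{D}$, then identify $v$ with $(\alpha\cdot\mathsf{D})u$ by continuity of differentiation in $\mathcal{D}'(\Omega)^4$. You merely spell out the distributional identification step (via testing against $\varphi\in\mathcal{C}_0^\infty(\Omega)^4$ and the formal self-adjointness of $\alpha\cdot\mathsf{D}$) that the paper leaves implicit.
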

\begin{proof}[Proof of Proposition \ref{prop:K_hilbsp}] Let $u_n$ be a Cauchy sequence of $(\mathcal{K},\|\cdot\|_\mathcal{K})$. In particular $u_n$ and $(\alpha\cdot\sfD) u_n$ are Cauchy sequences of $L^2(\Omega)^4$ and converge:
\[
	u_n \underset{n\rightarrow+\infty}{\longrightarrow} u \in L^2(\Omega)^4,\quad (\alpha\cdot\sfD)u_{n} \underset{n\rightarrow+\infty}{\longrightarrow} v\in L^2(\Omega)^4.
\]
In the sense of distributions we have
\[
	(\alpha\cdot\sfD) u = \lim_{n\rightarrow\infty} (\alpha\cdot\sfD) u_{n} = v.
\]
Consequently $(\alpha\cdot\sfD) u = v$ in $L^2(\Omega)^4$ and $u_n$ converges to $u$ in $\mathcal{K}$.
\end{proof}
\begin{prop} The inclusion of $H^1(\Omega)^4$ into $\mathcal{K}$ is continuous. More precisely, there exists $c>0$ such that for all $u\in H^1(\Omega)^4$:
\[
	\|u\|_{\mathcal{K}}^2 \leq c \|u\|_{H^1(\Omega)^4}^2.
\]
\label{prop:prop_incl_ctn}
\end{prop}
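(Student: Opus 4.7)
The plan is to obtain the inclusion by a direct componentwise estimate of $\|u\|_\mathcal{K}$ in terms of $\|u\|_{H^1(\Omega)^4}$. The only nontrivial term to control is $\|(\alpha\cdot\mathsf{D})u\|_{L^2(\Omega)^4}$, since the $L^2$-term is already a summand of the $H^1$-norm.

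First, I would write $(\alpha\cdot\mathsf{D})u = -i\sum_{j=1}^3 \alpha_j\,\partial_j u$ and apply the triangle inequality in $L^2(\Omega)^4$ together with the pointwise bound $\|\alpha_j z\|_{\mathbb{C}^4}\leq \|\alpha_j\|_{\mathcal{M}}\|z\|_{\mathbb{C}^4}$. Since $\alpha_j$ is unitary we have $\|\alpha_j\|_{\mathcal{M}}=1$, so
\[
\|(\alpha\cdot\mathsf{D})u\|_{L^2(\Omega)^4}\leq \sum_{j=1}^3 \|\partial_j u\|_{L^2(\Omega)^4}.
\]
A Cauchy--Schwarz inequality on the finite sum then yields
\[
\|(\alpha\cdot\mathsf{D})u\|_{L^2(\Omega)^4}^2\leq 3\sum_{j=1}^3 \|\partial_j u\|_{L^2(\Omega)^4}^2.
\]

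Adding $\|u\|_{L^2(\Omega)^4}^2$ to both sides and using the characterisation of $\|\cdot\|_{H^1(\Omega)^4}$ through $\|\cdot\|_{\mathfrak{H}^1(\Omega)^4}$ (which the paper has recalled are equivalent norms), I would conclude
\[
\|u\|_{\mathcal{K}}^2 = \|u\|_{L^2(\Omega)^4}^2 + \|(\alpha\cdot\mathsf{D})u\|_{L^2(\Omega)^4}^2\leq 3\,\|u\|_{H^1(\Omega)^4}^2,
\]
so $c=3$ (up to the implicit equivalence constants between the $\mathfrak{H}^1$ and $H^1$ norms) works. There is no real obstacle here: the argument is purely algebraic, relying only on the unitarity of the Dirac matrices $\alpha_j$ and the definition of the $H^1$-norm, and does not require any of the boundary structure of $\Omega$.
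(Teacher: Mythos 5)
Your argument is correct and coincides with the paper's own proof: the same triangle inequality in $L^2$, the same use of $\|\alpha_j\|_{\mathcal{M}}=1$, and the same Cauchy--Schwarz step on the finite sum (the paper leaves the constant $c'$ implicit where you compute $c'=3$). Nothing further is needed.
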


\begin{proof}[Proof of Proposition \ref{prop:prop_incl_ctn}]Let $u=(u_k)_{k\in\{1,\dots,4\}} \in H^1(\Omega)^4$, we have:
\begin{equation}
	\begin{array}{lcl}
		\|(\alpha\cdot\sfD) u\|_{L^2(\Omega)^4} &\leq &\displaystyle\sum_{j=1}^3\|\alpha_j\partial_j u\|_{L^2(\Omega)^4}\\
		&\leq& \displaystyle\sum_{j=1}^3\|\alpha_j\|_{\mathcal{M}}\|\partial_j u\|_{L^2(\Omega)^4} = \sum_{j=1}^3\|\partial_j u\|_{L^2(\Omega)^4}.\\
	\end{array}
	\label{eqn:inf_born}
\end{equation}
Consequently, there exists $c>0$ such that $\displaystyle \|(\alpha\cdot\sfD)u\|_{L^2(\Omega)^4}^2 \leq c'\displaystyle\sum_{k=1}^4\|\nabla u_k\|_{L^2(\Omega)^4}^2$. We obtain Proposition \ref{prop:prop_incl_ctn} with $c = 1 +c'$.
\end{proof}
Now, we state a density result.
\begin{prop}$\mathcal{C}_0^\infty(\overline{\Omega})^4$ is dense in $\mathcal{K}$ for the norm $\|\cdot\|_\mathcal{K}$.
\label{prop:dens_K}
\end{prop}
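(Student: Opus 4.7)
The plan is to use the classical shift-and-mollify technique, which is particularly clean here because $\alpha \cdot \sfD$ has constant coefficients and hence commutes with translations.

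First I would reduce to the case where $u$ has compact support in $\overline{\Omega}$. If $\Omega$ is unbounded, pick a cutoff $\chi_R \in \mathcal{C}_0^\infty(\mathbb{R}^3)$ equal to $1$ on $B(R)$ with $\|\nabla \chi_R\|_{L^\infty} = O(1/R)$; then $(\alpha \cdot \sfD)(\chi_R u) = \chi_R (\alpha \cdot \sfD) u - i(\alpha \cdot \nabla \chi_R) u$, and dominated convergence yields $\chi_R u \to u$ in $\mathcal{K}$. Next, using the $\mathcal{C}^2$-regularity and compactness of $\Sigma$, I would cover $\mathsf{supp}(u) \cap \overline{\Omega}$ by a ball $B_0 \Subset \Omega$ and finitely many balls $B_1,\dots,B_N$ meeting $\Sigma$, each small enough that there exist a constant unit vector $e_j \in \mathbb{R}^3$ and $c_j>0$ with $e_j \cdot \bfn(y) \leq -c_j$ for every $y \in \Sigma \cap \widetilde{B_j}$ (a slight enlargement of $B_j$). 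Picking a subordinate partition of unity $(\eta_j)_{j=0}^N$, write $u = \sum_{j=0}^N \eta_j u$. The interior piece $\eta_0 u$ has compact support strictly inside $\Omega$ and is approximated by its standard mollifications.

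For each boundary piece $u_j := \eta_j u$ ($j \geq 1$), I would extend by zero to $\widetilde{u_j} \in L^2(\mathbb{R}^3)^4$ and set, for $0 < \delta < c_j\epsilon/2$,
\[
u_j^{\epsilon,\delta}(x) := (\rho_\delta \ast \widetilde{u_j})(x + \epsilon e_j),
\]
with $\rho_\delta$ a standard mollifier. As the convolution with a smooth kernel followed by translation by a constant vector, $u_j^{\epsilon,\delta}$ is smooth with compact support in $\mathbb{R}^3$, so its restriction to $\overline{\Omega}$ lies in $\mathcal{C}_0^\infty(\overline{\Omega})^4$. The crucial geometric observation, coming from the choice of $e_j$ together with the $\mathcal{C}^2$-smoothness of $\Sigma$, is that for every $x \in \overline{\Omega}$ in the effective support of $u_j^{\epsilon,\delta}$, the shifted point $x + \epsilon e_j$ lies in $\Omega$ at distance at least $c_j \epsilon$ from $\Sigma$.

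Convergence of $u_j^{\epsilon,\delta}$ to $u_j$ in $\mathcal{K}$ rests on the distributional identity
\[
(\alpha \cdot \sfD) \widetilde{u_j} = \widetilde{(\alpha \cdot \sfD) u_j} + S_j \quad \text{in } \mathcal{D}'(\mathbb{R}^3)^4,
\]
where $S_j$ is supported on $\Sigma$. This decomposition does not require a trace: integration by parts is classical against $\phi \in \mathcal{C}_0^\infty(\Omega)^4$ or $\phi \in \mathcal{C}_0^\infty(\mathbb{R}^3 \setminus \overline{\Omega})^4$, so the difference is supported on $\Sigma$. By translation invariance,
\[
(\alpha \cdot \sfD) u_j^{\epsilon,\delta}(x) = \bigl[\rho_\delta \ast \widetilde{(\alpha \cdot \sfD) u_j}\bigr](x + \epsilon e_j) + \bigl[\rho_\delta \ast S_j\bigr](x + \epsilon e_j),
\]
and the second term vanishes at any point at distance $>\delta$ from $\Sigma$; by the geometric observation and $\delta < c_j \epsilon$, this is automatic throughout the relevant region. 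The first term converges to $(\alpha \cdot \sfD) u_j$ in $L^2(\Omega)^4$ by standard mollifier theory combined with $L^2$-continuity of translations, and the same tools yield $u_j^{\epsilon,\delta} \to u_j$ in $L^2(\Omega)^4$. Summing over $j$ produces an approximating sequence in $\mathcal{C}_0^\infty(\overline{\Omega})^4$. The most delicate step is the geometric one: using the $\mathcal{C}^2$-smoothness of $\Sigma$ to certify that the inward shift $x \mapsto x+\epsilon e_j$ increases the distance to $\Sigma$ by at least $c_j\epsilon$ up to higher-order corrections, uniformly on the relevant region, and that this region is indeed captured by our choice of balls $B_j$.
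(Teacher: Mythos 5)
Your proof is correct, but it takes a genuinely different route from the paper's. The paper argues by duality in the Hilbert space $\mathcal{K}$: it takes $v\in\mathcal{K}$ orthogonal to $\mathcal{C}_0^\infty(\overline{\Omega})^4$ for the $\mathcal{K}$-inner product, sets $w=(\alpha\cdot\sfD)v$, uses extension by zero together with the identity $\{u\in L^2(\mathbb{R}^3)^4:(\alpha\cdot\sfD)u\in L^2(\mathbb{R}^3)^4\}=H^1(\mathbb{R}^3)^4$ (Lemma \ref{ref:lemdis}) to conclude $w\in H_0^1(\Omega)^4$, and then derives $\|w\|_{L^2(\Omega)^4}^2=-\|(\alpha\cdot\sfD)w\|_{L^2(\Omega)^4}^2$, forcing $w=0$ and hence $v=0$. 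You instead give the classical constructive argument for graph spaces of first-order constant-coefficient operators (as for $H(\mathrm{div},\Omega)$ or $H(\mathrm{curl},\Omega)$): cutoff, localization, inward translation, and mollification, the key point being that $\alpha\cdot\sfD$ commutes with translations and convolutions, so the singular layer $S_j$ created by the zero-extension is pushed outside $\overline{\Omega}$ before smoothing. The paper's proof is shorter but non-constructive and leans on the global Lemma \ref{ref:lemdis}; yours produces explicit approximants, uses the boundary only through a uniform interior-direction (cone-type) condition, and would go through essentially verbatim for Lipschitz $\Sigma$. One point to tighten: you need $[\rho_\delta * S_j](x+\epsilon e_j)=0$ for \emph{every} $x\in\overline{\Omega}$, not only for $x$ in the support of $u_j^{\epsilon,\delta}$; this does hold, since any $x\in\overline{\Omega}$ with $x+\epsilon e_j$ within $\delta$ of $\mathsf{supp}(S_j)\subset\Sigma\cap\widetilde{B_j}$ is itself within $\epsilon+\delta$ of $\Sigma\cap\widetilde{B_j}$ and therefore falls under your geometric observation, but it should be said.
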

Before going through the proof of Proposition \ref{prop:dens_K} we state the following lemma. Its proof is a simple consequence of Green's formula and is omitted.
\begin{lem} The following set inclusion holds $\big\{u\in L^2(\mathbb{R}^3)^4 : (\alpha\cdot\sfD) u\in L^2(\mathbb{R}^3)^4 \big\} = H^1(\mathbb{R}^3)^4$.
\label{ref:lemdis}
\end{lem}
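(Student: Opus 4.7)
The plan is to prove both set inclusions and to do so I would rely on the fundamental algebraic identity
\[
	(\alpha\cdot\sfD)^2 = -\Delta\cdot\rm{Id},
\]
which follows at once from the anticommutation relations $\alpha_i\alpha_j + \alpha_j\alpha_i = 2\delta_{ij}\rm{Id}$ satisfied by the Dirac matrices. The inclusion $H^1(\mathbb{R}^3)^4 \subset \{u \in L^2(\mathbb{R}^3)^4 : (\alpha\cdot\sfD)u \in L^2(\mathbb{R}^3)^4\}$ is a direct consequence of Proposition \ref{prop:prop_incl_ctn} applied with $\Omega = \mathbb{R}^3$, so the real content is the reverse inclusion.

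For this, the cleanest route bypasses Green's formula in favour of Fourier analysis, which is available on the full space. Since $(\alpha\cdot\xi)$ is Hermitian and $(\alpha\cdot\xi)^2 = |\xi|^2\,\rm{Id}$, one has the pointwise identity $\|(\alpha\cdot\xi)v\|_{\mathbb{C}^4}^2 = |\xi|^2\|v\|_{\mathbb{C}^4}^2$ for every $v\in\mathbb{C}^4$ and every $\xi\in\mathbb{R}^3$. Given $u\in L^2(\mathbb{R}^3)^4$, Plancherel's theorem then yields
\[
	\|(\alpha\cdot\sfD)u\|_{L^2(\mathbb{R}^3)^4}^2 = \int_{\mathbb{R}^3}\|(\alpha\cdot\xi)\mathcal{F}(u)(\xi)\|_{\mathbb{C}^4}^2\,\dd\xi = \int_{\mathbb{R}^3}|\xi|^2\|\mathcal{F}(u)(\xi)\|_{\mathbb{C}^4}^2\,\dd\xi.
\]
Hence $(\alpha\cdot\sfD)u \in L^2(\mathbb{R}^3)^4$ if and only if $|\xi|\mathcal{F}(u)(\xi)\in L^2(\mathbb{R}^3)^4$, which is precisely the definition of $u\in H^1(\mathbb{R}^3)^4$. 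This simultaneously proves the reverse inclusion and upgrades Proposition \ref{prop:prop_incl_ctn} to the equivalence of norms $\|u\|_{H(\alpha,\mathbb{R}^3)}^2 = \|u\|_{L^2}^2 + \||\xi|\mathcal{F}(u)\|_{L^2}^2 = \|u\|_{H^1(\mathbb{R}^3)^4}^2$.

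Should one prefer the route suggested by the excerpt, the same identity can be exploited by integration by parts: for $\varphi\in\mathcal{C}_0^\infty(\mathbb{R}^3)^4$ a direct computation gives
\[
	\|(\alpha\cdot\sfD)\varphi\|_{L^2(\mathbb{R}^3)^4}^2 = \ps{(\alpha\cdot\sfD)^2\varphi}{\varphi}_{L^2} = -\ps{\Delta\varphi}{\varphi}_{L^2} = \|\nabla\varphi\|_{L^2(\mathbb{R}^3)^4}^2,
\]
which is then extended to both spaces by mollification (standard since $\mathbb{R}^3$ has no boundary, so $\mathcal{C}_0^\infty(\mathbb{R}^3)^4$ is dense both in $H^1(\mathbb{R}^3)^4$ and in the graph-norm space $H(\alpha,\mathbb{R}^3)$).

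There is no real obstacle here: the whole argument reduces to the Dirac algebra combined with Plancherel (or integration by parts plus mollification). The only minor care to take is the density of $\mathcal{C}_0^\infty(\mathbb{R}^3)^4$ in the graph-norm space, which is straightforward on $\mathbb{R}^3$ since convolution with a mollifier commutes with the constant-coefficient operator $\alpha\cdot\sfD$ and preserves $L^2$-convergence.
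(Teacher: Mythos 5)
Your proof is correct. The paper omits the argument entirely, remarking only that the lemma is ``a simple consequence of Green's formula'', i.e.\ of the identity $\|(\alpha\cdot\sfD)\varphi\|_{L^2(\mathbb{R}^3)^4}^2=\|\nabla\varphi\|_{L^2(\mathbb{R}^3)^4}^2$ for $\varphi\in\mathcal{C}_0^\infty(\mathbb{R}^3)^4$ followed by a density/mollification step --- which is exactly the second route you sketch. Your primary argument is genuinely different and arguably cleaner: since $\alpha\cdot\xi$ is Hermitian with $(\alpha\cdot\xi)^2=|\xi|^2\,\mathrm{Id}$, Plancherel gives $\|(\alpha\cdot\sfD)u\|_{L^2(\mathbb{R}^3)^4}^2=\||\xi|\mathcal{F}(u)\|_{L^2(\mathbb{R}^3)^4}^2$ directly for any $u\in L^2(\mathbb{R}^3)^4$ (the Fourier transform of the tempered distribution $(\alpha\cdot\sfD)u$ being the locally integrable function $(\alpha\cdot\xi)\mathcal{F}(u)$), so membership in the two spaces is equivalent and no approximation argument is needed at all. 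The Fourier route buys you the exact norm identity $\|u\|_{H(\alpha,\mathbb{R}^3)}=\|u\|_{H^1(\mathbb{R}^3)^4}$ for free; the Green's-formula route is the one that generalises to domains with boundary, where it produces the boundary term that the rest of the paper exploits. Both versions you give are complete; nothing is missing.
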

Now we have all the tools to prove Proposition \ref{prop:dens_K}.
\begin{proof}[Proof of Proposition \ref{prop:dens_K}]
Let $v\in\mathcal{K}$ such that, for all $u\in\mathcal{C}_0^\infty(\overline{\Omega})^4$ we have:
\[
	0 = \ps{v}{u}_\mathcal{K} = \ps{v}{u}_{L^2(\Omega)^4} + \ps{(\alpha\cdot\sfD)v}{(\alpha\cdot\sfD)u}_{L^2(\Omega)^4}.
\]
Let $w:=(\alpha\cdot\sfD)v$. In $\mathcal{D}'(\Omega)^4$ we have $(\alpha\cdot\sfD)w = - v$ and then the equality is also true in $L^2(\Omega)^4$. Let $w_0$ and $v_0$ be the extensions of $w$ and $v$ by zero to $\mathbb{R}^3$. For any $f\in\mathcal{C}_0^\infty(\mathbb{R}^3)^4$, we have
\[
	\ps{(\alpha\cdot\sfD)w_0}{f}_{\mathcal{D}'(\mathbb{R}^3)^4,\mathcal{D}(\mathbb{R}^3)^4} = \ps{w_0}{(\alpha\cdot\sfD)f}_{L^2(\mathbb{R}^3)^4} = \ps{w}{(\alpha\cdot\sfD)f}_{L^2(\Omega)^4} = -\ps{v}{f}_{L^2(\Omega)^4} = -\ps{v_0}{f}_{L^2(\mathbb{R}^3)^4}.
\]
Thus, $(\alpha\cdot\sfD)w_0 = -v_0 \in L^2(\mathbb{R}^3)^4$. By Lemma \ref{ref:lemdis}, $w_0\in H^1(\mathbb{R}^3)^4$ and finally $w\in H_0^1(\Omega)^4$ thanks to \cite[Prop IX.18]{Br94}.

Let $(f_n)_{n\in\mathbb{N}}$ be a sequence of $\mathcal{C}_0^\infty(\Omega)^4$-functions such that $f_n$ converges to $w$ in the $\|\cdot\|_{H^1(\Omega)^4}$-norm. We have:
\[
	\begin{array}{lcl}
	0 = \ps{v}{u}_{L^2(\Omega)^4} + \ps{(\alpha\cdot\sfD)v}{(\alpha\cdot\sfD)u}_{L^2(\Omega)^4} &=& \ps{v}{u}_{L^2(\Omega)^4} + \ps{w}{(\alpha\cdot\sfD)u}_{L^2(\Omega)^4}\\
	&=&\displaystyle\ps{v}{u}_{L^2(\Omega)^4} + \lim_{n\rightarrow+\infty}\ps{(\alpha\cdot\sfD)f_n}{u}_{L^2(\Omega)^4}\\
	&=&\displaystyle\ps{v}{u}_{L^2(\Omega)^4} + \ps{(\alpha\cdot\sfD)w}{u}_{L^2(\Omega)^4}.
	\end{array}
\]
In particular, $v + (\alpha\cdot\sfD) w = 0$ in $L^2(\Omega)^4$ which gives $w - \Delta w = 0$ in $\mathcal{D}'(\Omega)^4$. We get:
\begin{align*}
	\ps{w}{f_n}_{L^2(\Omega)^4} = \ps{\Delta w}{f_n}_{\mathcal{D}'(\Omega)^4,\mathcal{D}(\Omega)^4} &= -\ps{(\alpha\cdot\sfD)w}{(\alpha\cdot\sfD)f_n}_{\mathcal{D}'(\Omega)^4,\mathcal{D}(\Omega)^4} \\&= -\ps{(\alpha\cdot\sfD)w}{(\alpha\cdot\sfD)f_n}_{L^2(\Omega)^4}.
\end{align*}
Now, letting $n\rightarrow +\infty$, we obtain
\[
	\|w\|_{L^2(\Omega)^4}^2 = - \|(\alpha\cdot\sfD)w\|_{L^2(\Omega)^4}^2,
\]
thus $w = 0$ and $w_0 = 0$. Now, recall that $v_0 = -(\alpha\cdot\sfD)w_0 =0$ which gives $v=0$.
\end{proof}

\subsubsection{Trace theorem}
\label{subsec:tr_th_K}
In this subsection we prove Proposition \ref{prop:transmweak}. To do so, we need the following lemma which is a basic application of Green's formula and whose proof will be therefore omitted.

\begin{lem}
Let $u,v\in\mathcal{C}_0^\infty(\overline{\Omega})^4$, we have:
\[
	\ps{(\alpha\cdot\sfD)u}{v}_{L^2(\Omega)^4} = \ps{u}{(\alpha\cdot\sfD)v}_{L^2(\Omega)^4} + \ps{(-i\alpha\cdot\bfn)\tr u}{\tr v}_{L^2(\Sigma)^4}. 
\]
By density of $\mathcal{C}_0^\infty(\overline{\Omega})^4$ and continuity of $\tr$ on $H^1(\Omega)^4$, this equality extends to $u,v\in H^1({\Omega})^4$.
\label{lem:Green_dir}
\end{lem}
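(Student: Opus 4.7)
The plan is to reduce the identity to the scalar divergence theorem applied componentwise, keeping careful track of conjugations since the Hermitian product on $\mathbb{C}^4$ is antilinear in the second slot. The extension to $H^1(\Omega)^4$ is then a routine density argument based on Propositions \ref{prop:tr_th_cla} and \ref{prop:prop_incl_ctn}.

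For $u,v\in\mathcal{C}_0^\infty(\overline{\Omega})^4$ I first expand $(\alpha\cdot\sfD)u = -i\sum_{j=1}^{3}\alpha_j\partial_j u$ and use the Hermiticity $\alpha_j^*=\alpha_j$ to rewrite
\[
\ps{(\alpha\cdot\sfD)u}{v}_{L^2(\Omega)^4} = -i\sum_{j=1}^{3}\int_\Omega \ps{\partial_j u}{\alpha_j v}_{\mathbb{C}^4}\,\dd x.
\]
Then I apply the scalar divergence theorem to each pair of components: for any $j\in\{1,2,3\}$ and $k\in\{1,\dots,4\}$,
\[
\int_\Omega (\partial_j u_k)\,\overline{(\alpha_j v)_k}\,\dd x = \int_\Sigma u_k\,\overline{(\alpha_j v)_k}\,n_j\,\dd\mathfrak{s} - \int_\Omega u_k\,\overline{(\alpha_j\partial_j v)_k}\,\dd x,
\]
where I used that $\alpha_j$ has constant coefficients, so $\partial_j(\alpha_j v) = \alpha_j \partial_j v$. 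Summing over $k$ and then over $j$ with the prefactor $-i$ produces
\[
\ps{(\alpha\cdot\sfD)u}{v}_{L^2(\Omega)^4} = i\int_\Omega \ps{u}{\sum_{j=1}^{3}\alpha_j\partial_j v}_{\mathbb{C}^4}\, \dd x - i\int_\Sigma \ps{u}{(\alpha\cdot\bfn) v}_{\mathbb{C}^4}\, \dd\mathfrak{s}.
\]

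The next step is to recognise each side as the claimed expression. In the volume term, $\sum_j \alpha_j \partial_j v = i(\alpha\cdot\sfD)v$, and antilinearity in the second slot gives $i\ps{u}{i(\alpha\cdot\sfD)v} = \ps{u}{(\alpha\cdot\sfD)v}$. For the boundary term, the matrix $\alpha\cdot\bfn$ is Hermitian because $\bfn$ is real-valued and each $\alpha_j$ is Hermitian; the same antilinearity identity rewrites $-i\ps{u}{(\alpha\cdot\bfn)v} = \ps{(-i\alpha\cdot\bfn)u}{v}$, which matches exactly the boundary term stated in the lemma.

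For the extension to $H^1(\Omega)^4$, the three expressions in the identity define continuous sesquilinear forms on $H^1(\Omega)^4\times H^1(\Omega)^4$: the volume terms by Cauchy--Schwarz combined with Proposition \ref{prop:prop_incl_ctn}, and the boundary term by Proposition \ref{prop:tr_th_cla} together with the fact that multiplication by the bounded matrix-valued coefficient $-i\alpha\cdot\bfn$ is continuous on $L^2(\Sigma)^4$ and a fortiori on $H^{1/2}(\Sigma)^4$. Density of $\mathcal{C}_0^\infty(\overline{\Omega})^4$ in $H^1(\Omega)^4$ then upgrades the identity to all of $H^1(\Omega)^4$. I do not expect any serious obstacle; the only delicate point is bookkeeping the imaginary units, in particular the interplay between the antilinearity of the Hermitian product and the factor $-i$ in $\sfD=-i\nabla$.
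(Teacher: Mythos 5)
Your proof is correct and is precisely the argument the paper has in mind: the paper omits the proof of Lemma \ref{lem:Green_dir}, describing it as a basic application of Green's formula, and your componentwise divergence-theorem computation (using the Hermiticity of the $\alpha_j$ and of $\alpha\cdot\bfn$, the constancy of the $\alpha_j$, and the antilinearity bookkeeping) together with the density argument via Propositions \ref{prop:tr_th_cla} and \ref{prop:prop_incl_ctn} supplies exactly that. One cosmetic remark: the aside that boundedness of multiplication by $-i\alpha\cdot\bfn$ on $L^2(\Sigma)^4$ holds \emph{a fortiori} on $H^{1/2}(\Sigma)^4$ is not a valid implication, but it is also not needed, since the boundary pairing only requires the $L^2(\Sigma)^4$ bound and the continuity of $\tr:H^1(\Omega)^4\to H^{1/2}(\Sigma)^4\subset L^2(\Sigma)^4$.
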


Now have all the tools to prove Proposition \ref{prop:transmweak}.

\begin{proof}[Proof of Proposition \ref{prop:transmweak}]Let $v\in\mathcal{K}$, we prove that the trace $\tr v$ exists and is in $H^{-1/2}(\Sigma)^4$. Let $(v_{n})_{n\in\mathbb{N}}$ be a sequence of $\mathcal{C}_0^\infty(\overline{\Omega})^4$ converging to $v$ in the $\|\cdot\|_{\mathcal{K}}$-norm.

We want to prove that $\tr v_{n}$ converges in $H^{-1/2}(\Sigma)^4$ and to do so, we prove that it is a Cauchy sequence. For all $f\in H^{1/2}(\Sigma)^4$, Lemma \ref{lem:Green_dir} yields:
\begin{equation}
	\ps{(-i\alpha\cdot\bfn)f}{\tr v_n}_{L^2(\Sigma)^4} = \ps{(\alpha\cdot\sfD) \big(E(f)\big)}{v_{n}}_{L^2(\Omega)^4} - \ps{E(f)}{(\alpha\cdot\sfD)v_{n}}_{L^2(\Omega)^4},
	\label{eqn:green_weaktrace}
\end{equation}
where $E$ is the extension operators defined in Proposition \ref{prop:tr_th_cla}.

Using the Cauchy-Schwarz inequality and Proposition \ref{prop:prop_incl_ctn} we get
\[
|\ps{(\alpha\cdot\bfn) f}{\tr(v_{n} - v_{m})}_{L^2(\Sigma)^4}| \leq  \|E(f))\|_{H^1(\Omega)^4}\big(\|v_{n} - v_{m}\|_{L^2(\Omega)^4} + \|(\alpha\cdot\sfD)(v_{n} - v_{m})\|_{L^2(\Omega)^4}\big).
\]
Proposition \ref{prop:tr_th_cla} yields the existence of a constant $c>0$ such that
\[
|\ps{(\alpha\cdot\bfn) f}{\tr(v_{n} - v_{m})}_{L^2(\Sigma)^4}| \leq  c\|f\|_{H^{1/2}(\Sigma)^4}\|v_n - v_m\|_\mathcal{K}.
\]
As a multiplication operator from $L^2(\Sigma)^4$ onto itself, $\alpha\cdot\bfn$ is self-adjoint and we get:
\[
\|(\alpha\cdot\bfn) \tr(v_{n} - v_{m})\|_{H^{-1/2}(\Sigma)^4} = \sup_{f\in H^{1/2}(\Sigma)^4, f\neq0}\frac{\big|\ps{f}{(\alpha\cdot\bfn)\tr(v_{n} - v_{m})}_{L^2(\Sigma)^4}\big|}{\|f\|_{H^{1/2}(\Sigma)^4}} \leq  c\|v_n - v_m\|_\mathcal{K}.
\]
As $(v_{n})_{n\in\mathbb{N}}$ converges in the $\|\cdot\|_\mathcal{K}$-norm, $\big((\alpha\cdot\bfn)\tr v_{n}\big)_{n\in\mathbb{N}}$ is a Cauchy-sequence and converges to an element in $H^{-1/2}(\Sigma)^4$. Now, remark that for all $x\in\Sigma$, $(\alpha\cdot\bfn(x))^2 = Id$ and, as $\Sigma$ is $\mathcal{C}^2$-smooth, $\alpha\cdot \bfn (x)$ has $\mathcal{C}^1$-coefficients. Thus, the multiplication by $\alpha\cdot\bfn$ extends into a linear bounded operator from $H^{1/2}(\Sigma)^4$ onto itself and $(\tr v_n)_{n\in\mathbb{N}}$ is  Cauchy sequence in $H^{-1/2}(\Sigma)^4$.

Starting from \eqref{eqn:green_weaktrace} with $v_{n}$ instead of $v_{n} - v_{m}$ and reproducing the same argument we get:

\[
\|\tr v_{n}\|_{H^{-1/2}(\Sigma)^4} = \sup_{f\in H^{1/2}(\Sigma)^4, f\neq0}\frac{\big|\ps{f}{\tr v_{n}}_{L^2(\Sigma)^4}\big|}{\|f\|_{H^{1/2}(\Sigma)^4}} \leq  c\|v_n\|_\mathcal{K}.
\]
Letting $n\rightarrow+\infty$ we finally obtain the continuity of the trace operator:
\[
\|\tr v\|_{H^{-1/2}(\Sigma)^4} = \sup_{f\in H^{1/2}(\Sigma)^4, f\neq0}\frac{\big|\ps{f}{\tr v_{n}}_{L^2(\Sigma)^4}\big|}{\|f\|_{H^{1/2}(\Sigma)^4}} \leq  c\|v\|_\mathcal{K}.
\]
\end{proof}
%
As a direct corollary, we can extend the Green's formula as follows.
\begin{cor}Let $u\in H(\alpha,\Omega)$ and $v\in H^1(\Omega)^4$, we have:
\[
	\ps{(\alpha\cdot\sfD) u}{ v}_{L^2(\Omega)^4} - \ps{u}{(\alpha\cdot\sfD) v}_{L^2(\Omega)^4} = \ps{(-i\alpha\cdot\bfn)\tr u}{\tr v}_{H^{-1/2}(\Sigma)^4,H^{1/2}(\Sigma)^4}.
\]
\label{prop:Green_weak}
\end{cor}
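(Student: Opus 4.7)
The plan is to use the density result of Proposition \ref{prop:dens_K} to reduce to the smooth case handled by Lemma \ref{lem:Green_dir}, and then pass to the limit, the only subtlety being the interpretation of the boundary pairing.

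First I would pick a sequence $(u_n)_{n\in\mathbb{N}} \subset \mathcal{C}_0^\infty(\overline{\Omega})^4$ with $u_n \to u$ in $\mathcal{K} = H(\alpha,\Omega)$, provided by Proposition \ref{prop:dens_K}. Since each $u_n$ lies in $H^1(\Omega)^4$ and $v \in H^1(\Omega)^4$, Lemma \ref{lem:Green_dir} (in its $H^1 \times H^1$ version) yields
\[
\ps{(\alpha\cdot\sfD)u_n}{v}_{L^2(\Omega)^4} - \ps{u_n}{(\alpha\cdot\sfD)v}_{L^2(\Omega)^4} = \ps{(-i\alpha\cdot\bfn)\tr u_n}{\tr v}_{L^2(\Sigma)^4}
\]
for every $n$. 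The two bulk terms on the left-hand side converge to the correct limits because $u_n \to u$ and $(\alpha\cdot\sfD)u_n \to (\alpha\cdot\sfD)u$ in $L^2(\Omega)^4$, while $v$ and $(\alpha\cdot\sfD)v$ are fixed elements of $L^2(\Omega)^4$.

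The main point, and what could be considered the key step, is handling the boundary term. By Proposition \ref{prop:transmweak}, $\tr u_n \to \tr u$ in $H^{-1/2}(\Sigma)^4$. Since $\Sigma$ is $\mathcal{C}^2$, multiplication by $-i\alpha\cdot\bfn$ (which has $\mathcal{C}^1$ coefficients and is pointwise unitary) is a bounded operator both on $H^{1/2}(\Sigma)^4$ and, by duality, on $H^{-1/2}(\Sigma)^4$; hence $(-i\alpha\cdot\bfn)\tr u_n \to (-i\alpha\cdot\bfn)\tr u$ in $H^{-1/2}(\Sigma)^4$. On the other hand, $\tr v \in H^{1/2}(\Sigma)^4$ is fixed. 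For smooth $u_n$ the boundary term on the right-hand side is the $L^2(\Sigma)^4$ scalar product, which coincides with the duality pairing $\ps{\cdot}{\cdot}_{H^{-1/2}(\Sigma)^4, H^{1/2}(\Sigma)^4}$ when its left entry is viewed as a distribution on $\Sigma$. Passing to the limit in this duality bracket then gives
\[
\ps{(-i\alpha\cdot\bfn)\tr u_n}{\tr v}_{L^2(\Sigma)^4} \longrightarrow \ps{(-i\alpha\cdot\bfn)\tr u}{\tr v}_{H^{-1/2}(\Sigma)^4, H^{1/2}(\Sigma)^4},
\]
which combined with the convergence of the bulk terms yields the claimed identity.

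The only mild obstacle is making sure the convention for the duality bracket is consistent with the antilinear-in-second-variable convention adopted for the $L^2$ inner products; since this is purely notational, no additional argument is needed beyond checking that the pairing on smooth functions extends continuously in the first variable for fixed $H^{1/2}$ data in the second, which is exactly the definition of the $H^{-1/2}$ norm stated in the preliminaries.
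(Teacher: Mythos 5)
Your proof is correct and is precisely the argument the paper intends: the corollary is stated as a direct consequence of Proposition \ref{prop:transmweak}, i.e.\ one approximates $u$ by $\mathcal{C}_0^\infty(\overline{\Omega})^4$ functions via Proposition \ref{prop:dens_K}, applies Lemma \ref{lem:Green_dir}, and passes to the limit using the continuity of $\mathfrak{t}_\Sigma:H(\alpha,\Omega)\to H^{-1/2}(\Sigma)^4$ and the boundedness of multiplication by $\alpha\cdot\bfn$ on $H^{\pm1/2}(\Sigma)^4$. Your handling of the boundary term, identifying the $L^2(\Sigma)^4$ pairing with the $H^{-1/2}$--$H^{1/2}$ duality for smooth data, is exactly the right justification.
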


\subsubsection{Regularisation {\it via} traces}
\label{subsec:regu_tra}
In this part we prove that if the trace of a function in the Sobolev space $H(\alpha,\Omega)$ is sufficiently regular, then $u$ belongs to the usual space $H^1(\Omega)^4$.
\begin{prop} Let $u\in H(\alpha,\Omega)$. Assume that $\tr u\in H^{1/2}(\Sigma)^4$, then we have $u\in H^{1}(\Omega)^4$.
\label{prop:regul_trac}
\end{prop}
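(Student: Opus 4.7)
The plan is to reduce the claim to the case of zero trace by subtracting off an $H^1$-extension of $\tr u$, and then to exploit that a function with vanishing weak trace can be extended by zero across $\Sigma$ while remaining in the Dirac Sobolev space on the whole of $\mathbb{R}^3$; Lemma \ref{ref:lemdis} then promotes it to $H^1$.

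More precisely, I would proceed as follows. Using the extension operator $E$ from Proposition \ref{prop:tr_th_cla}, set $v = E(\tr u)\in H^1(\Omega)^4$ and $w = u-v$. By Proposition \ref{prop:prop_incl_ctn} we have $v\in H(\alpha,\Omega)$, hence $w\in H(\alpha,\Omega)$, and by linearity of the weak trace (Proposition \ref{prop:transmweak}) together with the fact that $\tr v=\tr u$ on $H^1(\Omega)^4$-functions, one gets $\tr w = 0$ in $H^{-1/2}(\Sigma)^4$. It is then enough to prove $w\in H^1(\Omega)^4$.

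Let $w_0$ denote the extension of $w$ by zero to $\mathbb{R}^3$ and let $W$ denote the extension by zero of $(\alpha\cdot\sfD)w$. Both lie in $L^2(\mathbb{R}^3)^4$. I will show $(\alpha\cdot\sfD)w_0 = W$ in $\mathcal{D}'(\mathbb{R}^3)^4$. Take $\varphi\in\mathcal{C}_0^\infty(\mathbb{R}^3)^4$; then $\varphi|_\Omega\in H^1(\Omega)^4$, and the weak Green's identity of Corollary \ref{prop:Green_weak} applied to $w\in H(\alpha,\Omega)$ and $\varphi|_\Omega$ yields
\[
\ps{w}{(\alpha\cdot\sfD)\varphi}_{L^2(\Omega)^4}=\ps{(\alpha\cdot\sfD)w}{\varphi}_{L^2(\Omega)^4}-\ps{(-i\alpha\cdot\bfn)\tr w}{\tr\varphi}_{H^{-1/2}(\Sigma)^4,H^{1/2}(\Sigma)^4}.
\]
Since $\tr w=0$, the boundary term vanishes, and rewriting the integrals as integrals on $\mathbb{R}^3$ gives
\[
\ps{w_0}{(\alpha\cdot\sfD)\varphi}_{L^2(\mathbb{R}^3)^4} = \ps{W}{\varphi}_{L^2(\mathbb{R}^3)^4},
\]
which is exactly $(\alpha\cdot\sfD)w_0=W$ in $\mathcal{D}'(\mathbb{R}^3)^4$. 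Hence $w_0\in L^2(\mathbb{R}^3)^4$ with $(\alpha\cdot\sfD)w_0\in L^2(\mathbb{R}^3)^4$, so by Lemma \ref{ref:lemdis} we conclude $w_0\in H^1(\mathbb{R}^3)^4$. Restricting to $\Omega$ gives $w\in H^1(\Omega)^4$ and therefore $u=w+v\in H^1(\Omega)^4$.

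The only delicate step is the vanishing of the boundary term in Green's identity: it relies on the fact that the weak trace defined in Proposition \ref{prop:transmweak} agrees with the classical trace on $H^1$-functions (so that $\tr w=\tr u-\tr v=0$ makes sense), and on the identification of the distributional derivative of $w_0$ with the zero-extension of $(\alpha\cdot\sfD)w$, for which the duality pairing in Corollary \ref{prop:Green_weak} is precisely what is needed. Everything else is a bookkeeping of extensions and traces, and the result then follows immediately from Lemma \ref{ref:lemdis}.
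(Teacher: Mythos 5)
Your proposal is correct and follows essentially the same route as the paper: reduce to the case $\tr u=0$ by subtracting $E(\tr u)$, extend by zero to $\mathbb{R}^3$, use the weak Green's identity of Corollary \ref{prop:Green_weak} to identify the distributional derivative of the extension (the boundary term vanishing because the trace is zero), and conclude with Lemma \ref{ref:lemdis}. The only difference is that you work with $\alpha\cdot\sfD$ where the paper writes $\mathcal{H}$, which is immaterial since they differ by the bounded multiplication by $\mu\beta$.
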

\begin{proof}[Proof of Proposition \ref{prop:regul_trac}]
Let $u\in H(\alpha,\Omega)$ be such that $\tr uÊ\in H^{1/2}(\Sigma)^4$. Let us replace $u$ by $u - E(\tr u)$, with $E$ the extension operator of Proposition \ref{prop:tr_th_cla}. Hence, we can assume that $\tr u =0$. Let $u_0$ and $w_0$ be the extension of $u$ and $\mathcal{H} u $ by zero to the whole space $\mathbb{R}^3$, respectively. We have:
\[
	\ps{\mathcal{H} u_0}{v}_{\mathcal{D}'(\mathbb{R}^3)^4,\mathcal{D}(\mathbb{R}^3)^4} = \ps{u_0}{\mathcal{H} v}_{\mathcal{D}'(\mathbb{R}^3)^4,\mathcal{D}(\mathbb{R}^3)^4} = \ps{u}{\mathcal{H} v}_{L^2(\Omega)^4}.
\]
Thanks to Corollary \ref{prop:Green_weak}, we get
\[
\ps{u}{\mathcal{H} v}_{L^2(\Omega)^4} = \ps{\mathcal{H} u}{v}_{L^2(\Omega)^4} = \ps{w_0}{v}_{\mathcal{D}'(\mathbb{R}^3)^4,\mathcal{D}(\mathbb{R}^3)^4}.
\]
Thus, we obtain the following equality in $\mathcal{D}'(\mathbb{R}^3)^4$:
\[
\mathcal{H} u_0 =w_0.
\]
The right-hand side is in $L^2(\mathbb{R}^3)^4$ and thus $u_0\in H^1(\mathbb{R}^3)^4$ by Lemma \ref{ref:lemdis}. We end up with $u\in H^1(\Omega)^4$.
\end{proof}

\subsection{Boundary integral operators}
\label{subsec:proof_th}

In this subsection we aim to prove Theorem \ref{thm:regu_distri}. We follow the usual strategy to prove regularity properties of the usual single and double layer potential (see, for instance, \cite{Cos88} and the book \cite[Chapter 3]{SS11}).

First, in Part \ref{subsec:bound_int} we study an operator which is reminescent of the Newtonian potential. Second, in Part \ref{subsec:proof_th1}, we give a new definition of the layer potential that extends \eqref{eqn:dfn_laypot} and prove Theorem \ref{thm:regu_distri}.

\subsubsection{Layer potential of the Dirac operator}
\label{subsec:bound_int}
Let $R>0$ be such that $\Sigma\subset B(R)$. We introduce the open domain $\Omegat $ of $\mathbb{R}^3$ as
\[
	\Omegat := 	\left\{
					\begin{array}{ll}
					\Omega\cap B(R)&\text{if } \mu=0,\\
					\Omega & \text{otherwise}
					\end{array}
				\right..
\]
All along this section the Sobolev space $H(\alpha,\Omegat)$ will be denoted $\mathcal{K}$. According to Remark \ref{rmk:eq_dir_sob}, for the sake of simplicity, we choose in this section the operator norm
\[
	\|u\|_{\mathcal{K}}^2 = \|u\|_{L^2(\Omegat)^4}^2 + \|\mathcal{H}u\|_{L^2(\Omegat)^4}^2,\quad \text{ for all } uÊ\in \mathcal{K}.
\]
Let $f\in\mathcal{S}\big(\mathbb{R}^3)^4$, we define
\[
		(\widehat{V} f)(x) := \int_{\mathbb{R}^3} \phi(x-y)f(y)\dd y,\quad x \in \mathbb{R}^3.
\]
By definition $\widehat V f \in \mathcal{S}(\mathbb{R}^3)^4$. For $f,g\in\mathcal{S}(\mathbb{R}^3)^4$ we have
\begin{equation}
	\ps{\widehat{V} f}{g}_{L^2(\mathbb{R}^3)^4} = \int_{\mathbb{R}^3}\int_{\mathbb{R}^3}\ps{f(x)}{{\phi(y-x)g(y)}}_{\mathbb{C}^4}\dd y\dd x = \ps{f}{\widehat{V} g}_{L^2(\mathbb{R}^3)^4},
	\label{eqn:sym_V}
\end{equation}
hence we can define $\widehat{V} : \mathcal{S}'(\mathbb{R}^3)^4 \rightarrow \mathcal{S}'(\mathbb{R}^3)^4$. Now, if $f\in L^2(\Omegat)^4$, $f_0$ denotes its extension by $0$ to $L^2(\mathbb{R}^3)^4$. We have $f_0 \in \mathcal{S}'(\mathbb{R}^3)^4$ and we can define the potential $\widehat{V}(f_0)\in\mathcal{S}'(\mathbb{R}^3)^4$.

The aim of this subsection is to prove the following proposition.
\begin{prop}
	The operator
\[
	V :\left\{	\begin{array}{lcl}
							L^{2}(\Omegat)^4 	& \rightarrow 	& H^{1}(\Omegat)^4\\
							f				& \mapsto		& (\widehat{V}f_0)|_{\Omegat}
						\end{array}\right.
\]
defines a bounded linear operator.
	\label{prop:pot_regul}
\end{prop}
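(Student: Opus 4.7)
The plan is to factor the potential through a scalar convolution and then apply elliptic/Fourier smoothing. Since $\phi = \mathcal{H}(\psi\,\mathrm{Id})$ by construction, the operator $\widehat{V}$ satisfies, at least formally,
\[
\widehat{V} f_0 \;=\; \phi * f_0 \;=\; \mathcal{H}\bigl((\psi * f_0)\,\mathrm{Id}\bigr) \qquad \text{in } \mathcal{S}'(\mathbb{R}^3)^4,
\]
where $\psi$ acts componentwise on $f_0$. Because $\mathcal{H}$ is a first-order constant-coefficient differential operator, it maps $H^2 \to H^1$ continuously, and so the proof of Proposition \ref{prop:pot_regul} reduces to showing that the scalar convolution $w := \psi * f_0$ lies in $H^2$ (globally, or on a bounded localization of $\widetilde{\Omega}$ according to the case) with norm controlled by $\|f\|_{L^2(\widetilde{\Omega})^4}$.

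For $\mu \neq 0$ (in which case $\widetilde{\Omega} = \Omega$), I would argue globally on $\mathbb{R}^3$ via the Fourier transform. The function $\psi$ is the fundamental solution of $-\Delta + \mu^2$, so $\mathcal{F}(w)(\xi) = (|\xi|^2 + \mu^2)^{-1}\,\mathcal{F}(f_0)(\xi)$. The elementary pointwise bound
\[
\frac{1+|\xi|^2}{|\xi|^2 + \mu^2} \;\le\; \max\bigl(1,\mu^{-2}\bigr), \qquad \xi \in \mathbb{R}^3,
\]
yields $\|w\|_{H^2(\mathbb{R}^3)^4} \le C_\mu\,\|f_0\|_{L^2(\mathbb{R}^3)^4} = C_\mu\,\|f\|_{L^2(\Omega)^4}$. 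Applying $\mathcal{H}$ and then restricting to $\Omega$ gives $V f \in H^1(\Omega)^4$ with the announced bound.

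For $\mu = 0$, the Fourier multiplier $|\xi|^{-2}$ is not locally integrable and $w = \psi * f_0$ need not be in $L^2(\mathbb{R}^3)$; this is precisely why the statement restricts to the bounded set $\widetilde{\Omega} = \Omega \cap B(R)$. The strategy is then local: since $f_0$ is supported in $\widetilde{\Omega} \subset B(R)$, a direct Cauchy--Schwarz estimate of
\[
|w(x)| \;\le\; \frac{1}{4\pi}\int_{B(R)} \frac{|f(y)|}{|x-y|}\,\dd y, \qquad x \in B(2R),
\]
combined with the uniform bound of $\int_{B(R)}|x-y|^{-2}\,\dd y$ for $x \in B(2R)$, produces $\|w\|_{L^\infty(B(2R))^4} \le C(R)\|f\|_{L^2(\widetilde{\Omega})^4}$, hence $w \in L^2(B(2R))^4$ with the same control. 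Combining this with the distributional identity $-\Delta w = f_0 \in L^2(B(2R))^4$ and the standard interior elliptic regularity for the Laplacian (applied on the enlarged ball $B(2R)$) yields $w \in H^2(B(R))^4$ together with the estimate $\|w\|_{H^2(B(R))^4} \le C(R)\|f\|_{L^2(\widetilde{\Omega})^4}$. Restricting to $\widetilde{\Omega} \subset B(R)$ and applying $\mathcal{H}$ concludes.

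The main technical point is the case $\mu = 0$: neither global Fourier analysis nor a direct Young-type convolution inequality can be applied because of the long-range behaviour of $\psi_0 = (4\pi|x|)^{-1}$. The fix is precisely the truncation $\widetilde{\Omega} = \Omega \cap B(R)$ built into the statement, which localises both the data $f_0$ and the target and lets the interior elliptic estimate be used as a black box. All constants produced depend only on $\mu$ (or on $R$ when $\mu = 0$) and not on $f$, which delivers the boundedness of $V : L^2(\widetilde{\Omega})^4 \to H^1(\widetilde{\Omega})^4$.
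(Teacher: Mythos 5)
Your proof is correct. For $\mu\neq0$ it is essentially the paper's argument, merely factored through the scalar resolvent: you bound $\psi*f_0$ in $H^2$ by the multiplier estimate $(1+|\xi|^2)/(|\xi|^2+\mu^2)\le\max(1,\mu^{-2})$ and then apply the first-order operator $\mathcal{H}$, whereas the paper applies the multiplier $(\alpha\cdot\xi+\mu)/(|\xi|^2+\mu^2)$ directly; these are the same computation. For $\mu=0$ your route genuinely differs. The paper splits the $H^1$-norm: the gradient is controlled globally by Plancherel (the symbol $|\xi|\,\alpha\cdot\xi/|\xi|^2$ is bounded), and the $L^2$-part is controlled locally by inserting a cut-off $\chi$ and applying Young's inequality to the integrable kernel $\chi\phi$. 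You instead localise everything at once: the pointwise Cauchy--Schwarz bound $|w(x)|\le C(R)\|f\|_{L^2}$ for $x\in B(2R)$ (valid since $\int_{B(R)}|x-y|^{-2}\,\dd y$ is uniformly bounded) gives $w=\psi_0*f_0\in L^2(B(2R))^4$, and then the distributional identity $-\Delta w=f_0$ together with interior elliptic regularity on $B(2R)\supset B(R)$ yields $w\in H^2(B(R))^4$ with the right estimate, after which $\mathcal{H}:H^2\to H^1$ finishes. Your version is conceptually uniform in $\mu$ (everything reduces to the scalar Newtonian/Yukawa potential) and avoids the global Fourier step for the gradient, at the price of invoking the interior elliptic estimate as a black box; the paper's version stays at the level of Plancherel plus Young and is in that sense more elementary. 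Both deliver the same conclusion with constants depending only on $\mu$, respectively on $R$.
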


Note that by symmetry and Proposition \ref{prop:pot_regul}, we know that $V$ is a bounded linear operator from $\widetilde{H}^{-1}(\Omegat)^4 := \big(H^1(\Omegat)^4\big)'$ to $L^2(\Omegat)^4$.

\begin{proof}[Proof of Proposition \ref{prop:pot_regul}] Let $f\in L^2(\Omegat)^4$ and $f_0$ be its extension by zero to the whole $\mathbb{R}^3$. Let us assume that $\mu\neq0$. As $\phi$ is a fundamental solution of $\mathcal{H}(\mu)$, for $\xi\in\mathbb{R}^3$ we have
\[
	\mathcal{F}(\phi)(\xi) = \frac{\alpha\cdot\xi + \mu}{|\xi|^2 + \mu^2}.
\]
Thus, for all $\xi\in\mathbb{R}^3$, we get
\[
	\mathcal{F}(\hat{V}f_0)(\xi) = \mathcal{F}(\phi * f_0)(\xi) = \mathcal{F}(\phi)(\xi)\mathcal{F}(f_0)(\xi) = \frac{\alpha\cdot\xi + \mu}{|\xi|^2 + \mu^2} \mathcal{F}(f_0)(\xi).
\]
Hence, there exists $C>0$ such that
\[
	(1 + |\xi|^2)^{1/2}\|\mathcal{F}(\hat{V}f_0)(\xi)\|_{\mathbb{C}^4} \leq C \|\mathcal{F}(f_0)(\xi)\|_{\mathbb{C}^4},
\]
which yields
\[
	\|\hat{V}f_0\|_{H^1(\mathbb{R}^3)^4} \leq C \|f\|_{L^2(\Omegat)^4}.
\]
By definition, $\|Vf\|_{H^1(\Omegat)^4}\leq\|\hat{V}f_0\|_{H^1(\mathbb{R}^3)^4}$ and the proposition is proven.

If $\mu=0$, following the same strategy, one can prove that there exists $C>0$ such that
\[
	\|\nabla (Vf)\|_{L^2(\Omegat)^4} \leq \|\nabla (\hat{V}f_0)\|_{L^2(\mathbb{R}^3)^4} = \bigg(\int|\xi|^2\|\mathcal{F}(\hat{V}f_0)\|_{\mathbb{C}^4}^2\dd\xi\bigg)^{1/2}\leq C \|f\|_{L^2(\Omegat)^4}.
\]
Now, let $\chi$ be a $\mathcal{C}_0^\infty$-smooth cut-off function non-negative and non-increasing such that $\chi(r) = 1$ for $r\in[0,2R]$ and $\chi(r)=0$ if $r>3R$. Define
\begin{equation}
	u_{\chi} (x) := \int_{\Omegat}\chi(|x-y|)\phi(x-y)f(y)\dd y,\quad \text{for } x\in\mathbb{R}^3.
	\label{eqn:defmu_m}
\end{equation}
As $\chi(|x-y|) = 1$ for $x,y\in\Omegat$, we get $u_{\chi}(x) = (Vf) (x)$ for all $x\in\Omegat$. Moreover, we have
\[
	\|Vf\|_{L^2(\Omegat)^4} = \|u_\chi\|_{L^2(\Omegat)^4} \leq \|u_\chi\|_{L^2(\mathbb{R}^3)^4} =\|(\chi\phi)*f_0\|_{L^2(\mathbb{R}^3)^4} \leq \|\chi\Phi\|_{L^1(\mathbb{R}^3;\ \mathbb{C}^{4\times4})}\|f\|_{L^2(\Omegat)^4},
\]
where we used Young's inequality because, thanks to the cut-off, $\chi\phi$ is in the space of integrable functions with values in $\mathbb{C}^{4\times4}$. It concludes the proof.
\end{proof}

\subsubsection{Proof of Theorem \ref{thm:regu_distri}}
\label{subsec:proof_th1}
In this part, we finally prove Theorem \ref{thm:regu_distri}. To do so, we give a new definition of the layer operator \eqref{eqn:dfn_laypot}.
\begin{dfn} The layer potential $\mathcal{L}$ of the Dirac operator is defined as
\[\mathcal{L} = V \circ \tr',
\]
where $\tr$ is the trace operator defined in Proposition \ref{prop:tr_th_cla}.
\label{prop:comp_Phi}
\end{dfn}

\begin{remark} By definition, $\mathcal{L}$ is a linear bounded operator from $H^{-1/2}(\Sigma)^4$ to $L^2(\Omegat)^4$.
\end{remark}
The following proposition states that $\mathcal{L}$ is an extension of $\Phi$.
\begin{prop} Let $g\in\mathcal{C}^\infty(\Sigma)^4$, we have $\mathcal{L}g = \Phi(g)$ in $\Omegat$.
\label{prop:coin_def}
\end{prop}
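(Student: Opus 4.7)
The plan is to verify $\mathcal{L}g = \Phi(g)$ in $L^2(\Omegat)^4$ by testing against an arbitrary $\varphi \in \mathcal{C}_0^\infty(\Omegat)^4$, and then to upgrade to pointwise equality using the smoothness of $\Phi(g)$ on the open set $\Omegat$.

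First, I unfold the dualities implicit in $\mathcal{L} = V\circ \tr'$. By the remark following Proposition \ref{prop:pot_regul}, the extension $V\colon \widetilde{H}^{-1}(\Omegat)^4 \to L^2(\Omegat)^4$ is characterised by
\[
\ps{VT}{\varphi}_{L^2(\Omegat)^4} = \ps{T}{V\varphi}_{\widetilde{H}^{-1}(\Omegat)^4, H^{1}(\Omegat)^4},\qquad T\in\widetilde{H}^{-1}(\Omegat)^4,\ \varphi\in L^2(\Omegat)^4.
\]
Taking $T = \tr' g$ and $\varphi \in \mathcal{C}_0^\infty(\Omegat)^4 \subset L^2(\Omegat)^4$ yields
\[
\ps{\mathcal{L}g}{\varphi}_{L^2(\Omegat)^4}
= \ps{\tr' g}{V\varphi}_{\widetilde{H}^{-1}(\Omegat)^4, H^{1}(\Omegat)^4}
= \ps{g}{\tr(V\varphi)}_{H^{-1/2}(\Sigma)^4, H^{1/2}(\Sigma)^4},
\]
and since $g\in\mathcal{C}^\infty(\Sigma)^4\subset L^2(\Sigma)^4$ this duality pairing coincides with the $L^2(\Sigma)^4$-scalar product.

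Second, I rewrite this quantity as a pairing against $\Phi(g)$ by Fubini and a symmetry of the kernel. Because $\varphi$ has compact support in $\Omegat$ and $\phi$ is locally integrable on $\mathbb{R}^3$ (it has an $|x|^{-2}$-type singularity at the origin), the convolution $V\varphi(x) = \int_{\Omegat}\phi(x-y)\varphi(y)\,\dd y$ extends continuously up to $\Sigma$, so $\tr(V\varphi)(z) = \int_{\Omegat}\phi(z-y)\varphi(y)\,\dd y$ pointwise for $z\in\Sigma$. From the explicit form $\phi(x) = -i\,\alpha\cdot\nabla\psi(x) + \mu\beta\psi(x)$, the radiality of $\psi$, and the Hermiticity of the $\alpha_j$ and of $\beta$, one checks $\phi(-x) = \phi(x)^*$. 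Combining Fubini with this identity,
\[
\ps{g}{\tr(V\varphi)}_{L^2(\Sigma)^4}
= \int_{\Omegat}\ps{\,\textstyle\int_\Sigma \phi(y-z)g(z)\,\dd\mathfrak{s}(z)}{\varphi(y)}_{\mathbb{C}^4}\dd y
= \ps{\Phi(g)}{\varphi}_{L^2(\Omegat)^4}.
\]

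The two steps together give $\mathcal{L}g = \Phi(g)$ in $L^2(\Omegat)^4$; since $g\in\mathcal{C}^\infty(\Sigma)^4$ ensures that $\Phi(g)$ is smooth on the open set $\Omegat$, the identity holds pointwise throughout $\Omegat$. The only technical points are the reduction of the $H^{\pm 1/2}(\Sigma)$-duality pairing to an $L^2(\Sigma)$-integral for smooth data and the applicability of Fubini; both follow at once from the regularity of $g$ and $\varphi$ together with the local integrability of $\phi$.
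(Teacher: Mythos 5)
Your proof is correct and follows essentially the same route as the paper's: test $\mathcal{L}g$ against $\varphi\in\mathcal{C}_0^\infty(\Omegat)^4$, unwind the duality $\ps{\mathcal{L}g}{\varphi}=\ps{g}{\tr(V\varphi)}$, reduce the pairing to an $L^2(\Sigma)$-integral using the smoothness of $g$, and apply Fubini together with the kernel symmetry $\phi(x-y)^*=\phi(y-x)$ to recognise $\ps{\Phi(g)}{\varphi}$. The upgrade from the $L^2$-identity to equality in $\Omegat$ via smoothness of $\Phi(g)$ on the open set is also how the paper concludes.
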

The proof of Proposition \ref{prop:coin_def} is inspired of what is usually done for the classical single and double layer potentials (see, for instance, \cite[Thm. 3.1.6 b)]{SS11}) 
\begin{proof}[Proof of Proposition \ref{prop:coin_def}] Let $g\in\mathcal{C}^\infty(\Sigma)^4$ and $u\in\mathcal{C}_0^\infty(\Omegat)^4$. We have:
\[
	\ps{\mathcal{L} g}{u}_{L^2(\Omegat)^4} = \ps{g}{\tr V u}_{H^{-1/2}(\Sigma)^4,H^{1/2}(\Sigma)^4} = \ps{g}{\tr V u}_{L^2(\Sigma)^4},
\]
where the last equality holds because $g$ is smooth. A computation yields:
\[
	\begin{split}
	\ps{g}{\tr V u}_{L^2(\Sigma)^4} & = \int_{x\in\Sigma} \ps{g(x)}{\int_{y\in\mathbb{R}^3}\phi(x-y)u(y)\dd y}_{\mathbb{C}^4}\dd\mathfrak{s}(x)\\
	& = \int_{y\in\mathbb{R}^3}\int_{x\in\Sigma}\ps{\phi(y-x)g(x)}{u(y)}\dd\mathfrak{s}(x)\dd y\\
	& = \ps{\Phi(g)}{u}_{L^2(\Omegat)^4},
	\end{split}
\]
where we used that $\phi(x-y)^* = \phi(y-x)$ as well as Fubini's theorem.

Now, take $x\in\Omegat$ and $\mathcal{U}$ a compact neighbourhood of $x$. One can check that $\Phi(g)|_\mathcal{U}\in\mathcal{C}^\infty(\mathcal{U})^4$. The restriction of functions $\mathcal{C}_0^\infty(\Omegat)$ to $\mathcal{U}$ is dense in $L^2(\mathcal{U})$ and consequently we get $\Phi(g) = \mathcal{L} g$ on $\mathcal{U}$ in $L^2(\mathcal{U})^4$.

\end{proof}

From now on, we drop the notation $\mathcal{L}$ and keep denoting $\Phi$ the layer potential. Theorem \ref{thm:regu_distri} is an immediate consequence of the following proposition.

\begin{prop} For all $g\in H^{-1/2}(\Sigma)^4$, we have $\mathcal{H}(\Phi(g)) = 0$ in $\mathcal{D}'(\Omegat)^4$. In particular, $\Phi$ is a linear and bounded operator from $H^{-1/2}(\Sigma)^4$ to $H(\alpha,\Omegat)$.
\label{prop:nul_distrib}
\end{prop}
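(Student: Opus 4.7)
The plan is to combine the $L^2$-boundedness of $\Phi$ coming from Definition \ref{prop:comp_Phi} and Proposition \ref{prop:pot_regul} with a density argument reducing the PDE identity $\mathcal{H}(\Phi(g))=0$ to the case of smooth densities, where it follows from the fact that $\phi$ is a fundamental solution of $\mathcal{H}$ and $\Sigma\cap\widetilde\Omega=\emptyset$.

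First I would set up the $L^2$ estimate. By Definition \ref{prop:comp_Phi}, $\Phi=V\circ\mathfrak{t}_\Sigma'$, where $\mathfrak{t}_\Sigma$ is the classical trace operator of Proposition \ref{prop:tr_th_cla}. Since $\mathfrak{t}_\Sigma : H^1(\widetilde\Omega)^4\to H^{1/2}(\Sigma)^4$ is bounded, its adjoint $\mathfrak{t}_\Sigma':H^{-1/2}(\Sigma)^4 \to (H^1(\widetilde\Omega)^4)' = \widetilde H^{-1}(\widetilde\Omega)^4$ is bounded, and by duality (mentioned just after Proposition \ref{prop:pot_regul}) $V$ maps $\widetilde H^{-1}(\widetilde\Omega)^4$ continuously into $L^2(\widetilde\Omega)^4$. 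Composing, there exists $C>0$ such that
\[
\|\Phi(g)\|_{L^2(\widetilde\Omega)^4} \leq C\|g\|_{H^{-1/2}(\Sigma)^4}, \qquad g\in H^{-1/2}(\Sigma)^4.
\]

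Next I would establish $\mathcal{H}\Phi(g)=0$ in $\mathcal{D}'(\widetilde\Omega)^4$ for smooth $g$. If $g\in\mathcal{C}^\infty(\Sigma)^4$, Proposition \ref{prop:coin_def} identifies $\Phi(g)$ on $\widetilde\Omega$ with the classical integral \eqref{eqn:dfn_laypot}. Since $\Sigma$ is disjoint from the open set $\widetilde\Omega$, standard differentiation under the integral sign together with the fact that $\phi$ is a fundamental solution of $\mathcal{H}$ yields, for every $x\in\widetilde\Omega$,
\[
\mathcal{H}\Phi(g)(x) = \int_{y\in\Sigma} \mathcal{H}_x\phi(x-y)\,g(y)\,\mathrm{d}\mathfrak{s}(y) = 0,
\]
because $\mathcal{H}_x\phi(x-y)=0$ away from $x=y$.

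To pass to arbitrary $g\in H^{-1/2}(\Sigma)^4$, I invoke the density of $\mathcal{C}^\infty(\Sigma)^4$ in $H^{-1/2}(\Sigma)^4$. Pick a sequence $g_n\in\mathcal{C}^\infty(\Sigma)^4$ converging to $g$. By the $L^2$-bound above, $\Phi(g_n)\to\Phi(g)$ in $L^2(\widetilde\Omega)^4$, hence also in $\mathcal{D}'(\widetilde\Omega)^4$. Since $\mathcal{H}$ is continuous on $\mathcal{D}'(\widetilde\Omega)^4$ and $\mathcal{H}\Phi(g_n)=0$ for every $n$ by the previous step, we obtain $\mathcal{H}\Phi(g)=0$ in $\mathcal{D}'(\widetilde\Omega)^4$.

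Finally, $\Phi(g)\in L^2(\widetilde\Omega)^4$ and $\mathcal{H}\Phi(g)=0\in L^2(\widetilde\Omega)^4$, so $\Phi(g)\in H(\alpha,\widetilde\Omega)$ with
\[
\|\Phi(g)\|_{H(\alpha,\widetilde\Omega)} = \|\Phi(g)\|_{L^2(\widetilde\Omega)^4} \leq C\|g\|_{H^{-1/2}(\Sigma)^4},
\]
which gives both the claimed continuity and Theorem \ref{thm:regu_distri}. The only subtle point is verifying that $\Phi$ is a well-defined map into $L^2(\widetilde\Omega)^4$ with the right continuity; once the identification $\Phi=V\circ\mathfrak{t}_\Sigma'$ is in place, all that remains is the elementary density/limit argument above.
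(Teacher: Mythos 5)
Your proof is correct, but it takes a different route from the paper's. The paper proves $\mathcal{H}(\Phi(g))=0$ for \emph{all} $g\in H^{-1/2}(\Sigma)^4$ in one stroke by a weak computation: pairing $\mathcal{H}(\Phi(g))$ with a test function $u\in\mathcal{C}_0^\infty(\Omegat)^4$, unwinding the definition $\Phi=V\circ\tr'$ to get $\ps{g}{(\tr\circ V)(\mathcal{H}u)}_{H^{-1/2}(\Sigma)^4,H^{1/2}(\Sigma)^4}$, and then using the two facts that $V\mathcal{H}u=u$ for compactly supported $u$ and that $\tr u=0$. You instead verify the equation classically for smooth densities (differentiation under the integral sign, legitimate since $\mathsf{dist}(K,\Sigma)>0$ for $K\subset\Omegat$ compact, together with $\mathcal{H}\phi=0$ away from the origin), and then pass to general $g$ by density of $\mathcal{C}^\infty(\Sigma)^4$ in $H^{-1/2}(\Sigma)^4$, the $L^2$-continuity of $\Phi$, and the continuity of $\mathcal{H}$ on $\mathcal{D}'(\Omegat)^4$. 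Both arguments rest on the same boundedness $\Phi:H^{-1/2}(\Sigma)^4\to L^2(\Omegat)^4$ coming from the factorization through $V$ and $\tr'$. The paper's version is shorter and avoids invoking Proposition \ref{prop:coin_def} and the density of smooth functions on $\Sigma$; yours avoids the identity $V\mathcal{H}u=u$ (which the paper leaves as an exercise) and makes the mechanism---the kernel is harmonic off the diagonal---more transparent. Both are complete.
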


\begin{proof}[Proof of Proposition \ref{prop:nul_distrib}] Let $g\in H^{-1/2}(\Sigma)^4$ and $u\in\mathcal{C}_0^\infty(\Omegat)$. We have:
\[
	\begin{split}
	\ps{\mathcal{H}(\Phi(g))}{u}_{\mathcal{D}'(\Omegat)^4,\mathcal{D}(\Omegat)^4} = \ps{\Phi(g)}{\mathcal{H}u}_{\mathcal{D}'(\Omegat)^4,\mathcal{D}(\Omegat)^4} & = \ps{\Phi(g)}{\mathcal{H}u}_{L^2(\Omegat)^4}\\ &= \ps{g}{(\tr\circ V) (\mathcal{H}u)}_{H^{-1/2}(\Sigma)^4,H^{1/2}(\Sigma)^4}.
\end{split}
\]
Now, it is an exercise to check that if $u\in\mathcal{C}_0^\infty(\Omegat)$ then $V\mathcal{H} u = u$. Hence, we obtain
\[
	\ps{\mathcal{H}(\Phi(g))}{u}_{\mathcal{D}'(\Omegat)^4,\mathcal{D}(\Omegat)^4} = \ps{g}{\tr u}_{H^{-1/2}(\Sigma)^4,H^{1/2}(\Sigma)^4} = 0,
\]
where the last equality holds because $u$ has compact support in $\Omegat$.
\end{proof}
Now, we can prove Corollary \ref{thm:cont_tr}.
\begin{proof}[Proof of Corollary \ref{thm:cont_tr}]
By definition, $C = \mathfrak{t}_{\Sigma}\circ\Phi$. By composition, thanks to Proposition \ref{prop:transmweak} and Theorem \ref{thm:regu_distri}, we get that $C$ is a linear bounded operator from $H^{-1/2}(\Sigma)^4$ onto itself.
\end{proof}

\subsection{Properties of the Calder\'on projectors}\label{subsec:cald-proj}
In this subsection we prove the propositions stated in Subsection \ref{subsec:Cald_proj} about the Calder\'on projectors. In Part \ref{subsec:alg_rel} we prove Proposition \ref{prop:propcaldproj} and Proposition \ref{prop:regCaldtr}. Proposition \ref{ref:reg_anti} is proven in Part \ref{subsubsec:reg_anticom}, respectively.
\subsubsection{Algebraic relations}
\label{subsec:alg_rel}
The aim of this subsection is to prove Proposition \ref{prop:propcaldproj}. Before going any further, we need to introduce some notations. For a function $f\in\mathcal{C}^\infty(\Sigma)^4$, we introduce the boundary singular integral operator
\begin{equation}
	C_\mathfrak{s}(g)(x) := \lim_{\epsilon \rightarrow 0} \int_{|x-y|>\epsilon} \phi(x-y) g(y)\dd\mathfrak{s}(y).
	\label{eqn:def_bound_op_s}
\end{equation}
Thanks to \cite[Lemmas 3.3. \& 3.7.]{AMV14} we know that $C_\mathfrak{s}$ extends into a linear bounded self-adjoint operator on $L^2(\Sigma)^4$ and that for all $f\in L^2(\Sigma)^4$ we have the Plemelj-Sokhotski relations:
\begin{equation}
	C_\pm(f) = \mp \frac{i}{2}(\alpha\cdot\bfn)f + C_\mathfrak{s}(f),\quad -4\big(C_\mathfrak{s}(\alpha\cdot\bfn)\big)^2(f) = f.
	\label{eqn:Plemelj}
\end{equation}
In particular $C_\pm|_{L^2(\Sigma)^4}$ is a linear bounded operator from $L^2(\Sigma)^4$ onto itself, that we also denote $C_\pm$.

Now, we have all the tools to go through the proof of Proposition \ref{prop:propcaldproj}.
\begin{proof}[Proof of Proposition \ref{prop:propcaldproj}]
In this proof we use the notation introduced in the proposition. The proof is organised into four steps, each step corresponding to each point of Proposition \ref{prop:propcaldproj}. 

\myemph{Proof of {\it i)}} As operators from $L^2(\Sigma)^4$ onto itself, $C_\mp$ is the adjoint of $C_\pm$. Indeed, let $f,g\in L^2(\Sigma)^4$ we have:
\begin{align*}
	\ps{{C}_\pm(f)}{g}_{L^2(\Sigma)^4} 	&= \ps{\mp\frac12(i\alpha\cdot\bfn)f + C_\mathfrak{s}(f)}{g}_{L^2(\Sigma)^4}\\
						&=\ps{f}{\pm\frac12(i\alpha\cdot\bfn)g}_{L^2(\Sigma)^4} + \ps{f}{C_\mathfrak{s}(g)}_{L^2(\Sigma)^4}\\
						&= \ps{f}{C_\mp(g)}_{L^2(\Sigma)^4},
\end{align*}
where we used \cite[Lemmas 3.3. \& 3.7.]{AMV14}. Hence, by duality, if we consider $C_\pm$ as an operator from $H^{-1/2}(\Sigma)^4$ onto itself, its adjoint $C_\pm'$ is a linear bounded operator from $H^{1/2}(\Sigma)^4$ onto itself and acts as $C_\mp$. It yields
\[
	\mathcal{C}_\pm' = \big(\pm C_\pm(i\alpha\cdot\bfn)\big)' = \mp i (\alpha\cdot\bfn) C_\pm' = i (\alpha\cdot\bfn) C_\mp|_{H^{1/2}(\Sigma)^4} = \mathcal{C}_\pm^*|_{H^{1/2}(\Sigma)^4}.
\]

\myemph{Proof of {\it ii)}} As operators in $L^2(\Sigma)^4$, thanks to \eqref{eqn:Plemelj}, we have:
\[
	\mathcal{C}_\pm = \pm i C_\pm\big((\alpha\cdot\bfn)\big) = \frac12 \pm i C_\mathfrak{s}\big((\alpha\cdot\bfn)\big).
\]
Hence, \eqref{eqn:Plemelj} gives:
\[
	\mathcal{C}_\pm^2 =  \frac14 - \big(C_\mathfrak{s}\big((\alpha\cdot\bfn)\big)\big)^2 \pm i C_\mathfrak{s}\big((\alpha\cdot\bfn)\big) =  \frac12 \pm i C_\mathfrak{s}\big((\alpha\cdot\bfn)\big) = \mathcal{C}_\pm.
\]
Since for all $f\in\mathcal{C}^\infty(\Sigma)^4$ we have $\mathcal{C}_\pm^2(f) = \mathcal{C}_\pm(f)$, by density and continuity, this equality also holds in $H^{-1/2}(\Sigma)^4$. The proof of $(\mathcal{C}_\pm')^2 = \mathcal{C}_\pm'$ is handled similarly.

\myemph{Proof of {\it iii)}} Let $f\in\mathcal{C}^\infty(\Sigma)^4$. By definition and \eqref{eqn:Plemelj}, we have:
\[
	\mathcal{C}_+(f) + \mathcal{C}_-(f) =  \frac12 f + i C_\mathfrak{s}\big((\alpha\cdot\bfn) f\big) + \frac12 f - i C_\mathfrak{s}\big((\alpha\cdot\bfn) f\big) = f.
\]
As the last equation holds for any $f\in\mathcal{C}^\infty(\Sigma)^4$, by density and continuity, this is also true in $H^{-1/2}(\Sigma)^4$. Similarly, we obtain $\mathcal{C}_+^* + \mathcal{C}_-^* = {\rm Id}$.

\myemph{Proof of {\it iv)}} By definition and Point {\it i)}, we get:
\[
	(\alpha\cdot\bfn)\mathcal{C}_\pm  = \pm i (\alpha\cdot\bfn)C_\pm(\alpha\cdot\bfn) = \mathcal{C}_\mp^*(\alpha\cdot\bfn).
\]
Doing the composition with $(\alpha\cdot\bfn)$ on the left and on the right and using that $(\alpha\cdot\bfn)^2 = {\rm Id}$ we get the other identity:
\[
	\mathcal{C}_\pm(\alpha\cdot\bfn) = (\alpha\cdot\bfn)\mathcal{C}_\mp^*.
\]
\end{proof}

Now, we have all the tools to prove Proposition \ref{prop:regCaldtr}.
\begin{proof}[Proof of Proposition \ref{prop:regCaldtr}] Let $u\in H(\alpha,\Omega_\pm)$ and $u_n \in \mathcal{C}_0^\infty(\overline{\Omega}_\pm)$ that converges to $u$ in the ${\|\cdot\|_{H(\alpha,\Omega_\pm)}}$-norm. Let $f\in\mathcal{C}^\infty(\Sigma)^4$.

Let us start with $\mu\neq0$. Combining \eqref{eqn:Plemelj}, Corollary \ref{prop:Green_weak} and Proposition \ref{prop:nul_distrib} we have:
\[
	\ps{\mathcal{C}_\mp(\mathfrak{t}_{\Sigma,\pm} u_n)}{f}_{L^2(\Sigma)^4} = \ps{\mathfrak{t}_{\Sigma,\pm} u_n}{\pm i (\alpha\cdot\bfn)C_\pm(f)}_{L^2(\Sigma)^4} = \ps{\mathcal{H}u_n}{\Phi(f)}_{L^2(\Omega_\pm)^4}.
\]
Using the Cauchy-Schwarz inequality it gives:
\begin{equation}
	\begin{array}{lcl}
		|\ps{\mathcal{C}_\mp(\mathfrak{t}_{\Sigma,\pm} u_n)}{f}_{L^2(\Sigma)^4}| 	&\leq& \|\mathcal{H}u_n\|_{L^2(\Omega_\pm)^4}\|\Phi(f)\|_{L^2(\Omega_\pm)^4}\\
												&\leq& c \|u_n\|_{H(\alpha,\Omega_\pm)}\|f\|_{H^{-1/2}(\Sigma)^4},
	\end{array}
\label{eqn:limit_reg_cald}
\end{equation}
where the last inequality holds by Theorem \ref{thm:regu_distri}.
Hence, by density of $\mathcal{C}^\infty(\Sigma)^4$ in $H^{-1/2}(\Sigma)^4$, $\mathcal{C}_\mp(\mathfrak{t}_{\Sigma,\pm} u_n)$ defines a bounded linear form on $H^{-1/2}(\Sigma)^4$ and $\mathcal{C}_\mp(\mathfrak{t}_{\Sigma,\pm} u_n)\in H^{1/2}(\Sigma)^4$. Taking the limit $n\rightarrow +\infty$ in \eqref{eqn:limit_reg_cald}, it proves that $\mathcal{C}_\mp\circ\mathfrak{t}_{\Sigma,\pm}$ defines a bounded linear operator from $H(\alpha,\Omega_\pm)$ to $H^{1/2}(\Sigma)^4$.

Now, if $\mu=0$ and $\Omega_\pm$ is bounded the proof follows the exact same lines. Otherwise, we choose $R>0$ large enough such that
$\Sigma\subset B(0,R)$ and reproduce the proof with $\Omegat:=\Omega_\pm\cap B(3R)$ instead of $\Omega_\pm$ and $\chi \Phi(f)$ instead of $\Phi(f)$ where $\chi$ is a smooth bounded cut-off function such that $\chi(x) = 1$ for all $|x|<R$ and $\chi(x) = 0$ for all $|x|>2R$.	
\end{proof}

\begin{remark} Before going any further, we would like to point out that if $u\in L^2(\Omega)^4$ and is harmonic in $\Omega$, that is $u$ satisfies $\mathcal{H}(\mu)u = 0$, then there exist $c_1,c_2>0$ such that
\begin{equation}
	c_1 \|\tr u\|_{H^{-1/2}(\Sigma)^4} \leq \|u\|_{L^2(\Omega)^4} \leq c_2 \|\tr u\|_{H^{-1/2}(\Sigma)^4}.
	\label{eqn:equiv_norm}
\end{equation}
Roughly speaking, the norm in $L^2(\Omega)^4$ of a harmonic function is equivalent to the norm of its trace in $H^{-1/2}(\Sigma)^4$. Indeed, for $v\in L^2(\Omega)^4$,  we have
\begin{align*}
\ps{u}{v}_{L^2(\Omega)^4} = \ps{u}{\mathcal{H}(m)Vv}_{L^2(\Omega)^4} &= \ps{(i\alpha\cdot\bfn)\tr u}{(\tr \circ V) v}_{H^{-1/2}(\Sigma)^4,H^{1/2}(\Sigma)^4} \\&= \ps{\Phi\big((i\alpha\cdot\bfn)\tr u\big)}{v}_{L^2(\Omega)^4},
\end{align*}
where $V$ is the operator defined in Proposition \ref{prop:pot_regul}. It yields the reproducing formula ${u = \Phi\big((i\alpha\cdot\bfn)\tr u\big)}$ and then $\tr u = \mathcal{C}_+(\tr u)$. By Theorem \ref{thm:regu_distri}, there exists $c_2>0$ such that
\[
	\|u\|_{L^2(\Omega)^4} = \|u\|_{H(\alpha,\Omega)} \leq c_2 \|\tr u\|_{H^{-1/2}(\Sigma)^4}.
\]
Thanks to Proposition \ref{prop:transmweak}, there exists $c_1 > 0$ such that
\[
	c_1\|\tr u\|_{H^{-1/2}(\Sigma)^4} \leq \|u\|_{H(\alpha,\Omega)} = \|u\|_{L^2(\Omega)^4},
\]
which justifies Equation \eqref{eqn:equiv_norm}.
\end{remark}

\subsubsection{Regularisation of the anticommutator}
\label{subsubsec:reg_anticom}
This part deals with the proof of Proposition \ref{ref:reg_anti} but first, we need to introduce the next lemma.

\begin{lem} As operators in $L^2(\Sigma)^4$, the following equality holds
\[
\mathcal{A} = \{\alpha\cdot\bfn, C_{\mathfrak{s}}\} := (\alpha\cdot\bfn)C_\mathfrak{s} + C_\mathfrak{s}(\alpha\cdot\bfn).
\]
\label{lem:def_A}
\end{lem}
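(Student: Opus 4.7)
The plan is to carry out a direct algebraic computation starting from the definitions of $\mathcal{C}_\pm$ and $\mathcal{C}_\pm^*$ together with the Plemelj-Sokhotski relations \eqref{eqn:Plemelj}, since the statement is a pointwise identity in the $L^2(\Sigma)^4$ operator algebra and no analytic regularity argument is needed at this stage.

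First, I would substitute the Plemelj-Sokhotski formula $C_\pm = \mp\tfrac{i}{2}(\alpha\cdot\bfn) + C_\mathfrak{s}$ into the definition $\mathcal{C}_\pm = \pm i C_\pm(\alpha\cdot\bfn)$, and use the identity $(\alpha\cdot\bfn)^2 = {\rm Id}$ to obtain
\[
\mathcal{C}_\pm \;=\; \pm i\Bigl(\mp\tfrac{i}{2}(\alpha\cdot\bfn)(\alpha\cdot\bfn) + C_\mathfrak{s}(\alpha\cdot\bfn)\Bigr) \;=\; \tfrac{1}{2}{\rm Id} \pm i\, C_\mathfrak{s}(\alpha\cdot\bfn).
\]
In the same way, plugging the Plemelj-Sokhotski formula for $C_\mp$ into $\mathcal{C}_\pm^* = \mp i(\alpha\cdot\bfn)C_\mp$ yields
\[
\mathcal{C}_\pm^* \;=\; \mp i(\alpha\cdot\bfn)\Bigl(\pm\tfrac{i}{2}(\alpha\cdot\bfn) + C_\mathfrak{s}\Bigr) \;=\; \tfrac{1}{2}{\rm Id} \mp i\,(\alpha\cdot\bfn)C_\mathfrak{s}.
\]

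Subtracting the two expressions, the $\tfrac{1}{2}{\rm Id}$ terms cancel and I obtain
\[
\mathcal{C}_\pm - \mathcal{C}_\pm^* \;=\; \pm i\bigl( C_\mathfrak{s}(\alpha\cdot\bfn) + (\alpha\cdot\bfn)C_\mathfrak{s}\bigr) \;=\; \pm i\,\{\alpha\cdot\bfn, C_\mathfrak{s}\}.
\]
Comparing with the defining relation \eqref{eqn:def_anticom}, namely $\mathcal{C}_\pm - \mathcal{C}_\pm^* = \pm i\mathcal{A}$, and noting that the sign of the right-hand side changes consistently with the sign on the left (so the two cases $\pm$ give the same $\mathcal{A}$, confirming the sign-independence of $\mathcal{A}$ already asserted in Subsection \ref{subsec:Cald_proj}), I conclude that $\mathcal{A} = \{\alpha\cdot\bfn, C_\mathfrak{s}\}$ as operators on $L^2(\Sigma)^4$.

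There is no real obstacle here; the only point to be attentive to is keeping track of signs and of the fact that the Plemelj-Sokhotski jump relations \eqref{eqn:Plemelj} require $f\in L^2(\Sigma)^4$, so the identity is established first on $L^2(\Sigma)^4$, which is precisely what the lemma claims. Extending the identity to the $H^{-1/2}(\Sigma)^4$ setting (as needed later for Proposition \ref{ref:reg_anti}) would then proceed by density and continuity of all the operators involved, but is not required for the present statement.
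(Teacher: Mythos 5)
Your computation is correct and follows essentially the same route as the paper: substitute the Plemelj--Sokhotski relations \eqref{eqn:Plemelj} into the definitions of $\mathcal{C}_\pm$ and $\mathcal{C}_\pm^*$, observe that the $\tfrac12\,\mathrm{Id}$ terms cancel in the difference, and compare with \eqref{eqn:def_anticom}. The only cosmetic difference is that you simplify $\mathcal{C}_\pm$ and $\mathcal{C}_\pm^*$ separately before subtracting, while the paper subtracts first and then substitutes; the content is identical.
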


\begin{proof}[Proof of Lemma \ref{lem:def_A}] Let $f\in L^2(\Sigma)^4$, we have:
\[
	\begin{array}{lcl}
		\mathcal{C}_\pm(f) - \mathcal{C}_\pm^*(f) 	&=& \pm i C_\pm\big((\alpha\cdot\bfn) f\big) \pm i (\alpha\cdot\bfn)C_\mp(f) \\
										&=&\pm i \Big(C_\pm\big((\alpha\cdot\bfn) f\big) + (\alpha\cdot\bfn)C_\mp(f)\Big).
	\end{array}
\]
Thanks to \eqref{eqn:Plemelj}, last equation becomes:
\[
	\begin{array}{lcl}
		\mathcal{C}_\pm(f) - \mathcal{C}_\pm^*(f) 	&=&\displaystyle \pm i \Big(\mp \frac{i}2 +ÊC_\mathfrak{s}\big((\alpha\cdot\bfn)f\big) \pm \frac{i}2 + (\alpha\cdot\bfn)C_\mathfrak{s}(f)\Big)\\
										&=& \pm i \Big(C_\mathfrak{s}\big((\alpha\cdot\bfn)f\big) + (\alpha\cdot\bfn)C_\mathfrak{s}(f)\Big)\\
										&=&\pm i \{C_\mathfrak{s},\alpha\cdot\bfn\}(f).
	\end{array}
\]
By definition of $\mathcal{A}$ in \eqref{eqn:def_anticom}, it achieves the proof.
\end{proof}
Now, we have all the tools to go through the proof of Proposition \ref{ref:reg_anti}.

\begin{proof}[Proof of Proposition \ref{ref:reg_anti}] We prove that $\mathcal{A}$ is a bounded linear operator from $L^2(\Sigma)^4$ to $H^1(\Sigma)^4$. $\mathcal{A}$ being self-adjoint, Proposition \ref{ref:reg_anti} is obtained by duality and interpolation theory of the Sobolev spaces (see \cite[Part 2.1.7 \& Prop. 2.1.62.]{SS11}).

Remark that $\mathcal{A}$ is a singular integral operator with kernel
\[
	K(x,y) := (\alpha\cdot\bfn(x))\phi(x-y) + \phi(x-y)(\alpha\cdot\bfn(y))
\]
A simple algebraic computation yields
\[
	K(x,y) = \underset{:= K_1(x,y)}{\underbrace{2 (\bfn(x)\cdot\sfD)\psi(x-y)}} +  \underset{:= K_2(x,y)}{\underbrace{\phi(x-y)(\alpha\cdot(\bfn(y) - \bfn(x)))}}.
\]
$K_1$ is {\it a priori} a pseudo-homogeneous kernel of class $0$ (in the sense of \cite[\S 4.3.3]{Ned01}) but as $\Sigma$ is of class $\mathcal{C}^2$ it is actually pseudo-homogeneous of class $-1$. Indeed, we have
\[
	\psi(z) = \frac{e^{-|\mu z|}}{4\pi|z|} = \frac{1}{4\pi|z|} - \frac{|\mu|}{4\pi} + \frac{|\mu|^2|z|}{8\pi} + \dots,
\]
where the first term is homogeneous of class $-1$, the second one is smooth and more generally the $p$-th term is homogeneous of class $-(1+p)$. Now, remark that
\[
	K_1(x,y) = \frac1{4\pi}(\bfn(x)\cdot z)\big(-\frac1{|z|^3} - \frac{|\mu|^2}{8\pi|z|} + \dots\big).
\]
Thanks to \cite[Lemma 3.15]{Fol95}, we know that $\bfn(x)\cdot(x-y)$ behaves as $|x-y|^2$ when $x-y\rightarrow0$. Hence $K_1$ is pseudo-homogeneous of class $-1$ (see also \cite[\S 4.3.3 Ex. 4.5]{Ned01}). Thanks to \cite[Th. 4.3.2]{Ned01}, the singular integral operator of kernel $K_1$ is bounded from $L^2(\Sigma)^4$ to $H^1(\Sigma)^4$.

Now, remark that the study of $K_2$ reduces to the case $\mu = 0$. Indeed, we rewrite the kernel $K_2$ as:
\begin{equation}
	K_2(x,y) := \underset{:=K_{2,0}(x,y)}{\underbrace{\phi_{0}(x-y) (\alpha\cdot(\bfn(y) - \bfn(x)))}} + \underset{:= r(x,y)}{\underbrace{(\phi(x-y) - \phi_0(x-y))(\alpha\cdot(\bfn(y) - \bfn(x)))}}.
\label{eqn:ker_reg}
\end{equation}
We have
\[
	r(x,y) = \underset{:= r_1(x,y)}{\underbrace{(\alpha\cdot\sfD)(\psi_\mu - \psi_0)(x-y)\big(\alpha\cdot(\bfn(y) -\bfn(x))\big)}} + \underset{:=r_2(x,y)}{\underbrace{\mu \beta\psi_\mu(x-y)\big(\alpha\cdot(\bfn(y) -\bfn(x))\big)}}
\]
A computation yields that for all $x\in\mathbb{R}^3\setminus\{0\}$ we have
\[
	\psi_\mu(x) - \psi_0(x) = \frac1{4\pi}\sum_{n\geq0}\frac{|\mu|^{n+1} |x|^n}{(n+1)!},
\]
hence $\psi_\mu - \psi_0 \in \mathcal{C}^\infty(\Sigma)$ and, as $\Sigma$ is of class $\mathcal{C}^2$, $r_1(x,y)\in\mathcal{C}^1(\Sigma,\mathbb{C}^{4\times4})$. Thus, the integral operator of kernel $r_1$ is bounded from $L^2(\Sigma)^4$ to $H^1(\Sigma)^4$.

The kernel $r_2(x,y)$ rewrites
\[
	r_2(x,y) = \mu\sum_{j=1}^3\underset{:= r_{2,j}}{\underbrace{\psi_\mu(x-y)\big(\bfn_j(y) -\bfn_j(x)\big)}}\beta\alpha_j
\]
Remember that the single layer potential is pseudo-homogeneous of class $-1$ (see \cite[\S 4.3.3]{Ned01}). Moreover, as $\Sigma$ is of class $\mathcal{C}^2$, the multiplication by $\bfn_j$ is a bounded operator from $L^2(\Sigma)$ onto $L^2(\Sigma)$ and from $H^1(\Sigma)$ onto $H^1(\Sigma)$. Thus, the integral operator of kernel $r_{2,j}$ is bounded from $L^2(\Sigma)$ to $H^1(\Sigma)$.

The only thing left to prove is that the kernel $K_{2,0}$ introduced in \eqref{eqn:ker_reg} is bounded from $L^2(\Sigma)^4$ to $H^1(\Sigma)^4$.

The kernel $K_{2,0}$ can be rewritten as the sum of coefficients of the form:
\[
	c(x,y)\alpha_q\alpha_k,\text{with } c(x,y):=c_{q,k}(x,y) = i\frac{x_q - y_q}{4\pi|x-y|^3}(\bfn_k(y) - \bfn_k(x)),\quad q,k\in\{1,2,3\}.
\]
Consequently, the boundedness of $\mathcal{A}$ is equivalent to the one of the operators with kernels $c_{q,k}$.
Now, consider an atlas $(\Sigma_j,\Lambda_j)_{j\in\{1,N\}}$ covering the surface $\Sigma$, where $N\in\mathbb{N}^*$. By definition of an atlas, we have
\[
	\Sigma = \bigcup_{j\in\{1,\dots,N\}}\Sigma_j,
\]
and $\Lambda_j$ is a $\mathcal{C}^2$-diffeormorphism that maps $\Sigma_j$ to an open set $
\mathcal{U}_j := \Lambda_j(\Sigma_j)\subset\mathbb{R}^2.
$
We also introduce an adapted smooth partition of unity $(a_j)_{j\in\{1,\dots,N\}}$ such that
\[
	\sum_{j=1}^N a_j(x) = 1,\quad \text{for all } x\in\Sigma \text{ and } \mathsf{supp}(a_j) \subset \Sigma_j.
\]
Let us fix $j\in\{1,\dots,N\}$. For a function $f\in L^2(\Sigma)$ we decompose $f$ as:
\[
	\left\{
		\begin{array}{lcl}
			f & = & \displaystyle\sum_{j=1}^N a_j f,\\
			f_j & = & a_j f.
		\end{array}
	\right.
\]
Now, set $g(x) = \displaystyle\int_{y\in\Sigma} c(x,y) f(y) \dd\mathfrak{s}(y)$. We rewrite $g$ as
\[
		\left\{
		\begin{array}{lcl}
			g & = & \displaystyle\sum_{j=1}^N g^{[j]},\\
			g^{[j]}(x) & = & \displaystyle\int_{y\in\Sigma_j}c(x,y)f_j(y)\dd\mathfrak{s}(y).
		\end{array}
	\right.
\]
We only need to prove the regularity for $g^{[j]}$. Let $b_j$ be a smooth function such that
\[
	\mathsf{supp} (b_j)\subset \Sigma_j,\quad b_j(x) = 1 \text{ for all } x\in\mathsf{supp}(a_j).
\]
We introduce the function
\[
	g_j(x) := \int_{\Sigma_j} b_j(x)c(x,y)f_j(y)\dd\mathfrak{s}(y).
\]
We remark that
\[
	g^{[j]}(x) = g_j(x) + \int_{\Sigma_j}(1-b_j(x))c_{q,k}(x,y)f_j(y)\dd\mathfrak{s}(y),
\]
where the kernel in the last integral has no singularity in $x=y$ and is $\mathcal{C}^1$-smooth. Hence we only need to focus on $g_j$.
We perform the change of coordinates
\begin{equation}
	s = \Lambda_j(x),\quad t=\Lambda_j(y).
	\label{eqn:chg_var}
\end{equation}
We set $x(s):=\Lambda_j^{-1}(s)$ and $y(t):=\Lambda_j^{-1}(y)$. Hence we have:
\[
	g_j\big(x(s)\big) = \int_{\mathbb{R}^2}b_j\big(x(s)\big)c\big(x(s),y(t)\big)f_j\big(y(t)\big)J_j(t)\dd t,
\]
where $J_j$ is the Jacobian associated with the change of variables \eqref{eqn:chg_var}. This function of the variable $s$ has the same regularity as
\[
	h(s) := b_j\big(x(s)\big)\int_{\mathbb{R}^2}c\big(x(s),y(t)\big)\varphi(t)\dd t,
\]
where we set $\varphi(t) := f_j\big(y(t)\big)J_j(t)$. Note that $\varphi\in L^2(\mathbb{R}^2)$ and has compact support in $\mathcal{U}_j$.

Remark that
\begin{equation}
	b_j(x)c(x,y) = L(x - y)\big(b_j(y)\bfn_k(y) - b_j(x)\bfn_k(x)\big) + L(x-y)\big(b_j(x) - b_j(y)\big)\bfn_k(y),
\label{eqn:kern}
\end{equation}
where $L(z) := i\frac{z_q}{4\pi|z|^3}$.

As $\mathsf{supp}(\varphi)\subset \mathcal{U}_j$ and by definition of $b_j$, the second term in the right-hand side of \eqref{eqn:kern} reads, in local coordinates, as a kernel in $\mathcal{C}^1(\mathbb{R}^2)$. Thus as $\varphi$ has compact support this kernel regularises to $H^1(\mathbb{R}^2)$. Let us deal with the other term.

For $z_1,z_2\in\mathbb{R}^2$ such that $z_1\neq z_2$, $L$ expands as:
\[
	L(z_2) = L(z_1) + R(z_1,z_2),\text{ with } R (z_1,z_2) := \int_{0}^1(\nabla L)(z_1 + r (z_2-z_1))(z_2 - z_1)\dd r.
\]
Now, set $z_1 = {\sf d}(\Lambda_j^{-1})(s)(s-t)$ and $z_2 = x(s) - y(t) -z_1$, where ${\sf d}(\Lambda_j^{-1})(s) = J(s)$ is a jacobian. Thus, we have to take into account both kernels. Let us start with the first one. We have
\[
	L(z_1) := \frac{i}{4\pi}\frac{\big(J(s)(t-s)\big)_q}{|J(s)(t-s)|^3}.
\]
Remark that the chart $\Lambda_j$ can be choosen in such a way that $J(s)$ is an orthonormal matrix. We perform the change of variable
\[
	s' = J(s)s,\quad t' = J(s)t.
\]
$L(z_1)$ becomes a Riesz Kernel in $\mathbb{R}^2$ and the associated operator maps continuously $L^2(\mathbb{R}^2)$ onto itself (see \cite[Th. 1]{CZ57}). The singular integral operator with kernel
\[
	L\big(J(s)(t-s)\big)\Big(b_j\big(x(s)\big)\bfn_k\big(x(s)\big) - b_j\big(y(t)\big)\bfn_k\big(y(t)\big)\Big)
\]
can be seen as the commutator of the singular integral operator with kernel $L\big(J(s)(t-s)\big)$ and the $\mathcal{C}^1$-function $s\mapsto b_j\big(x(s)\big)\bfn_k\big(x(s)\big)$.
Hence, we recover the commutator of a Riesz kernel and a $\mathcal{C}^1$-smooth function.  Thanks to \cite{Cal65} we know that the commutator is bounded from $L^2(\mathbb{R}^2)$ to the usual homogeneous Sobolev space of order 1
\[
G^1(\mathbb{R}^2) :=\{f\in L^2(\mathbb{R}^2) : \int_{\mathbb{R}^2}|\xi||\mathcal{F}(f)(\xi)|^2\dd \xi\}.
\]
However, the commutator is bounded from $L^2(\mathbb{R}^2)$ onto itself because the multiplication is bounded on $L^2(\mathbb{R}^2)$. Thus, the first part regularises and we only have to take care of the remainder which is more regular.

Indeed, set $\tilde{R}(s,t) = R(z_1,z_2)\Big(\bfn_k\big(x(s)\big) - \bfn_k\big(y(t)\big)\Big)$. $\tilde{R}(s,t)\in\mathcal{C}^1(\mathbb{R}^2\setminus\{|s-t| = 0\})$ and, for some $C>0$, we have:
\[
	|\tilde{R}(s,t)\varphi(t)| \leq C |\varphi(t)|,\quad  |\partial_s \tilde{R}(s,t) \varphi(t)| \leq \frac{C}{|s-t|}|\varphi(t)|.
\]
As $\varphi(t)\in L^1(\mathbb{R}^2)$ because its support is compact, we obtain that the remainder also regularises with:
\[
\|\tilde{R}(\varphi)\|_{H^1(\mathbb{R}^2)} \leq C\|\varphi\|_{L^2(\mathbb{R}^2)},\quad\text{with }\tilde{R}(\varphi)(s) := \int_{\mathbb{R}^2}\tilde{R}(s,t)\varphi(t)\dd t.
\]
Consequently, $\mathcal{A}$ is a bounded operator from $L^2(\Sigma)^4$ to $H^1(\Sigma)^4$.
\end{proof}

\section{MIT bag model}
\label{sec:MIT-bag}
The MIT bag model was introduced by physicists in the MIT in order to understand quarks confinment, see \cite[\S 1.1.]{ALTR16} and the references therein to justify the pertinence of such a model. Mathematically, it is  defined as follows.
\begin{dfn}[MIT bag model] Let $m\in\mathbb{R}$. The MIT bag operator $\Big(\mathcal{H}_{\rm{MIT}}(m), \dom(\mathcal{H}_{\rm{MIT}}(m))\Big)$ is defined on the domain
\[
	\dom(\mathcal{H}_{\rm{MIT}}(m)) = \{u \in H^1(\Omega)^4 : \mathcal{B}\tr u = \tr u\ \text{on}\ \Sigma\},\quad \mathcal{B} = -i\beta (\alpha\cdot\bfn),
\]
by $\mathcal{H}_{\rm{MIT}}(m)u = \mathcal{H}(m)u$, for all $u\in \dom(\mathcal{H}_{\rm{MIT}}(m))$.
\end{dfn}

The following theorem is about the self-adjointness of the MIT bag operator. A similar result can be found in \cite[Thm. 1.5.]{ALTR16} and we state it here in order to illustrate our strategy to prove self-adjointness of Dirac operators. We also emphasize that it allows us to deal with $\mathcal{C}^2$ surfaces and with unbounded domains $\Omega$.

\begin{thm} The MIT bag operator $\big(\mathcal{H}_{\rm{MIT}}(m), \dom(\mathcal{H}_{\rm{MIT}}(m))\big)$ is self-adjoint.
\label{thm:MIT_sa}
\end{thm}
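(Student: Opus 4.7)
The plan has three steps: symmetry, identification of the adjoint domain, and a regularity bootstrap from $H(\alpha,\Omega)$ to $H^1(\Omega)^4$. For symmetry, I would apply Lemma~\ref{lem:Green_dir} to $u,v\in\dom(\mathcal{H}_{\rm{MIT}}(m))\subset H^1(\Omega)^4$ to obtain
\[
\ps{\mathcal{H}(m)u}{v}_{L^2(\Omega)^4}-\ps{u}{\mathcal{H}(m)v}_{L^2(\Omega)^4}=\ps{-i(\alpha\cdot\bfn)\tr u}{\tr v}_{L^2(\Sigma)^4}.
\]
From $\{\beta,\alpha\cdot\bfn\}=0$ and $(\alpha\cdot\bfn)^2=\mathrm{Id}$, the matrix $\mathcal{B}$ is a self-adjoint involution and the algebraic identity $\mathcal{B}(-i(\alpha\cdot\bfn))\mathcal{B}=i(\alpha\cdot\bfn)$ holds; substituting $\tr u=\mathcal{B}\tr u$, $\tr v=\mathcal{B}\tr v$ in the boundary pairing and using self-adjointness of $\mathcal{B}$ on $L^2(\Sigma)^4$ shows it is equal to its own negative, hence vanishes.

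For the adjoint domain, any $u\in\dom(\mathcal{H}_{\rm{MIT}}(m)^*)$ satisfies $\mathcal{H}(m)u\in L^2(\Omega)^4$ (by testing against $\mathcal{C}_0^\infty(\Omega)^4$), so $u\in H(\alpha,\Omega)$ and $\tr u\in H^{-1/2}(\Sigma)^4$ by Proposition~\ref{prop:transmweak}. Corollary~\ref{prop:Green_weak} then yields $\ps{-i(\alpha\cdot\bfn)\tr u}{\tr v}_{H^{-1/2}(\Sigma)^4,H^{1/2}(\Sigma)^4}=0$ for every $v\in\dom(\mathcal{H}_{\rm{MIT}}(m))$. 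Since $\tr(\dom(\mathcal{H}_{\rm{MIT}}(m)))$ is the $+1$-eigenspace $\mathsf{ran}(\mathcal{P}_+)\cap H^{1/2}(\Sigma)^4$ of $\mathcal{B}$ (with $\mathcal{P}_\pm:=\tfrac12(I\pm\mathcal{B})$) and $\mathcal{B}$ restricts to a self-adjoint involution on $H^{\pm 1/2}(\Sigma)^4$, a duality argument gives $\mathcal{B}\tr u=\tr u$ in $H^{-1/2}(\Sigma)^4$. Hence
\[
\dom(\mathcal{H}_{\rm{MIT}}(m)^*)=\{u\in H(\alpha,\Omega):\mathcal{B}\tr u=\tr u\text{ in }H^{-1/2}(\Sigma)^4\}.
\]

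By Proposition~\ref{prop:regul_trac}, it then suffices to show $\tr u\in H^{1/2}(\Sigma)^4$ for $u$ in this adjoint domain. Write $u=V(g_0)|_\Omega+w$, where $g:=\mathcal{H}(m)u\in L^2(\Omega)^4$ and $g_0$ is its zero extension to $\mathbb{R}^3$; Proposition~\ref{prop:pot_regul} gives $V(g_0)\in H^1(\mathbb{R}^3)^4$, so $T:=\tr V(g_0)|_\Omega\in H^{1/2}(\Sigma)^4$, and $w:=u-V(g_0)|_\Omega$ satisfies $\mathcal{H}(m)w=0$ in $\Omega$. The reproducing formula recalled in the remark at the end of Subsection~\ref{subsec:cald-proj} gives $\mathcal{C}_+\tr w=\tr w$, equivalently $\mathcal{J}\tr w=\tr w$ with $\mathcal{J}:=2iC_\mathfrak{s}(\alpha\cdot\bfn)$ the involution coming from the Plemelj--Sokhotski relations~\eqref{eqn:Plemelj}; the MIT boundary condition on $u$ rewrites as $\mathcal{B}\tr w-\tr w=-(I-\mathcal{B})T\in H^{1/2}(\Sigma)^4$. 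Subtracting these two identities yields the key relation $(\mathcal{B}-\mathcal{J})\tr w\in H^{1/2}(\Sigma)^4$.

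The main obstacle is to deduce $\tr w\in H^{1/2}(\Sigma)^4$ from this last fact. Since $\mathcal{B}-\mathcal{J}=-i(\beta+2C_\mathfrak{s})(\alpha\cdot\bfn)$ and $(\alpha\cdot\bfn)^2=\mathrm{Id}$, the task reduces to verifying that $\beta+2C_\mathfrak{s}$ is an isomorphism of $H^{-1/2}(\Sigma)^4$ whose restriction to $H^{1/2}(\Sigma)^4$ is also an isomorphism; boundedness in both directions is clear from Corollary~\ref{thm:cont_tr} and the Plemelj relations. For invertibility I would exploit the intertwining $(\alpha\cdot\bfn)\mathcal{C}_\pm=\mathcal{C}_\mp^*(\alpha\cdot\bfn)$ from Proposition~\ref{prop:propcaldproj}~iv) together with the regularization $\mathcal{C}_+-\mathcal{C}_+^*=i\mathcal{A}$ of~\eqref{eqn:def_anticom}, which by Proposition~\ref{ref:reg_anti} maps $H^{-1/2}(\Sigma)^4$ into $H^{1/2}(\Sigma)^4$; this will express $\beta+2C_\mathfrak{s}$ as a compact perturbation of an operator whose invertibility can be checked directly (by an energy estimate using the spectrum $\{\pm 1\}$ of $\beta$), so that the Fredholm alternative closes the argument and $(\mathcal{B}-\mathcal{J})^{-1}$ maps $H^{1/2}(\Sigma)^4$ to itself.
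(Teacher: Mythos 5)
Your first two steps reproduce the paper's route: the symmetry argument is Proposition \ref{prop:sym_MIT}, and the identification of $\dom\big(\mathcal{H}_{\rm MIT}(m)^*\big)$ is Proposition \ref{prop:desc_adj_MIT}. Your decomposition $u=V(g_0)|_\Omega+w$ with $\mathcal{H}(m)w=0$, combined with the reproducing formula $\mathcal{C}_+\tr w=\tr w$, is also a legitimate substitute for the paper's Proposition \ref{prop:regCaldtr}: both encode the fact that $\mathcal{C}_-\tr u$ is already in $H^{1/2}(\Sigma)^4$. The gap is in the final step. To pass from $(\mathcal{B}-\mathcal{J})\tr w\in H^{1/2}(\Sigma)^4$ to $\tr w\in H^{1/2}(\Sigma)^4$ you invoke invertibility of $\mathcal{B}-\mathcal{J}=-i(\beta+2C_\mathfrak{s})(\alpha\cdot\bfn)$ on $H^{\pm1/2}(\Sigma)^4$, but you only sketch Fredholmness. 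The Fredholm alternative still requires injectivity, and injectivity of $\beta+2C_\mathfrak{s}$ is precisely the nontrivial point: if $(\beta+2C_\mathfrak{s})g=0$ then $C_\mathfrak{s}g=-\tfrac12\beta g$, and the Plemelj relations \eqref{eqn:Plemelj} show that $\Phi_\pm(g)$ are $L^2$ solutions of $\mathcal{H}(\mu)v=0$ in $\Omega_\pm$ satisfying the MIT boundary condition — so injectivity is a uniqueness statement for the very boundary value problem under study, not something an ``energy estimate on the spectrum $\{\pm1\}$ of $\beta$'' can deliver. There is also no smallness to exploit: the zeroth-order parts $\beta$ and $2C_\mathfrak{s}$ have the same size, since $-4\big(C_\mathfrak{s}(\alpha\cdot\bfn)\big)^2={\rm Id}$.

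The good news is that invertibility is not needed, and your own identity already closes the argument. Since $\mathcal{B}^2=\mathcal{J}^2={\rm Id}$, one has $(\mathcal{B}-\mathcal{J})^2=2\,{\rm Id}-\{\mathcal{B},\mathcal{J}\}$, and a short computation gives $\{\mathcal{B},\mathcal{J}\}=2\beta\mathcal{A}(\alpha\cdot\bfn)-2\{\beta,C_\mathfrak{s}\}$, where $\{\beta,C_\mathfrak{s}\}=2\mu S$ with $S$ the single layer potential of $-\Delta+\mu^2$; both terms map $H^{-1/2}(\Sigma)^4$ into $H^{1/2}(\Sigma)^4$ (Proposition \ref{ref:reg_anti} and the pseudo-homogeneity of $\psi$). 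Applying $\mathcal{B}-\mathcal{J}$ once more to $(\mathcal{B}-\mathcal{J})\tr w=k\in H^{1/2}(\Sigma)^4$ and using that $\mathcal{B}$ and $\mathcal{J}=\mathcal{C}_+-\mathcal{C}_-$ preserve $H^{1/2}(\Sigma)^4$ (Proposition \ref{prop:propcaldproj} i)) yields $2\tr w=(\mathcal{B}-\mathcal{J})k+\{\mathcal{B},\mathcal{J}\}\tr w\in H^{1/2}(\Sigma)^4$, after which Proposition \ref{prop:regul_trac} finishes as you intended. The paper avoids the issue differently, by never inverting anything: it writes $\mathcal{C}_+(\tr u)=\mathcal{C}_+(\mathcal{B}\tr u)=i\beta(\alpha\cdot\bfn)\mathcal{C}_-^*(\tr u)=i\beta(\alpha\cdot\bfn)\big(\mathcal{C}_-(\tr u)+i\mathcal{A}(\tr u)\big)$, which is manifestly in $H^{1/2}(\Sigma)^4$, and concludes from $\tr u=\mathcal{C}_+(\tr u)+\mathcal{C}_-(\tr u)$. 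Either repair should be written out explicitly; as it stands the invertibility claim is an unproved and genuinely nontrivial assertion.
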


In Subsection \ref{subsec:symm_MIT} we prove that the MIT bag operator is symmetric. A description of its adjoint operator is given in Subsection \ref{subsec:desc_adj_MIT} while in Subsection \ref{subsec:sa_MIT} we conclude the proof of Theorem \ref{thm:MIT_sa}.
\subsection{Symmetry of $\mathcal{H}_{\rm{MIT}}(m)$}
\label{subsec:symm_MIT}
We prove the following proposition.
\begin{prop}The MIT bag operator $\big(\mathcal{H}_{\rm{MIT}}(m), \dom(\mathcal{H}_{\rm{MIT}}(m))\big)$ is symmetric.
\label{prop:sym_MIT}
\end{prop}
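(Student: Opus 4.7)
The plan is to apply the Green's formula of Lemma \ref{lem:Green_dir} to functions in $\dom(\mathcal{H}_{\rm{MIT}}(m))\subset H^1(\Omega)^4$, and then kill the resulting boundary term by exploiting the algebraic structure of the MIT boundary condition. Since $m\beta$ is a bounded Hermitian multiplication operator on $L^2(\Omega)^4$, the mass term contributes nothing to $\ps{\mathcal{H}(m)u}{v}_{L^2(\Omega)^4}-\ps{u}{\mathcal{H}(m)v}_{L^2(\Omega)^4}$, and Lemma \ref{lem:Green_dir} reduces symmetry of $\mathcal{H}_{\rm{MIT}}(m)$ to the statement that
\[
q(u,v):=\ps{(-i\alpha\cdot\bfn)\tr u}{\tr v}_{L^2(\Sigma)^4}=0
\]
for every $u,v\in\dom(\mathcal{H}_{\rm{MIT}}(m))$.

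To obtain the vanishing of $q$, I would first record two elementary identities for the Dirac matrices, namely $(\alpha\cdot\bfn)\beta=-\beta(\alpha\cdot\bfn)$ (which follows from $\alpha_j\beta+\beta\alpha_j=0$) and $(\alpha\cdot\bfn)^2=\mathrm{Id}$. They imply at once that the multiplication operator $\mathcal{B}=-i\beta(\alpha\cdot\bfn)$ is self-adjoint and involutive on $L^2(\Sigma)^4$, and that $-i\alpha\cdot\bfn$ anticommutes with $\mathcal{B}$, since direct computation gives $(-i\alpha\cdot\bfn)\mathcal{B}=\beta$ and $\mathcal{B}(-i\alpha\cdot\bfn)=-\beta$. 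Inserting the boundary conditions $\tr u=\mathcal{B}\tr u$ and $\tr v=\mathcal{B}\tr v$ in $q$ and using $\mathcal{B}^{*}=\mathcal{B}$ together with $\mathcal{B}^2=\mathrm{Id}$, one finds
\[
q(u,v)=\ps{(-i\alpha\cdot\bfn)\mathcal{B}\tr u}{\mathcal{B}\tr v}_{L^2(\Sigma)^4}=\ps{\mathcal{B}(-i\alpha\cdot\bfn)\mathcal{B}\,\tr u}{\tr v}_{L^2(\Sigma)^4}=-q(u,v),
\]
where the last equality uses $\mathcal{B}(-i\alpha\cdot\bfn)\mathcal{B}=-(-i\alpha\cdot\bfn)\mathcal{B}^2=-(-i\alpha\cdot\bfn)$. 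Hence $q(u,v)=0$.

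I do not expect any real obstacle here: the only analytic ingredient is the $H^1$-Green's formula, and everything else is a short matrix computation. Conceptually, the mechanism is that $-i\alpha\cdot\bfn$ anticommutes with the involution $\mathcal{B}$, so the boundary flux form induced by $-i\alpha\cdot\bfn$ vanishes identically on the $(+1)$-eigenspace of $\mathcal{B}$, which is precisely the space of admissible boundary values of elements of $\dom(\mathcal{H}_{\rm{MIT}}(m))$.
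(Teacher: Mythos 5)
Your proposal is correct and follows essentially the same route as the paper: Green's formula from Lemma \ref{lem:Green_dir} reduces symmetry to the vanishing of the boundary form, which is then killed by the matrix algebra of $\mathcal{B}=-i\beta(\alpha\cdot\bfn)$. The paper inserts the two boundary conditions one at a time to get $\ps{\beta\tr u}{\tr v}_{L^2(\Sigma)^4}$ and its negative, whereas you package the same computation as the anticommutation of $-i\alpha\cdot\bfn$ with the self-adjoint involution $\mathcal{B}$; this is only a cosmetic difference.
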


\begin{proof}[Proof of Proposition \ref{prop:sym_MIT}] Thanks to Green's formula in Lemma \ref{lem:Green_dir}, for any $u,v\in\dom(\mathcal{H}_{\rm{MIT}}(m))$ we have
\[
	\ps{\mathcal{H}(m) u}{v}_{L^2(\Omega)^4} = \ps{u}{\mathcal{H}(m) v}_{L^2(\Omega)^4} + \ps{(-i\alpha\cdot\bfn)\tr u}{\tr v}_{L^2(\Sigma)^4}.
\]
As $u\in\dom(\mathcal{H}_{\rm{MIT}}(m))$ we have $\tr u = \mathcal{B} \tr u$ thus
\begin{align*}
	\ps{(-i\alpha\cdot\bfn)\tr u}{\tr v}_{L^2(\Sigma)^4} &= \ps{(-i\alpha\cdot\bfn)\mathcal{B}\tr u}{\tr v}_{L^2(\Sigma)^4}\\
& = \ps{(-i\alpha\cdot\bfn)(-i\beta\alpha\cdot\bfn)\tr u}{\tr v}_{L^2(\Sigma)^4}\\
& =  \ps{\beta\tr u}{\tr v}_{L^2(\Sigma)^4}.
\end{align*}
Similarly, as $v\in\dom(\mathcal{H}_{\rm{MIT}}(m))$ we have $\tr v = \mathcal{B} \tr v$ thus
\begin{align*}
	\ps{(-i\alpha\cdot\bfn)\tr u}{\tr v}_{L^2(\Sigma)^4} &= \ps{(-i\alpha\cdot\bfn)\tr u}{\mathcal{B}\tr v}_{L^2(\Sigma)^4}\\
& = \ps{(-i\alpha\cdot\bfn)\tr u}{(-i\beta\alpha\cdot\bfn)\tr v}_{L^2(\Sigma)^4}\\
& =  -\ps{\beta\tr u}{\tr v}_{L^2(\Sigma)^4}.
\end{align*}
Hence we get
\[
	\ps{(-i\alpha\cdot\bfn)\tr u}{\tr v}_{L^2(\Sigma)^4} = - \ps{(-i\alpha\cdot\bfn)\tr u}{\tr v}_{L^2(\Sigma)^4} = 0,
\]
which concludes the proof.
\end{proof}

\subsection{Description of the adjoint of $\mathcal{H}_{\rm{MIT}}(m)$}
\label{subsec:desc_adj_MIT}
In this subsection we prove the following proposition.
\begin{prop}The following set equality holds.
\[
	\dom\Big(\big(\mathcal{H}_{\rm MIT}(m)\big)^*\Big) = \{u\in H(\alpha,\Omega) : \tr u = \mathcal{B}\tr u\},
\]
where the boundary condition has to be understood as an equality in $H^{-1/2}(\Sigma)^4$.
\label{prop:desc_adj_MIT}
\end{prop}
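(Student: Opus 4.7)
The argument splits into two set inclusions. The main algebraic input is that the boundary matrix $\mathcal{B}=-i\beta(\alpha\cdot\bfn)$ is a self-adjoint involution on $H^{\pm 1/2}(\Sigma)^4$ (because $\beta$ and $\alpha\cdot\bfn$ are Hermitian with $\mathcal{C}^1$ coefficients, $\{\beta,\alpha_j\}=0$ and $(\alpha\cdot\bfn)^2=\mathrm{Id}$, giving both $\mathcal{B}^*=\mathcal{B}$ and $\mathcal{B}^2=\mathrm{Id}$), and that $T:=-i\alpha\cdot\bfn$ anticommutes with $\mathcal{B}$ and satisfies $T^2=-\mathrm{Id}$. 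In particular $T$ is an isomorphism of $H^{-1/2}(\Sigma)^4$ that interchanges the eigenspaces $\ker(\mathrm{Id}\pm\mathcal{B})$.

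For the inclusion $\supset$, I would take $u\in H(\alpha,\Omega)$ with $\tr u=\mathcal{B}\tr u$ in $H^{-1/2}(\Sigma)^4$ and, for any $v\in\dom(\mathcal{H}_{\rm{MIT}}(m))\subset H^1(\Omega)^4$, apply Corollary \ref{prop:Green_weak}. Since $\beta^*=\beta$, the mass contribution cancels and one is left with
\[
\ps{\mathcal{H}(m)u}{v}_{L^2(\Omega)^4}-\ps{u}{\mathcal{H}(m)v}_{L^2(\Omega)^4}=\ps{T\tr u}{\tr v}_{H^{-1/2}(\Sigma)^4,H^{1/2}(\Sigma)^4}.
\]
Using $\tr u=\mathcal{B}\tr u$ and $\mathcal{B}T=-T\mathcal{B}$ gives $\mathcal{B}(T\tr u)=-T\tr u$; combining this with $\tr v=\mathcal{B}\tr v$ and the self-adjointness of $\mathcal{B}$ in the duality pairing, the right-hand side equals $-\ps{T\tr u}{\tr v}$ and hence vanishes. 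This shows $u\in\dom(\mathcal{H}_{\rm{MIT}}(m)^*)$.

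For the inclusion $\subset$, let $u\in\dom(\mathcal{H}_{\rm{MIT}}(m)^*)$ and set $w=\mathcal{H}_{\rm{MIT}}(m)^*u\in L^2(\Omega)^4$. Testing against $v\in\mathcal{C}_0^\infty(\Omega)^4\subset\dom(\mathcal{H}_{\rm{MIT}}(m))$ (whose traces vanish, hence trivially satisfy $\mathcal{B}\tr v=\tr v$) identifies $\mathcal{H}(m)u=w$ in $\mathcal{D}'(\Omega)^4$, so $u\in H(\alpha,\Omega)$. For generic $v\in\dom(\mathcal{H}_{\rm{MIT}}(m))$ the same Green formula then yields
\[
\ps{T\tr u}{\tr v}_{H^{-1/2}(\Sigma)^4,H^{1/2}(\Sigma)^4}=0.
\]
The extension operator $E$ of Proposition \ref{prop:tr_th_cla} shows that the trace map $\dom(\mathcal{H}_{\rm{MIT}}(m))\to H^{1/2}(\Sigma)^4$ is onto $\ker(\mathrm{Id}-\mathcal{B})$: indeed any $g\in H^{1/2}(\Sigma)^4$ with $\mathcal{B}g=g$ is the trace of $E(g)\in H^1(\Omega)^4$, which lies in the MIT domain. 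Consequently $T\tr u$ annihilates $\ker(\mathrm{Id}-\mathcal{B})$ in the duality pairing, i.e.\ the self-adjoint projector $P_+=\tfrac12(\mathrm{Id}+\mathcal{B})$ satisfies $P_+(T\tr u)=0$ in $H^{-1/2}(\Sigma)^4$, equivalently $\mathcal{B}T\tr u=-T\tr u$. Applying $\mathcal{B}T=-T\mathcal{B}$ rewrites this as $T(\tr u-\mathcal{B}\tr u)=0$, and the invertibility of $T$ on $H^{-1/2}(\Sigma)^4$ forces $\tr u=\mathcal{B}\tr u$.

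The step I expect to be trickiest is recognising that the ``formal adjoint boundary condition'' recovered on $u$ from the orthogonality relation \emph{coincides} with the original MIT condition. This is what makes the MIT bag model a natural candidate for self-adjointness, and algebraically it comes down to the anticommutation $\mathcal{B}T=-T\mathcal{B}$ (which swaps the two eigenspaces of $\mathcal{B}$) together with surjectivity of the trace onto $\ker(\mathrm{Id}-\mathcal{B})$ afforded by the classical extension $E$.
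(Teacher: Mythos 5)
Your proposal is correct and follows essentially the same route as the paper: both directions use the weak Green's formula of Corollary \ref{prop:Green_weak}, the identification of $\mathcal{H}(m)u$ in $\mathcal{D}'(\Omega)^4$ via compactly supported test functions to get $u\in H(\alpha,\Omega)$, and the projectors $P_\pm=\tfrac12(\mathrm{Id}\pm\mathcal{B})$ together with the extension operator $E$ applied to $P_+f$ to recover the boundary condition. Your final step (writing the orthogonality as $P_+\big((-i\alpha\cdot\bfn)\tr u\big)=0$ and inverting $-i\alpha\cdot\bfn$) is just the dualized form of the paper's direct manipulation of the pairings with $P_\pm f$, resting on the same anticommutation $\mathcal{B}(-i\alpha\cdot\bfn)=-(-i\alpha\cdot\bfn)\mathcal{B}$.
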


\begin{proof}[Proof of Proposition \ref{prop:desc_adj_MIT}] Let $\mathcal{V}$ denote the space on the right-hand side in Proposition \ref{prop:desc_adj_MIT}. We will prove the set equality proving each inclusion but first, recall that by definition
\[
	\dom\Big(\big(\mathcal{H}_{\rm MIT}(m)\big)^*\Big) = \left\{u \in L^2(\Omega)^4 :\begin{array}{c} \text{ there exists } w \in L^2(\Omega)^4 \text{ such that}\\\text{for all } v\in\dom\big(\mathcal{H}_{\rm MIT}(m)\big), \ps{u}{\mathcal{H}(m) v}_{L^2(\Omega)^4} = \ps{w}{v}_{L^2(\Omega)^4}\end{array}\right\}.
\]
\myemph{Inclusion $\mathcal{V} \subset \dom\Big(\big(\mathcal{H}_{\rm MIT}(m)\big)^*\Big)$.} Let $u\in\mathcal{V}$ and $v\in\dom\big(\mathcal{H}_{\rm MIT}(m)\big)$. Thanks to Corollary \ref{prop:Green_weak} we have
\[
	\ps{u}{\mathcal{H}(m) v}_{L^2(\Omega)^4} = \ps{\mathcal{H}(m) u}{v}_{L^2(\Omega)^4} +\ps{(i\alpha\cdot\bfn)\tr u}{\tr v}_{H^{-1/2}(\Sigma)^4,H^{1/2}(\Sigma)^4}.
\]
Now, as $\tr u = \mathcal{B}\tr u$ we have
\begin{align*}
	\ps{(-i\alpha\cdot\bfn)\tr u}{\tr v}_{H^{-1/2}(\Sigma)^4,H^{1/2}(\Sigma)^4} &= \ps{(-i\alpha\cdot\bfn)\mathcal{B}\tr u}{\tr v}_{H^{-1/2}(\Sigma)^4,H^{1/2}(\Sigma)^4}\\
& = \ps{(-i\alpha\cdot\bfn)(-i\beta\alpha\cdot\bfn)\tr u}{\tr v}_{H^{-1/2}(\Sigma)^4,H^{1/2}(\Sigma)^4}\\
& =  \ps{\beta\tr u}{\tr v}_{H^{-1/2}(\Sigma)^4,H^{1/2}(\Sigma)^4}.
\end{align*}
On the on the other hand as $\tr v = \mathcal{B}\tr v$ we have
\begin{align*}
	\ps{(-i\alpha\cdot\bfn)\tr u}{\tr v}_{H^{-1/2}(\Sigma)^4,H^{1/2}(\Sigma)^4} &= \ps{(-i\alpha\cdot\bfn)\tr u}{\mathcal{B}\tr v}_{H^{-1/2}(\Sigma)^4,H^{1/2}(\Sigma)^4}\\
& = \ps{(-i\alpha\cdot\bfn)\tr u}{(-i\beta\alpha\cdot\bfn)\tr v}_{H^{-1/2}(\Sigma)^4,H^{1/2}(\Sigma)^4}\\
& =  -\ps{\beta\tr u}{\tr v}_{H^{-1/2}(\Sigma)^4,H^{1/2}(\Sigma)^4}.
\end{align*}
Hence we get
\[
\ps{u}{\mathcal{H}(m) v}_{L^2(\Omega)^4} = \ps{\mathcal{H}(m) u}{v}_{L^2(\Omega)^4},
\]
which proves that $u\in\dom\Big(\big(\mathcal{H}_{\rm MIT}(m)\big)^*\Big)$.

\noindent
\myemph{Inclusion $\dom\Big(\big(\mathcal{H}_{\rm MIT}(m)\big)^*\Big)\subset \mathcal{V}$.} Let $u\in\dom\Big(\big(\mathcal{H}_{\rm MIT}(m)\big)^*\Big)$ and $v\in\mathcal{C}_0^\infty(\Omega)^4$. We have
\[
	\ps{\mathcal{H}(m)u}{v}_{\mathcal{D}'(\Omega)^4,\mathcal{D}(\Omega)^4}  = \ps{u}{\mathcal{H}(m)v}_{\mathcal{D}'(\Omega)^4,\mathcal{D}(\Omega)^4} = \ps{u}{\mathcal{H}(m)v}_{L^2(\Omega)^4}.
\]
As $u\in\dom\Big(\big(\mathcal{H}_{\rm MIT}(m)\big)^*\Big)$, there exists $w\in L^2(\Omega)^4$ such that
\[
\ps{u}{\mathcal{H}(m)v}_{L^2(\Omega)^4} = \ps{w}{v}_{L^2(\Omega)^4} = \ps{w}{v}_{\mathcal{D}'(\Omega)^4,\mathcal{D}(\Omega)^4}.
\]
As this is true for every $v\in\mathcal{C}_0^\infty(\Omega)^4$ we get $\mathcal{H}(m)u = w$ in $\mathcal{D}'(\Omega)^4$ and then in $L^2(\Omega)^4$. Thus we obtain $u\in H(\alpha,\Omega)$.
We introduce the matrices
\[
	P_+ = \frac12({\rm Id} + \mathcal{B}),\quad P_- = \frac12({\rm Id} - \mathcal{B}),
\]
they satisfy $\mathcal{B}P_+ = P_+$ and $\mathcal{B}P_- = -P_-$. Let $f\in H^{1/2}(\Sigma)^4$, we have $E(P_+f)\in\dom(\mathcal{H}_{\rm MIT}(m))$, where $E$ is the extension operator of Proposition \ref{prop:tr_th_cla}. As $u\in H(\alpha,\Omega))\cap \dom\Big(\big(\mathcal{H}_{\rm MIT}(m)\big)^*\Big)$ we have
\begin{align*}
	0 = \ps{\tr u}{(i\alpha\cdot\bfn)P_+ f}_{H^{-1/2}(\Sigma)^4,H^{1/2}(\Sigma)^4} &= \ps{\tr u}{(i\alpha\cdot\bfn)\mathcal{B}P_+ f}_{H^{-1/2}(\Sigma)^4,H^{1/2}(\Sigma)^4} \\&= -\ps{\mathcal{B}\tr u}{(i\alpha\cdot\bfn)P_+f}_{H^{-1/2}(\Sigma)^4,H^{1/2}(\Sigma)^4}.
\end{align*}
As $f = P_+f + P_-f$ we have
\begin{align*}
	\ps{\tr u}{(i\alpha\cdot\bfn) f}_{H^{-1/2}(\Sigma)^4,H^{1/2}(\Sigma)^4} &= \ps{\tr u}{(i\alpha\cdot\bfn)P_-f}_{H^{-1/2}(\Sigma)^4,H^{1/2}(\Sigma)^4} \\&= -\ps{\tr u}{(i\alpha\cdot\bfn)\mathcal{B}P_-f}_{H^{-1/2}(\Sigma)^4,H^{1/2}(\Sigma)^4}\\&=\ps{\mathcal{B}\tr u}{(i\alpha\cdot\bfn)P_-f}_{H^{-1/2}(\Sigma)^4,H^{1/2}(\Sigma)^4}\\
&=\ps{\mathcal{B}\tr u}{(i\alpha\cdot\bfn)f}_{H^{-1/2}(\Sigma)^4,H^{1/2}(\Sigma)^4}.
\end{align*}
As this is true for every $f\in H^{1/2}(\Sigma)^4$ we get that $\tr u = \mathcal{B}\tr u$. Thus $u\in\mathcal{V}$.
\end{proof}
\subsection{Self-adjointness of the MIT bag model}
\label{subsec:sa_MIT}
As in Subsection \ref{subsec:Cald_proj}, we set
\[
	\Omega_+:=\Omega\quad\text{and}\quad\Omega_-:=\mathbb{R}^3\setminus\overline{\Omega}.
\]
Let $\mu= 0$, we work with $\mathcal{C}_{\pm} = \mathcal{C}_{\pm,0}$ introduced in Definition \ref{dfn:cald_proj}.
Now, we have all the tools to prove Theorem \ref{thm:MIT_sa}.
\begin{proof}[Proof of Theorem \ref{thm:MIT_sa}] Let $u\in\dom\Big(\big(\mathcal{H}_{\rm MIT}(m)\big)^*\Big)$, thanks to Proposition \ref{prop:desc_adj_MIT} we know that
\[
	\tr u = \mathcal{B}\tr u.
\]
Moreover, thanks to Proposition \ref{prop:regCaldtr} we know that $\mathcal{C}_-(\tr u)\in H^{1/2}(\Sigma)^4$. Now, we prove that $\mathcal{C}_+(\tr u)\in H^{1/2}(\Sigma)^4$. Remark that for any $f\in H^{1/2}(\Sigma)^4$, $\mathcal{C}_{\pm}(\beta f) = -\beta\mathcal{C}_{\pm}(f)$. Thus, we have
\begin{multline}
	\mathcal{C}_{+}(\tr u) =  \mathcal{C}_{+}(\mathcal{B}\tr u) = i\beta\mathcal{C}_{+}\big((\alpha\cdot\bfn)\tr u\big) = i\beta (\alpha\cdot\bfn)\mathcal{C}_{-}^*(\tr u)\\ = i\beta(\alpha\cdot\bfn)\Big(\mathcal{C}_{-}(\tr u) + i\mathcal{A}(\tr u)\Big),
\label{eqn:eqn-samit}
\end{multline}
where we used {\it iv)} Proposition \ref{prop:propcaldproj} and Relation \eqref{eqn:def_anticom}. Thanks to Propositions \ref{prop:regCaldtr} and \ref{ref:reg_anti} the term in the right-hand side of \eqref{eqn:eqn-samit} is in $H^{1/2}(\Sigma)^4$ and thus
\[
	\tr u = \mathcal{C}_{+}(\tr u) + \mathcal{C}_{-}(\tr u) \in H^{1/2}(\Sigma)^4.
\]
Applying Proposition \ref{prop:regul_trac}, we obtain $u\in\dom(\mathcal{H}_{\rm MIT}(m))$. It proves the inclusion $\dom(\mathcal{H}_{\rm MIT}(m))\subset\dom\Big(\big(\mathcal{H}_{\rm MIT}(m)\big)^*\Big)$. The reciprocal inclusion is a consequence of Proposition \ref{prop:sym_MIT}.
\end{proof}
\section{Dirac operators coupled with electrostatic $\delta$-shell interactions}
\label{sec:Dir-shell}
Before stating the main result of this section, we need to introduce some notations and definitions.

As in Subsection \ref{subsec:Cald_proj} we set
\[
\Omega_+ := \Omega \text{ and } \Omega_- := \mathbb{R}^3\setminus\overline{\Omega}.
\]
We identify the space $L^2(\mathbb{R}^3)^4$ with $L^2(\Omega_+)^4\times L^2(\Omega_-)^4$ {\it via} the isomorphism
\begin{equation}
	\Lambda: u\in L^2(\mathbb{R}^3)^4 \mapsto (u_+,u_-) = (u|_{\Omega_+},u|_{\Omega_-}) \in L^2(\Omega_+)^4\times L^2(\Omega_-)^4,
\label{eqn:diff_lam}
\end{equation}
where $\Lambda^{-1}(u_1,u_2) := u_1 \mathds{1}_{\Omega_+} + u_2 \mathds{1}_{\Omega_-}$.

For $\tau\in\mathbb{R}$, we introduce the matrix valued function:
\[
	 \mathcal{P}_\tau = \frac\tau2 + i(\alpha\cdot\bfn).
\] 
For $(u_+,u_-)\in H^1(\Omega_+)^4\times H^1(\Omega_-)^4 $ we define the following transmission condition in $H^{1/2}(\Sigma)^4$
\begin{equation}
	\mathcal{P}_\tau \mathfrak{t}_{\Sigma,+} u_+ + \mathcal{P}_\tau^*\mathfrak{t}_{\Sigma,-} u_- = 0, \quad \text{on } \Sigma.
\label{eqn:transm_cond0}
\end{equation}
Alternatively, as $\mathcal{P_\tau}$ is invertible, we can see the transmission condition as
\begin{equation}
	\mathfrak{t}_{\Sigma,+} u_+ = \mathcal{R}_\tau \mathfrak{t}_{\Sigma,-} u_-, \text{ with } \mathcal{R}_\tau := \frac{1}{\tau^2/4 + 1}\Big(1 -\frac{\tau^2}{4} + \tau (i\alpha\cdot\bfn)\Big).
\label{eqn:transm_cond0_alt}
\end{equation}
This transmission condition is the natural one generated by an electrostatic $\delta$-interaction of strength $\tau$ supported on $\Sigma$, this is discussed further on in Subsection \ref{subsec:rmk_transm}. As there is no confusion possible, from now on, $\tr u_\pm$ denotes $\mathfrak{t}_{\Sigma,\pm} u_\pm$.

Now, let us define the operator we are interested in.
\begin{dfn} Let $\tau\in\mathbb{R}$ and $m\in\mathbb{R}$. The Dirac operator coupled with an electrostatic $\delta$-shell interaction of strength $\tau$ is the operator $\Big(\mathcal{H}_\tau(m),\dom \big(\mathcal{H}_\tau(m) \big)\Big)$, acting on $L^2(\mathbb{R}^3)^4$ and defined on the domain
\begin{equation}
	\dom\big(\mathcal{H}_\tau(m)\big) = \big\{(u_+,u_-)\in H^1(\Omega_+)^4\times H^1(\Omega_-)^4 : (u_+,u_-)Ê\text{ satisfies } \eqref{eqn:transm_cond0}\big\}.
	\label{eqn:domdir}
\end{equation}
It acts in the sense of distributions as $\mathcal{H}_\tau(m) u = (\mathcal{H}(m) u_+,\mathcal{H}(m) u_-)$ where we identify an element of $L^2(\Omega_+)^4\times L^2(\Omega_-)^4$ with an element of $L^2(\mathbb{R}^3)^4$ {\it via} \eqref{eqn:diff_lam}.
\end{dfn}
Note that $\mathcal{P}_\tau$ and $i\alpha\cdot\bfn$ commute, that is
\begin{equation}
	\mathcal{P}_\tau (i\alpha\cdot\bfn) = (i\alpha\cdot\bfn)\mathcal{P}_{\tau}.
		\label{eqn:pte_P_tau0}
\end{equation}
Finally, if $\tau=0$, we recover the usual free Dirac operator $\mathcal{H}_0(m)$ with domain $\dom\big(\mathcal{H}_0(m)\big) = H^1(\mathbb{R}^3)^4$.

We can now state the main result of this section.
\begin{thm}
Let $m\in\mathbb{R}$. The following holds:
\begin{itemize}
	\item[i)] If $\tau\neq\pm2$, the operator $\Big(\mathcal{H}_\tau(m),\dom \big(\mathcal{H}_\tau(m) \big)\Big)$ is self-adjoint.
	\item[ii)] If $\tau = \pm 2$, the operator $\Big(\mathcal{H}_\tau(m),\dom \big(\mathcal{H}_\tau(m) \big)\Big)$ is essentially self-adjoint and we have
\[
	\dom ({\mathcal{H}}_\tau(m))\subsetneq\dom (\overline{\mathcal{H}}_\tau(m)) = \{(u_+,u_-)\in H(\alpha,\Omega_+)\times H(\alpha,\Omega_-) : (u_+,u_-) \text{ satisfies } \eqref{eqn:transm_cond0}\},
\]
where the transmission condition holds in $H^{-1/2}(\Sigma)^4$.
\end{itemize}
\label{th:sadirac}
\end{thm}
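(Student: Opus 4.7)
The plan is to adapt the strategy used in the proof of Theorem \ref{thm:MIT_sa}, taking throughout $\mu = m$ so that $\mathcal{C}_\pm = \mathcal{C}_{\pm,m}$ in the sense of Definition \ref{dfn:cald_proj}. First, I would verify that $\mathcal{H}_\tau(m)$ is symmetric by applying Green's formula (Lemma \ref{lem:Green_dir}) on each of $\Omega_+$ and $\Omega_-$. The two boundary contributions combine into $\ps{-i(\alpha\cdot\bfn)\tr u_+}{\tr v_+}_{L^2(\Sigma)^4} + \ps{+i(\alpha\cdot\bfn)\tr u_-}{\tr v_-}_{L^2(\Sigma)^4}$. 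A direct computation gives $\mathcal{R}_\tau^*\mathcal{R}_\tau = \mathrm{Id}$ (equivalently $(1-\tau^2/4)^2 + \tau^2 = (1+\tau^2/4)^2$) and $\mathcal{R}_\tau$ commutes with $\alpha\cdot\bfn$; substituting $\tr u_+ = \mathcal{R}_\tau \tr u_-$ and $\tr v_+ = \mathcal{R}_\tau \tr v_-$ then makes the two terms cancel.

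Second, I would mimic Proposition \ref{prop:desc_adj_MIT}: testing against $C_0^\infty(\Omega_\pm)^4$ forces $u_\pm\in H(\alpha,\Omega_\pm)$, and the weak Green's formula (Corollary \ref{prop:Green_weak}) combined with the surjectivity of the trace $H^1(\Omega_-)^4 \to H^{1/2}(\Sigma)^4$ (and the matching $\tr u_+ = \mathcal{R}_\tau\tr u_-$) yields
\[
\dom\!\bigl(\mathcal{H}_\tau(m)^*\bigr) = \bigl\{ (u_+,u_-)\in H(\alpha,\Omega_+)\times H(\alpha,\Omega_-) : \mathcal{P}_\tau \tr u_+ + \mathcal{P}_\tau^* \tr u_- = 0 \text{ in } H^{-1/2}(\Sigma)^4 \bigr\}.
\]

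Third, for case (i) with $\tau \neq \pm 2$, I would apply $\mathcal{C}_+$ and $\mathcal{C}_-$ to this weak transmission identity. Using Proposition \ref{prop:propcaldproj}(iv), the identities $\mathcal{C}_\pm^* = \mathcal{C}_\pm \mp i\mathcal{A}$ from \eqref{eqn:def_anticom}, Proposition \ref{prop:regCaldtr} (giving $\mathcal{C}_-\tr u_+, \mathcal{C}_+\tr u_- \in H^{1/2}(\Sigma)^4$), and Proposition \ref{ref:reg_anti} (giving $\mathcal{A}\tr u_\pm \in H^{1/2}(\Sigma)^4$), the resulting equations reduce, modulo $H^{1/2}(\Sigma)^4$, to
\begin{align*}
\tfrac{\tau}{2}\,\mathcal{C}_+\tr u_+ - i(\alpha\cdot\bfn)\,\mathcal{C}_-\tr u_- &\in H^{1/2}(\Sigma)^4, \\
i(\alpha\cdot\bfn)\,\mathcal{C}_+\tr u_+ + \tfrac{\tau}{2}\,\mathcal{C}_-\tr u_- &\in H^{1/2}(\Sigma)^4.
\end{align*}
Multiplying the second line by $i(\alpha\cdot\bfn)$ (which preserves $H^{1/2}$) and eliminating, using $(i\alpha\cdot\bfn)^2 = -\mathrm{Id}$, yields $(\tau^2/4 - 1)\,\mathcal{C}_+\tr u_+ \in H^{1/2}(\Sigma)^4$, and symmetrically $(\tau^2/4-1)\,\mathcal{C}_-\tr u_- \in H^{1/2}(\Sigma)^4$. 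When $\tau \neq \pm 2$ one may divide to obtain $\tr u_\pm \in H^{1/2}(\Sigma)^4$, and Proposition \ref{prop:regul_trac} then upgrades this to $u_\pm \in H^1(\Omega_\pm)^4$, matching $\dom(\mathcal{H}_\tau(m))$.

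Fourth and hardest, for case (ii) with $\tau = \pm 2$ the factor $\tau^2/4-1$ vanishes and the above argument produces only a constraint between $\mathcal{C}_+\tr u_+$ and $\mathcal{C}_-\tr u_-$, giving a genuine strict inclusion $\dom(\mathcal{H}_\tau(m)) \subsetneq \dom(\mathcal{H}_\tau(m)^*)$. Essential self-adjointness with closure $\mathcal{H}_\tau(m)^*$ is equivalent to symmetry of $\mathcal{H}_\tau(m)^*$. To establish this I would extend the boundary pairing to $\dom(\mathcal{H}_\tau(m)^*)$ using the Calder\'on decomposition $\tr u_\pm = \mathcal{C}_+\tr u_\pm + \mathcal{C}_-\tr u_\pm$. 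Expanding $\ps{-i(\alpha\cdot\bfn)\tr u_\pm}{\tr v_\pm}$ in the smooth case, the diagonal pieces $\ps{-i(\alpha\cdot\bfn)\mathcal{C}_\pm\tr u}{\mathcal{C}_\pm \tr v}$ vanish by the identity $\mathcal{C}_\pm^*\mathcal{C}_\mp^* = 0$ (a consequence of Proposition \ref{prop:propcaldproj}(ii)--(iii)), while the surviving cross-terms pair an $H^{-1/2}$ factor with an $H^{1/2}$ factor (the latter by Proposition \ref{prop:regCaldtr}) and therefore extend continuously to $H(\alpha,\Omega_\pm)^2$. Using the transmission condition $\tr u_+ = \pm i(\alpha\cdot\bfn)\tr u_-$ at $\tau = \pm 2$ together with Proposition \ref{prop:propcaldproj}(iv), the extended boundary pairing cancels between the $\Omega_+$ and $\Omega_-$ contributions exactly as the pointwise identity $\mathcal{R}_\tau^*(\alpha\cdot\bfn)\mathcal{R}_\tau = (\alpha\cdot\bfn)$ predicts, the residual $\mathcal{A}$-contributions coming from $\mathcal{C}_\pm^* - \mathcal{C}_\pm = \mp i\mathcal{A}$ cancelling through the very identities $\mathcal{C}_\pm^*\mathcal{C}_\mp^* = 0$ that killed the diagonal terms. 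The main obstacle throughout is this last step: rigorously extending Green's formula to the $H^{-1/2}$ traces of the adjoint domain and performing the precise bookkeeping of the $\mathcal{A}$-corrections to close the cancellation in the critical case.
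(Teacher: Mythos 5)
Your first three steps (symmetry via $\mathcal{R}_\tau^*(\alpha\cdot\bfn)\mathcal{R}_\tau=\alpha\cdot\bfn$, the identification of $\dom(\mathcal{H}_\tau(m)^*)$, and the regularity bootstrap for $\tau\neq\pm2$) coincide with the paper's argument; your elimination producing $(\tau^2/4-1)\,\mathcal{C}_\pm\tr u_\pm\in H^{1/2}(\Sigma)^4$ is exactly the inversion of the $2\times 2$ matrix in \eqref{eqn:sadelta}. The problems are in part {\it ii)}, where your proposal has two genuine gaps. First, the strict inclusion $\dom(\mathcal{H}_\tau(m))\subsetneq\dom(\overline{\mathcal{H}_\tau(m)})$ is asserted but not proved: the observation that the bootstrap ``produces only a constraint'' when $\tau=\pm2$ shows that \emph{this particular argument} fails to force $\tr u_\pm\in H^{1/2}(\Sigma)^4$, not that some element of the adjoint domain actually has a non-$H^{1/2}$ trace. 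You must exhibit one. The paper does this explicitly: pick $f\in H^{-1/2}(\Sigma)^4\setminus H^{1/2}(\Sigma)^4$, note that $g=\mathcal{C}_-(f)$ or $\mathcal{C}_+(f)$ is not in $H^{1/2}(\Sigma)^4$, and build $u=\big(\varepsilon\Phi_+(g)-\varepsilon E_+(\mathcal{A}((\alpha\cdot\bfn)g)),\,\Phi_-((i\alpha\cdot\bfn)g)\big)$, which satisfies \eqref{eqn:transm_cond} thanks to {\it iv)} of Proposition \ref{prop:propcaldproj} and \eqref{eqn:def_anticom}, but whose trace $\tr u_-=-g$ is not in $H^{1/2}(\Sigma)^4$.

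Second, your route to essential self-adjointness — proving that $\mathcal{H}_\tau(m)^*$ is itself symmetric via an extension of Green's formula to pairs of $H(\alpha,\Omega_\pm)$ functions — is legitimate in principle (symmetry of $T^*$ gives $T^*\subset T^{**}=\overline{T}\subset T^*$), and the Calder\'on decomposition into cross terms, with the diagonal terms killed by $\mathcal{C}_\pm^*\mathcal{C}_\mp^*=0$, is a sound way to extend the pairing. But the step you yourself flag as ``the main obstacle'' — the bookkeeping showing that the $\mathcal{A}$-corrections arising from $\mathcal{C}_\pm^*-\mathcal{C}_\pm=\mp i\mathcal{A}$ cancel between the $\Omega_+$ and $\Omega_-$ contributions under $\tr u_+=i\varepsilon(\alpha\cdot\bfn)\tr u_-$ — is precisely the content of the critical case and is nowhere carried out; as written this is a sketch of a plan, not a proof. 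The paper avoids this cancellation entirely by proving $\mathcal{H}_\tau(m)^*\subset\overline{\mathcal{H}_\tau(m)}$ directly: given $u$ in the adjoint domain, it approximates $\tr u_-$ by smooth $f_n$ and corrects $u$ by the layer potentials $\Phi_\pm$ and an explicit extension $E_+(\mathcal{A}((\alpha\cdot\bfn)(f_n-\tr u_-)))$ so that the corrected $u_n$ lies in $\dom(\mathcal{H}_\tau(m))$ and converges to $u$ in graph norm (Lemma \ref{lem:prop_seq_clos}); all the required continuity is then supplied by Theorem \ref{thm:regu_distri} and Propositions \ref{prop:regCaldtr} and \ref{ref:reg_anti}. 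You should either complete your cancellation computation in detail or switch to this constructive approximation, which also hands you, almost for free, the explicit element needed for the strict inclusion.
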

In \cite[Thm. 3.8]{AMV14}, the authors are able to prove the self-adjointness of the operator under the condition $\tau\neq\pm2$. However, except in the particular case $\Sigma=\mathbb{R}^2\times\{0\}$, they do not provide a description when $\tau=\pm2$. The proof, of Theorem \ref{th:sadirac} differs significantly from what is done in \cite{AMV14} and follows the philosophy of \cite{BFSVDB16} with the use of Calder\'on projectors. In particular, it allows us to understand the specific case $\tau=\pm 2$.	


\subsection{Remarks on the transmission condition}
\label{subsec:rmk_transm}
This subsection aims to justify the expression of Transmission condition \eqref{eqn:transm_cond0}.
Our goal is to define the operator that formally writes
\[
	\mathcal{H}_\tau(m) = \mathcal{H}(m) + \tau \delta_\Sigma,
\]
where, for $u\in H^1(\Omega_+)^4\times H^1(\Omega_-)^4$, $\delta_\Sigma u$ is the distribution defined as
\[
	\ps{\delta_\Sigma u}{v}_{\mathcal{D}'(\mathbb{R}^3)^4,\mathcal{D}(\mathbb{R}^3)^4} := \frac12\int_{\Sigma}\ps{\tr u_+(x) + \tr u_-(x)}{v(x)}_{\mathbb{C}^4}\dd\mathfrak{s}(x),\quad\text{for all } v\in\mathcal{C}_0^\infty(\mathbb{R}^3)^4.
\]

We are interested in functions $u\in L^2(\mathbb{R}^3)^4$ such that
\[
	(\mathcal{H}(m) + \tau\delta_\Sigma(x)\rm{Id})u \in L^2(\mathbb{R}^3)^4.
\]
For example, if $u=(u_+,u_-)\in H^1(\Omega_+)^4\times H^1(\Omega_-)^4$, a computation in the sense of distributions yields
\[
	\begin{array}{lcl}
		(\mathcal{H}(m) + \tau\delta_\Sigma(x)\rm{Id})u		& = & \displaystyle \alpha\cdot \sfD u + m\beta u +\frac\tau2(\tr u_+ + \tr u_-)\delta_\Sigma\\
		& = & \displaystyle \{(\alpha\cdot\sfD) u\} + m\beta u - i\alpha\cdot\bfn(\tr u_- - \tr u_+)\delta_\Sigma +\frac\tau2(\tr u_+ + \tr u_-)\delta_\Sigma\\
		& = & \displaystyle\underset{\in L^2(\mathbb{R}^3)^4}{\underbrace{\{(\alpha\cdot\sfD) u\} + m\beta u}} + \big(\frac{\tau}{2}(\tr u_+ + \tr u_-) -i\alpha\cdot\bfn(\tr u_- - \tr u_+)\big)\delta_\Sigma,
	\end{array}
\]
where we set $\{(\alpha\cdot\sfD) u\}|_{\Omega_\pm} = (\alpha\cdot\sfD)u_\pm$. Now, we would like the last term in the right-hand side to be zero. It reads:
\[
	(\frac{\tau}2\rm{Id} + i\alpha\cdot\bfn) \tr u_+ + (\frac{\tau}{2}\rm{Id} - i\alpha\cdot\bfn)\tr u_- =0.
\]
In particular, it justifies that for $u\in\dom \big(\mathcal{H}_\tau(m) \big)$, $\mathcal{H}_\tau(m)u\in L^2(\mathbb{R}^3)^4$.

\begin{remark} Let $\varepsilon=\pm1$ and $\tau=2\varepsilon$. Let $u=(u_+,u_-)\in\dom\big(\mathcal{H}_\tau(m)\big)$, $u_\pm$ can be rewritten $u_\pm=(u_\pm^{[1]},u_\pm^{[2]})$ and, for $x\in \Sigma$, the transmission condition reads
\begin{equation}
	\left(\begin{array}{c}
		u_+^{[1]}(x)\\
		u_+^{[2]}(x)
	\end{array}\right) =
	\left(\begin{array}{cc}
		0 & -i\varepsilon \sigma\cdot\bfn(x)\\
		 -i\varepsilon \sigma\cdot\bfn(x) & 0
	\end{array}\right)\left(\begin{array}{c}
		u_-^{[1]}(x)\\
		u_-^{[2]}(x)
	\end{array}\right) = \left(\begin{array}{c}
		-i\varepsilon \sigma\cdot\bfn u_-^{[2]}(x)\\
		-i\varepsilon \sigma\cdot\bfn u_-^{[1]}(x)
	\end{array}\right).
	\label{eqn:cond_transm1}
\end{equation}
The specificity of $\tau=\pm2$ lies in the fact that the system uncouples: $u_+^{[1]}$, respectively $u_+^{[2]}$, only "sees" $u_-^{[2]}$, respectively $u_-^{[1]}$.
\end{remark}

\subsection{Symmetry of $\mathcal{H}_\tau (m)$}  We prove the following proposition.
\begin{prop} The Dirac operator coupled with an electrostatic $\delta$-interaction $\big(\mathcal{H}_\tau (m),\dom(\mathcal{H}_\tau (m))\big)$ is a symmetric operator.
\label{prop:sym_op}
\end{prop}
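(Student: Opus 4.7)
The plan is to apply Green's formula on both sides of $\Sigma$ and show that the resulting boundary terms cancel thanks to the transmission condition \eqref{eqn:transm_cond0}.

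Fix $u=(u_+,u_-),\, v=(v_+,v_-)\in\dom(\mathcal{H}_\tau(m))$. Since both components of $u$ and $v$ lie in $H^1$ of the respective domain, I apply Lemma \ref{lem:Green_dir} on each of $\Omega_+$ and $\Omega_-$ separately; note that the outer normal to $\Omega_-$ is $-\bfn$ and that the contribution of the mass term $m\beta$ cancels because $\beta$ is Hermitian. After these manipulations, the symmetry identity $\ps{\mathcal{H}_\tau(m)u}{v}_{L^2(\mathbb{R}^3)^4}=\ps{u}{\mathcal{H}_\tau(m)v}_{L^2(\mathbb{R}^3)^4}$ reduces to showing that
\[
B(u,v):=\ps{(-i\alpha\cdot\bfn)\tr u_+}{\tr v_+}_{L^2(\Sigma)^4}-\ps{(-i\alpha\cdot\bfn)\tr u_-}{\tr v_-}_{L^2(\Sigma)^4}
\]
vanishes.

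To kill $B(u,v)$, I would switch to the alternative form \eqref{eqn:transm_cond0_alt} of the transmission condition and write $\tr u_+=\mathcal{R}_\tau \tr u_-$ and $\tr v_+=\mathcal{R}_\tau \tr v_-$. Two pointwise algebraic facts about the matrix $\mathcal{R}_\tau$ finish the job: first, $\mathcal{R}_\tau$ commutes with $\alpha\cdot\bfn$ since it is a scalar polynomial in $\alpha\cdot\bfn$ (this is essentially \eqref{eqn:pte_P_tau0}), and second, $\mathcal{R}_\tau^*\mathcal{R}_\tau={\rm Id}$, a short computation using $(\alpha\cdot\bfn)^2={\rm Id}$. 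Combining these gives $\mathcal{R}_\tau^*(\alpha\cdot\bfn)\mathcal{R}_\tau=\alpha\cdot\bfn$, so that after substitution the two terms in $B(u,v)$ become equal and cancel.

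No serious obstacle is expected, as the argument is essentially algebraic. The only minor points to check are that the Green's formula of Lemma \ref{lem:Green_dir} indeed applies on the unbounded exterior $\Omega_-$ (handled by density of $\mathcal{C}_0^\infty(\overline{\Omega_-})^4$ in $H^1(\Omega_-)^4$ together with the continuity of $\tr$) and the two matrix identities for $\mathcal{R}_\tau$ above. A more direct variant would bypass $\mathcal{R}_\tau$ by writing $-i\alpha\cdot\bfn=\frac{1}{2}(\mathcal{P}_\tau^{*}-\mathcal{P}_\tau)$ and plugging the transmission condition for $u$ and $v$ directly into $B(u,v)$, but the unitary-matrix reformulation makes the cancellation the most transparent.
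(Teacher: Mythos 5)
Your proposal is correct and follows essentially the same route as the paper: Green's formula on each side of $\Sigma$ reduces symmetry to the vanishing of the boundary form $B(u,v)$, and the transmission condition in the form $\tr u_\pm$ related by $\mathcal{R}_\tau$ kills it via the identity $\mathcal{R}_\tau^*(-i\alpha\cdot\bfn)\mathcal{R}_\tau + i\alpha\cdot\bfn = 0$, which is exactly the relation the paper verifies (your factorization into $[\mathcal{R}_\tau,\alpha\cdot\bfn]=0$ and $\mathcal{R}_\tau^*\mathcal{R}_\tau=\mathrm{Id}$ is a clean way to see it). No gaps.
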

\begin{proof}[Proof of Proposition \ref{prop:sym_op}] Let $u=(u_+,u_-),v=(v_+,v_-) \in H^1(\Omega_+)^4\times H^1(\Omega_-)^4$. Green's formula of Lemma \ref{lem:Green_dir} yields
\begin{align*}
	\ps{\mathcal{H}_\tau(m) u}{v}_{L^2(\mathbb{R}^3)^4} & = \ps{\mathcal{H}(m) u_+}{v_+}_{L^2(\Omega_+)^4} + \ps{\mathcal{H}(m) u_-}{v_-}_{L^2(\Omega_-)^4}\\
& = \ps{u_+}{\mathcal{H}(m)v_+}_{L^2(\Omega_+)^4} + \ps{u_-}{\mathcal{H}(m)v_-}_{L^2(\Omega_-)^4}\\
& + \ps{(-i\alpha\cdot\bfn)\tr u_+}{\tr v_+}_{L^2(\Sigma)^4} - \ps{(-i\alpha\cdot\bfn)\tr u_-}{\tr v_-}_{L^2(\Sigma)^4},\\
\end{align*}
which rewrites
\[
	\ps{\mathcal{H}_\tau(m) u}{v}_{L^2(\mathbb{R}^3)^4} - \ps{u}{\mathcal{H}_\tau(m)v}_{L^2(\mathbb{R}^3)^4} = \ps{(-i\alpha\cdot\bfn)\tr u_+}{\tr v_+}_{L^2(\Sigma)^4} - \ps{(-i\alpha\cdot\bfn)\tr u_-}{\tr v_-}_{L^2(\Sigma)^4}.
\]
Now, assume that both $u$ and $v$ satisfy Transmission condition \eqref{eqn:transm_cond0_alt}, we have
\begin{align*}
	\ps{\mathcal{H}_\tau(m) u}{v}_{L^2(\mathbb{R}^3)^4} - \ps{u}{\mathcal{H}_\tau(m)v}_{L^2(\mathbb{R}^3)^4} & = \ps{(-i\alpha\cdot\bfn)\tr u_+}{\tr v_+}_{L^2(\Sigma)^4} - \ps{(-i\alpha\cdot\bfn)\tr u_-}{\tr v_-}_{L^2(\Sigma)^4} \\ &= \ps{(-i\alpha\cdot\bfn)\mathcal{R}_\tau \tr u_-}{\mathcal{R}_\tau \tr v_-}_{L^2(\Sigma)^4} - \ps{(-i\alpha\cdot\bfn)\tr u_-}{\tr v_-}_{L^2(\Sigma)^4}\\&=\ps{(\mathcal{R}_\tau^*(-i\alpha\cdot\bfn)\mathcal{R}_\tau + i\alpha\cdot\bfn) \tr u_-}{\tr v_-}_{L^2(\Sigma)^4}.
\end{align*}
By definition of $\mathcal{R}_\tau$ we have:
\[
	\mathcal{R}_\tau^*(-i\alpha\cdot\bfn)\mathcal{R}_\tau + i\alpha\cdot\bfn = 0.
\]
It achieves the proof of Proposition \ref{prop:sym_op}.
\end{proof}

\subsection{Domain of the adjoint} The aim of this subsection is to prove the following result.
\begin{prop}
We have
\[
	\dom(\mathcal{H}_\tau(m)^*) = \{(u_+,u_-)\in H(\alpha,\Omega_+)\times H(\alpha,\Omega_-) : (u_+,u_-) \text{ satisfies  \eqref{eqn:transm_cond0} in } H^{-1/2}(\Sigma)^4\}.
\]
\label{prop:eg_dom_adjoint}
\end{prop}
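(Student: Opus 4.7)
The strategy is to follow the template of Proposition \ref{prop:desc_adj_MIT}. Denote by $\mathcal{V}$ the set on the right-hand side of the statement and prove the two inclusions separately. The central tool is a two-sided Green's formula: applying Corollary \ref{prop:Green_weak} in $\Omega_+$ and $\Omega_-$ (remembering that the outward normal to $\Omega_-$ is $-\bfn$), for any $u=(u_+,u_-)\in H(\alpha,\Omega_+)\times H(\alpha,\Omega_-)$ and $v=(v_+,v_-)\in H^1(\Omega_+)^4\times H^1(\Omega_-)^4$ the defect of symmetry reduces to the boundary contribution
\begin{equation*}
\ps{(-i\alpha\cdot\bfn)\tr u_+}{\tr v_+}_{H^{-1/2}(\Sigma)^4,H^{1/2}(\Sigma)^4} - \ps{(-i\alpha\cdot\bfn)\tr u_-}{\tr v_-}_{H^{-1/2}(\Sigma)^4,H^{1/2}(\Sigma)^4}.
\end{equation*}

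For the inclusion $\mathcal{V}\subset\dom(\mathcal{H}_\tau(m)^*)$, I would pick $u\in\mathcal{V}$ and $v\in\dom(\mathcal{H}_\tau(m))$. Since $\mathcal{P}_\tau$ is invertible, both transmission conditions rewrite as $\tr u_+=\mathcal{R}_\tau\tr u_-$ and $\tr v_+=\mathcal{R}_\tau\tr v_-$ with $\mathcal{R}_\tau$ from \eqref{eqn:transm_cond0_alt}. Inserting these into the boundary contribution and invoking the algebraic identity $\mathcal{R}_\tau^*(-i\alpha\cdot\bfn)\mathcal{R}_\tau + i\alpha\cdot\bfn = 0$ already exploited in the proof of Proposition \ref{prop:sym_op}, the boundary term vanishes, so $u\in\dom(\mathcal{H}_\tau(m)^*)$.

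For the converse inclusion, let $u\in\dom(\mathcal{H}_\tau(m)^*)$ with associated $w\in L^2(\mathbb{R}^3)^4$. Testing against $v=(v_+,0)$ and $v=(0,v_-)$ with $v_\pm\in\mathcal{C}_0^\infty(\Omega_\pm)^4$ (both of which lie in $\dom(\mathcal{H}_\tau(m))$ since their traces vanish) yields $\mathcal{H}(m)u_\pm = w_\pm$ in $\mathcal{D}'(\Omega_\pm)^4$, hence $u_\pm\in H(\alpha,\Omega_\pm)$. With this regularity secured, the Green's formula above applies to every $v\in\dom(\mathcal{H}_\tau(m))$ and forces the boundary term to vanish. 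Parametrising $\dom(\mathcal{H}_\tau(m))$ through $g:=\tr v_-$ ranging over $H^{1/2}(\Sigma)^4$, with $\tr v_+=\mathcal{R}_\tau g$ and both data lifted to $H^1$ via the extension operator of Proposition \ref{prop:tr_th_cla} (using that $\mathcal{R}_\tau$ stabilises $H^{1/2}(\Sigma)^4$ thanks to the $\mathcal{C}^2$-regularity of $\Sigma$), one arrives at
\begin{equation*}
\ps{\mathcal{R}_\tau^*(-i\alpha\cdot\bfn)\tr u_+ - (-i\alpha\cdot\bfn)\tr u_-}{g}_{H^{-1/2}(\Sigma)^4,H^{1/2}(\Sigma)^4} = 0,\quad\text{for all } g\in H^{1/2}(\Sigma)^4.
\end{equation*}
The transmission condition in $H^{-1/2}(\Sigma)^4$ then follows by applying $\mathcal{R}_\tau$ to this identity, using the elementary relation $\mathcal{R}_\tau\mathcal{R}_\tau^*=\rm{Id}$ and the fact that $\mathcal{R}_\tau$, being a polynomial in $i\alpha\cdot\bfn$, commutes with $-i\alpha\cdot\bfn$, which recovers $\tr u_+=\mathcal{R}_\tau\tr u_-$, i.e., \eqref{eqn:transm_cond0}.

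The main obstacle is not any single computation but the constant bookkeeping of dualities: every matrix-valued multiplication operator ($\mathcal{P}_\tau$, $\mathcal{R}_\tau$, $\alpha\cdot\bfn$) must be handled as a bounded map both on $H^{1/2}(\Sigma)^4$ and on $H^{-1/2}(\Sigma)^4$, and it is precisely the $\mathcal{C}^2$-regularity of $\Sigma$ that guarantees this simultaneous continuity, allowing one to freely move between the two Sobolev scales throughout the argument and to reach, when $\tau=\pm2$, the genuinely weak transmission condition that lies at the heart of Theorem \ref{th:sadirac}.
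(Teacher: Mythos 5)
Your proposal is correct and follows essentially the same route as the paper: both inclusions are handled via Corollary \ref{prop:Green_weak}, the interior regularity $u_\pm\in H(\alpha,\Omega_\pm)$ is obtained by testing against $\mathcal{C}_0^\infty(\Omega_\pm)^4$ functions, and the transmission condition is extracted by testing against lifted boundary data. The only (immaterial) difference is the parametrisation of the test traces: you use $(\tr v_+,\tr v_-)=(\mathcal{R}_\tau g,\,g)$ and then invert $\mathcal{R}_\tau$ using $\mathcal{R}_\tau\mathcal{R}_\tau^*=\mathrm{Id}$, whereas the paper takes $(\mathcal{P}_\tau^*f,\,-\mathcal{P}_\tau f)$, which lands directly on the form \eqref{eqn:transm_cond0}.
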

\begin{proof}[Proof of Proposition \ref{prop:eg_dom_adjoint}] Let $\mathcal{V}$ be the set on the right-hand side in Proposition \ref{prop:eg_dom_adjoint}. We prove this result proving each inclusion.

\noindent
\myemph{Inclusion $\mathcal{V}\subset \dom(\mathcal{H}_\tau(m)^*)$.} Let $u=(u_+,u_-)\in\mathcal{V}$ and $v=(v_+,v_-)\in\dom(\mathcal{H}_\tau(m))$. Thanks to Corollary \ref{prop:Green_weak} we have
\begin{align*}
	\ps{u}{\mathcal{H}_\tau(m) v}_{L^2(\mathbb{R}^3)^4} &= \ps{\mathcal{H}(m) u_+}{v_+}_{L^2(\Omega_+)^4} + \ps{\mathcal{H}(m)u_-}{v_-}_{L^2(\Omega_-)^4}\\
&+ \ps{(-i\alpha\cdot\bfn)\tr u_+}{\tr v_+}_{H^{-1/2}(\Sigma)^4,H^{1/2}(\Sigma)^4} - \ps{(-i\alpha\cdot\bfn)\tr u_-}{\tr v_-}_{H^{-1/2}(\Sigma)^4,H^{1/2}(\Sigma)^4}\\
&=\ps{\{\mathcal{H}(m) u\}}{v}_{L^2(\mathbb{R}^3)^4} + \ps{(\mathcal{R}_\tau^*(-i\alpha\cdot\bfn)\mathcal{R}_\tau + i\alpha\cdot\bfn) \tr u_-}{\tr v_-}_{H^{-1/2}(\Sigma)^4,H^{1/2}(\Sigma)^4},
\end{align*}
where $\{\mathcal{H}(m)u\} = (\mathcal{H}(m)u_+)\mathds{1}_{\Omega_+} + (\mathcal{H}(m)u_-)\mathds{1}_{\Omega_-} \in L^2(\mathbb{R}^3)^4$. As in the proof of Proposition \ref{prop:sym_op} we remark that
\[
	\mathcal{R}_\tau^*(-i\alpha\cdot\bfn)\mathcal{R}_\tau + i\alpha\cdot\bfn = 0,
\]
thus
\[
	\ps{u}{\mathcal{H}_\tau(m) v}_{L^2(\mathbb{R}^3)^4} = \ps{\{\mathcal{H}(m) u\}}{v}_{L^2(\mathbb{R}^3)^4},
\]
which proves that $u\in\dom(\mathcal{H}_\tau(m)^*)$.

\noindent
\myemph{Inclusion $\dom(\mathcal{H}_\tau(m)^*)\subset\mathcal{V}$.} Let $u=(u_+,u_-)\in \dom(\mathcal{H}_\tau(m)^*)$ and $v=(v_+,v_-)\in \mathcal{C}_0^\infty(\Omega_+)^4\times\mathcal{C}_0^\infty(\Omega_-)^4$. We have
\[
	\ps{\mathcal{H}(m) u }{v}_{\mathcal{D}'(\mathbb{R}^3)^4,\mathcal{D}(\mathbb{R}^3)^4} = \ps{u }{\mathcal{H}(m) v}_{\mathcal{D}'(\mathbb{R}^3)^4,\mathcal{D}(\mathbb{R}^3)^4} = \ps{u_+}{\mathcal{H}(m) v_+}_{L^2(\Omega_+)} + \ps{u_-}{\mathcal{H}(m) v_-}_{L^2(\Omega_+)}.
\]
As $u\in\dom(\mathcal{H}_\tau(m)^*)$, there exists $w\in L^2(\mathbb{R}^3)^4$ such that
\[
	\ps{u_+}{\mathcal{H}(m) v_+}_{L^2(\Omega_+)} + \ps{u_-}{\mathcal{H}(m) v_-}_{L^2(\Omega_+)} = \ps{w}{v}_{L^2(\mathbb{R}^3)^4} = \ps{w_+}{v_+}_{\mathcal{D}'(\Omega_+)^4,\mathcal{D}(\Omega_+)^4} + \ps{w_-}{v_-}_{\mathcal{D}'(\Omega_-)^4,\mathcal{D}(\Omega_-)^4},
\]
where $w_\pm = w \mathds{1}_{\Omega_\pm}$. As this is true for every $v\in\mathcal{C}_0^\infty(\Omega_+)^4\times\mathcal{C}_0^\infty(\Omega_-)^4$ we get $\mathcal{H}(m) u_\pm = w_\pm$ in $\mathcal{D}'(\Omega_\pm)^4$ and then in $L^2(\Omega_\pm)^4$. Thus $u\in H(\alpha,\Omega_-)\times H(\alpha,\Omega_+)$.

Let $f\in H^{1/2}(\Sigma)^4$, we have:
\begin{multline*}
	\ps{\tr u_+}{(i\alpha\cdot\bfn)f}_{H^{-1/2}(\Sigma)^4,H^{1/2}(\Sigma)^4} - \ps{\tr u_-}{(i\alpha\cdot\bfn)f}_{H^{-1/2}(\Sigma)^4,H^{1/2}(\Sigma)^4} \\
 = \ps{\tr u_+}{(i\alpha\cdot\bfn)\mathcal{P}_\tau^*f}_{H^{-1/2}(\Sigma)^4,H^{1/2}(\Sigma)^4} - \ps{\tr u_-}{(i\alpha\cdot\bfn)(-\mathcal{P}_\tau)f}_{H^{-1/2}(\Sigma)^4,H^{1/2}(\Sigma)^4}\\
+\ps{\tr u_+}{(i\alpha\cdot\bfn)({\rm Id} - \mathcal{P}_\tau^*)f}_{H^{-1/2}(\Sigma)^4,H^{1/2}(\Sigma)^4} - \ps{\tr u_-}{(i\alpha\cdot\bfn)({\rm Id} +\mathcal{P}_\tau)f}_{H^{-1/2}(\Sigma)^4,H^{1/2}(\Sigma)^4}
\end{multline*}
Now, we remark that the function $\big(E_+ (\mathcal{P}_\tau^*f), -E_- (\mathcal{P}_\tau f)\big)\in\dom\big(H_\tau(m)\big)$, where $E_\pm$ is the extension operator of Proposition \ref{prop:tr_th_cla} in $H^1(\Omega_\pm)$. Thus we get
\[
\ps{\tr u_+}{(i\alpha\cdot\bfn)\mathcal{P}_\tau^*f}_{H^{-1/2}(\Sigma)^4,H^{1/2}(\Sigma)^4} - \ps{\tr u_-}{(i\alpha\cdot\bfn)(-\mathcal{P}_\tau)f}_{H^{-1/2}(\Sigma)^4,H^{1/2}(\Sigma)^4} = 0,
\]
which implies
\[
	\ps{\tr u_+}{(i\alpha\cdot\bfn)\mathcal{P}_\tau^*f}_{H^{-1/2}(\Sigma)^4,H^{1/2}(\Sigma)^4} + \ps{\tr u_-}{(i\alpha\cdot\bfn)\mathcal{P}_\tau f}_{H^{-1/2}(\Sigma)^4,H^{1/2}(\Sigma)^4} = 0.
\]
Hence, for all $f\in H^{1/2}(\Sigma)^4$, we get
\[
\ps{\mathcal{P}_\tau \tr u_+ + \mathcal{P}_\tau^* \tr u_-}{(i\alpha\cdot\bfn)f}_{H^{-1/2}(\Sigma)^4,H^{1/2}(\Sigma)^4} = 0,
\]
which proves that $u$ satisfies Transmission condition \eqref{eqn:transm_cond0} in $H^{-1/2}(\Sigma)^4$.
\end{proof}

\subsection{Self-adjointness}
The aim of this subsection is to prove {\it i)}~Theorem \ref{th:sadirac}.

\begin{proof}[Proof of {\it i)}~Theorem \ref{th:sadirac}] Let $u\in\dom\big(\mathcal{H}_\tau(m)^*\big)$, thanks to Proposition \ref{prop:eg_dom_adjoint} we know that
\begin{equation}
	\mathcal{P}_\tau\tr u_+ + \mathcal{P}_\tau^* \tr u_- = 0.
\label{eqn:cond_trans_preu}
\end{equation}
Thanks to {\it iii)} Proposition \ref{prop:propcaldproj} we have
\begin{align*}
	\eqref{eqn:cond_trans_preu}&\Longleftrightarrow\left\{\begin{array}{lcl}\mathcal{C}_+\big(\mathcal{P}_\tau\tr u_+ + \mathcal{P}_\tau^* \tr u_-\big) & = & 0\\\mathcal{C}_-\big(\mathcal{P}_\tau\tr u_+ + \mathcal{P}_\tau^* \tr u_-\big) & = & 0\end{array}\right.,\\
&\Longleftrightarrow \left\{	\begin{array}{lcl}
									\frac\tau2\big(\mathcal{C}_+(\tr u_+) + \mathcal{C}_+(\tr u_-)\big) + i(\alpha\cdot\bfn)\mathcal{C}_-^*(\tr u_+ - \tr u_-) &=& 0\\
\frac\tau2\big(\mathcal{C}_-(\tr u_+) + \mathcal{C}_-(\tr u_-)\big) + i(\alpha\cdot\bfn)\mathcal{C}_+^*(\tr u_+ - \tr u_-) &=& 0
								\end{array}\right.,\\
&\Longleftrightarrow\left\{	\begin{array}{lcl}
									\frac\tau2\big(\mathcal{C}_+(\tr u_+) + \mathcal{C}_+(\tr u_-)\big) + i(\alpha\cdot\bfn)\big(\mathcal{C}_-(\tr u_+) - \mathcal{C}_-(\tr u_-) + i\mathcal{A}(\tr u_+ - \tr u_-)\big) &=& 0\\
\frac\tau2\big(\mathcal{C}_-(\tr u_+) + \mathcal{C}_-(\tr u_-)\big) + i(\alpha\cdot\bfn)\big(\mathcal{C}_+(\tr u_+) - \mathcal{C}_+(\tr u_-) - i\mathcal{A}(\tr u_+ - \tr u_-)\big) &=& 0
								\end{array}\right.,
\end{align*}
where we also used {\it iv)} Proposition \ref{prop:propcaldproj} and Relation \eqref{eqn:def_anticom}. This system rewrites as
\begin{multline}
	\left(\begin{array}{cc}\frac\tau2&-i\alpha\cdot\bfn\\i\alpha\cdot\bfn&\frac\tau2\end{array}\right)\left(\begin{array}{c}\mathcal{C}_+(\tr u_+)\\\mathcal{C}_-(\tr u_-)\end{array}\right) = \left(\begin{array}{cc}-\frac\tau2&-i\alpha\cdot\bfn\\i\alpha\cdot\bfn&-\frac\tau2\end{array}\right)\left(\begin{array}{c}\mathcal{C}_+(\tr u_-)\\\mathcal{C}_-(\tr u_+)\end{array}\right) \\+ \left(\begin{array}{c}(\alpha\cdot\bfn)\mathcal{A}(\tr u_+ - \tr u_-)\\-(\alpha\cdot\bfn)\mathcal{A}(\tr u_+ - \tr u_-)\end{array}\right).
\label{eqn:sadelta}
\end{multline}
Now, thanks to Propositions \ref{prop:regCaldtr} and \ref{ref:reg_anti}, the right-hand side is in $H^{1/2}(\Sigma)^8$ and the matrix in the left-hand side is invertible in $H^{1/2}(\Sigma)^8$ as long as $\tau\neq\pm2$. Thus $\tr u_\pm\in H^{1/2}(\Sigma)^4$ and applying Proposition \ref{prop:regul_trac} we obtain the inclusion $\dom\big(\mathcal{H}_\tau(m)^*\big)\subset\dom\big(\mathcal{H}_\tau(m)\big)$. The reciprocal inclusion is a consequence of Proposition \ref{prop:sym_op}.
\end{proof}
\subsection{Essential self-adjointness when $\tau = \pm 2$} Now, we prove {\it ii)}~Theorem \ref{th:sadirac}. All along this subsection, we set $\varepsilon=\pm1$ and let $\tau=2\varepsilon$. We work with the operator $\Phi_\pm:=\Phi_{\Omega_\pm,\mu}$ defined in \eqref{eqn:dfn_laypot} with a fixed $\mu\neq0$.

We have the following proposition.
\begin{prop} Let $\tau=2\varepsilon$. The following holds:
\[
	\overline{\mathcal{H}_\tau(m)} = \mathcal{H}_\tau^*(m).
\]
In particular, $\overline{\mathcal{H}_\tau(m)}$ is self-adjoint.
\label{prop:clos_dom}
\end{prop}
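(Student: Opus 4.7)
The strategy is to show that $\mathcal{H}_\tau^*(m)$ is itself symmetric at $\tau=2\varepsilon$. Since $\mathcal{H}_\tau(m)\subset\mathcal{H}_\tau^*(m)$ (Proposition \ref{prop:sym_op} combined with Proposition \ref{prop:eg_dom_adjoint}), taking adjoints yields $\mathcal{H}_\tau^{**}(m)\subset \mathcal{H}_\tau^*(m)$; symmetry of $\mathcal{H}_\tau^*(m)$ would give the reverse inclusion $\mathcal{H}_\tau^*(m)\subset\mathcal{H}_\tau^{**}(m)$, and because $\mathcal{H}_\tau^{**}(m)=\overline{\mathcal{H}_\tau(m)}$ the claim follows.

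Fix $u=(u_+,u_-),v=(v_+,v_-)\in\dom(\mathcal{H}_\tau^*(m))$. At $\tau=2\varepsilon$, condition \eqref{eqn:transm_cond0} rewrites, via \eqref{eqn:transm_cond0_alt} with $\tau^2/4=1$, as $\tr u_+=i\varepsilon(\alpha\cdot\bfn)\tr u_-$ in $H^{-1/2}(\Sigma)^4$, and analogously for $v$. To interpret the formal boundary form that arises from Green's identity, I decompose $u_\pm=r_\pm^u+\Phi_\pm(g_\pm^u)$ where $r_\pm^u:=V\bigl((\mathcal{H}(\mu)u_\pm)\mathds{1}_{\Omega_\pm}\bigr)\in H^1(\Omega_\pm)^4$ by Proposition \ref{prop:pot_regul}. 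Since $u_\pm-r_\pm^u$ is $\mathcal{H}(\mu)$-harmonic in $\Omega_\pm$, the reproducing formula from the Remark following the proof of Proposition \ref{prop:regCaldtr} gives $g_\pm^u:=\pm i(\alpha\cdot\bfn)(\tr u_\pm-\tr r_\pm^u)\in H^{-1/2}(\Sigma)^4$; the same decomposition is used for $v$.

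Expanding $\langle\mathcal{H}_\tau^*(m)u,v\rangle-\langle u,\mathcal{H}_\tau^*(m)v\rangle$ linearly in this decomposition produces, on each $\Omega_\pm$, three kinds of contributions. The regular-regular pieces are integrated by parts by Lemma \ref{lem:Green_dir}; the mixed regular-layer pieces are rewritten via the adjoint identity $\Phi_\pm^*=\tr_\pm\circ V$ from the proof of Proposition \ref{prop:coin_def} together with $\mathcal{H}(\mu)\Phi_\pm=0$ (Proposition \ref{prop:nul_distrib}), reducing them to genuine $H^{-1/2}\!-\!H^{1/2}$ duality brackets pairing $g_\pm^{u,v}$ with the $H^{1/2}$-traces of $r_\pm^{v,u}$; the layer-layer piece survives only through $\mathcal{H}(m)\Phi_\pm(g)=(m-\mu)\beta\Phi_\pm(g)$ and produces a term manifestly symmetric in $(u,v)$ that cancels by self-adjointness of $\beta$. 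Re-grouping the surviving contributions and using the algebraic identity
\[
\bigl\langle -i(\alpha\cdot\bfn)\bigl(i\varepsilon(\alpha\cdot\bfn)\xi\bigr),i\varepsilon(\alpha\cdot\bfn)\eta\bigr\rangle=\bigl\langle -i(\alpha\cdot\bfn)\xi,\eta\bigr\rangle,
\]
which holds by $(\alpha\cdot\bfn)^2=\mathrm{Id}$ and the self-adjointness of $\alpha\cdot\bfn$, with $\xi=\tr u_-$ and $\eta=\tr v_-$ on the $\Omega_+$ side, the $\Omega_+$ and $\Omega_-$ boundary integrands match term by term and their signed sum vanishes, proving the symmetry of $\mathcal{H}_\tau^*(m)$.

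The main obstacle is that the formal bracket $\langle -i(\alpha\cdot\bfn)\tr u_\pm,\tr v_\pm\rangle$ is undefined a priori when both traces live only in $H^{-1/2}$. The key inputs that make the rewriting above rigorous are Proposition \ref{prop:regCaldtr} (saying $\mathcal{C}_\mp\circ\tr_\pm$ lands in $H^{1/2}$) and Proposition \ref{ref:reg_anti} (saying the anti-commutator $\mathcal{A}$ gains one derivative): they let every troublesome ``$H^{-1/2}\!\times\!H^{-1/2}$'' subpairing be transferred into a genuine $H^{-1/2}\!-\!H^{1/2}$ duality. The reason only essential—rather than full—self-adjointness is obtained at $\tau=\pm 2$ is that the matrix governing the regularity upgrade in \eqref{eqn:sadelta} has a non-trivial kernel for those values, so one cannot deduce $\tr u_\pm\in H^{1/2}$ as in part i); the ``isometric rotation by $i\varepsilon(\alpha\cdot\bfn)$'' structure of the transmission condition nevertheless produces exact termwise cancellation of the reassembled boundary form, yielding the desired symmetry of $\mathcal{H}_\tau^*(m)$.
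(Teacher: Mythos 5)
Your overall strategy is legitimate and genuinely different from the paper's: you aim to prove that $\mathcal{H}_\tau(m)^*$ is symmetric, which together with $\mathcal{H}_\tau(m)^{**}=\overline{\mathcal{H}_\tau(m)}$ and inclusion reversal under adjoints would indeed yield $\overline{\mathcal{H}_\tau(m)}=\mathcal{H}_\tau(m)^*$. The paper instead argues constructively: for $u\in\dom(\mathcal{H}_\tau(m)^*)$ it builds an explicit sequence $u_n\in\dom(\mathcal{H}_\tau(m))$ (formula \eqref{eqn:def_seq}, combining $\Phi_\pm$, the extension operator $E_+$ and the smoothing of $\mathcal{A}$) converging to $u$ in the graph norm (Lemma \ref{lem:prop_seq_clos}), which gives $\dom(\mathcal{H}_\tau(m)^*)\subset\dom(\overline{\mathcal{H}_\tau(m)})$ directly. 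Within your route, the decomposition $u_\pm=r_\pm^u+\Phi_\pm(g_\pm^u)$ and the observation that the layer--layer contribution carries no boundary term because $\mathcal{H}(m)\Phi_\pm(g)=(m-\mu)\beta\Phi_\pm(g)$ are both sound.

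The gap is in the final cancellation. After your reduction, the boundary form on $\Omega_+$ is a sum of three well-defined pairings involving the pieces $\tr r_+^u$, $\tr\Phi_+(g_+^u)$, $\tr r_+^v$, $\tr\Phi_+(g_+^v)$ (the fourth, layer--layer pairing being absent), and similarly on $\Omega_-$ with the pieces of the \emph{exterior} decomposition. To conclude you apply the identity $\langle -i(\alpha\cdot\bfn)(i\varepsilon(\alpha\cdot\bfn)\xi),i\varepsilon(\alpha\cdot\bfn)\eta\rangle=\langle -i(\alpha\cdot\bfn)\xi,\eta\rangle$ with $\xi=\tr u_-$, $\eta=\tr v_-$, i.e.\ you substitute the transmission condition into the \emph{reassembled} full pairing $\langle -i(\alpha\cdot\bfn)\tr u_+,\tr v_+\rangle$. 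But that reassembly is exactly the ill-defined operation you set out to avoid: Transmission condition \eqref{eqn:transm_cond} constrains only the total traces $\tr u_\pm$, not the individual pieces, and there is no reason that $\tr r_+^u=i\varepsilon(\alpha\cdot\bfn)\tr r_-^u$ or that the singular parts correspond. The interior and exterior Calder\'on structures are intertwined by $i\varepsilon(\alpha\cdot\bfn)$ only through $(\alpha\cdot\bfn)\mathcal{C}_\pm=\mathcal{C}_\mp^*(\alpha\cdot\bfn)$, i.e.\ up to the anticommutator $\mathcal{A}$ --- precisely the defect the paper's sequence \eqref{eqn:def_seq} is built to compensate. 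So ``the boundary integrands match term by term'' is asserted, not proved; to close the argument you would need to show that your renormalized three-term boundary form is independent of the chosen decomposition and transforms under $\tr u_+\mapsto i\varepsilon(\alpha\cdot\bfn)\tr u_-$ as the formal pairing does (for instance by first approximating $u$ in graph norm by $H^1$ functions satisfying the transmission condition --- which brings you back to the paper's construction).
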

For $u\in\dom(\mathcal{H}_\tau(m)^*)$, Transmission condition \eqref{eqn:transm_cond0_alt} simply reads
\begin{equation}
	\tr u_+ = i\varepsilon(\alpha\cdot \bfn) \tr u_-,
	\label{eqn:transm_cond}
\end{equation}
as an equality in $H^{-1/2}(\Sigma)^4$.

Let us introduce a few notation. For $u=(u_+,u_-)\in\dom(\mathcal{H}_\tau(m)^*)$, if $(f_{n})_{n\in\mathbb{N}}$ is a sequence of functions $\mathcal{C}^\infty(\Sigma)^4$ that converges to $\tr u_-$ in the $\|\cdot\|_{H^{-1/2}(\Sigma)^4}$-norm, we introduce:
\begin{equation}
	\left\{\begin{array}{rcl}
	u_{n,-} &=& u_- + i\Phi_-\big((\alpha\cdot \bfn) (\tr u_- - f_n)\big),\\
	u_{n,+} &=& u_+ - \varepsilon\Phi_+(f_n -\tr u_-) +\varepsilon E_+\Big(\mathcal{A}\big((\alpha\cdot \bfn)(f_n - \tr u_-)\big)\Big),
	\end{array}\right.
	\label{eqn:def_seq}
\end{equation}
where $E_+$ is the extension operator defined in Proposition \ref{prop:tr_th_cla} with $\Omega=\Omega_+$ and $\mathcal{A}:=\mathcal{A}_\mu$ is the anticommutator \eqref{eqn:def_anticom}. We have the following lemma:
\begin{lem} Let $u=(u_+,u_-)\in\dom(\mathcal{H}_\tau(m)^*)$ and $(f_{n})_{n\in\mathbb{N}}$ be a sequence of functions $\mathcal{C}^\infty(\Sigma)^4$ that converges to $\tr u_-$ in the $\|\cdot\|_{H^{-1/2}(\Sigma)^4}$-norm. If $u_n=(u_{n,-},u_{n,+})$ is the sequence defined in \eqref{eqn:def_seq} then:
\begin{itemize}
	\item[i)] $u_n\in H^1(\Omega_+)^4\times H^1(\Omega_-)^4$,
	\item[ii)] $(u_{n,+},u_{n,-})$ satisfies Transmission condition \eqref{eqn:transm_cond} in $H^{1/2}(\Sigma)^4$,
	\item[iii)] $u_n$ converges to $u$ in the $\|\cdot\|_{H(\alpha,\Omega_+)\times H(\alpha,\Omega_-)}$-norm.
\end{itemize}
\label{lem:prop_seq_clos}
\end{lem}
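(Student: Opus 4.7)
The plan is to reduce both the regularity in i) and the boundary identity in ii) to a single explicit computation of the traces $\tr u_{n,\pm}$ using the Plemelj--Sokhotski jump relations encoded in the Calder\'on projectors, and to read off iii) from the continuity of the operators appearing in \eqref{eqn:def_seq}. The key observation is that the corrective term $\varepsilon E_+\big(\mathcal{A}((\alpha\cdot\bfn)(f_n-\tr u_-))\big)$ is tailor-made to compensate the asymmetry $\mathcal{C}_+ - \mathcal{C}_+^* = i\mathcal{A}$ of the Calder\'on projector, which is precisely what is needed to close the transmission condition at the level of the modified trace.

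First I would compute $\tr u_{n,-}$. Using $\mathfrak{t}_{\Sigma,-}\circ\Phi_-=C_-$ and the identity $iC_-((\alpha\cdot\bfn)g) = -\mathcal{C}_-(g)$ coming from Definition \ref{dfn:cald_proj}, together with iii) of Proposition \ref{prop:propcaldproj}, I obtain
\[
\tr u_{n,-} = \tr u_- - \mathcal{C}_-(\tr u_-) + \mathcal{C}_-(f_n) = \mathcal{C}_+(\tr u_-) + \mathcal{C}_-(f_n).
\]
The first summand equals $(\mathcal{C}_+\circ \mathfrak{t}_{\Sigma,-})(u_-)\in H^{1/2}(\Sigma)^4$ by Proposition \ref{prop:regCaldtr}, while $\mathcal{C}_-(f_n)$ is smooth, so $\tr u_{n,-}\in H^{1/2}(\Sigma)^4$ and Proposition \ref{prop:regul_trac} gives $u_{n,-}\in H^1(\Omega_-)^4$.

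For $u_{n,+}$ the analogous computation, now rewritten through $C_+(h) = -i\mathcal{C}_+((\alpha\cdot\bfn)h)$, yields
\[
\tr u_{n,+} = \tr u_+ + i\varepsilon\,\mathcal{C}_+\big((\alpha\cdot\bfn)(f_n-\tr u_-)\big) + \varepsilon\,\mathcal{A}\big((\alpha\cdot\bfn)(f_n-\tr u_-)\big).
\]
The crucial algebraic step is the cancellation of the $\mathcal{A}$-term: \eqref{eqn:def_anticom} gives $i\varepsilon\mathcal{C}_+ + \varepsilon\mathcal{A} = i\varepsilon\mathcal{C}_+^*$, and iv) of Proposition \ref{prop:propcaldproj} (in the form $\mathcal{C}_+^*(\alpha\cdot\bfn) = (\alpha\cdot\bfn)\mathcal{C}_-$) then provides
\[
\tr u_{n,+} = \tr u_+ + i\varepsilon(\alpha\cdot\bfn)\mathcal{C}_-(f_n - \tr u_-).
\]
Injecting the transmission condition \eqref{eqn:transm_cond} for $u$, namely $\tr u_+ = i\varepsilon(\alpha\cdot\bfn)\tr u_-$, and using once again $\mathcal{C}_+ + \mathcal{C}_- = \mathrm{Id}$, I conclude
\[
\tr u_{n,+} = i\varepsilon(\alpha\cdot\bfn)\big(\mathcal{C}_+(\tr u_-) + \mathcal{C}_-(f_n)\big) = i\varepsilon(\alpha\cdot\bfn)\,\tr u_{n,-},
\]
which is exactly \eqref{eqn:transm_cond} for $u_n$; this proves ii) and simultaneously places $\tr u_{n,+}$ in $H^{1/2}(\Sigma)^4$, so that Proposition \ref{prop:regul_trac} completes i).

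For iii), the differences $u_{n,\pm}-u_\pm$ are, by construction, $\Phi_\pm$, $\mathcal{A}$ and $E_+$ applied to $(\alpha\cdot\bfn)(\tr u_- - f_n)$. Multiplication by $\alpha\cdot\bfn$ is bounded on $H^{-1/2}(\Sigma)^4$ by $\mathcal{C}^2$-regularity of $\Sigma$; $\Phi_\pm$ are bounded from $H^{-1/2}(\Sigma)^4$ to $H(\alpha,\Omega_\pm)$ by Theorem \ref{thm:regu_distri} (note $\mu\neq 0$ here); $\mathcal{A}$ is bounded from $H^{-1/2}(\Sigma)^4$ to $H^{1/2}(\Sigma)^4$ by Proposition \ref{ref:reg_anti}; and $E_+\colon H^{1/2}(\Sigma)^4 \to H^1(\Omega_+)^4 \hookrightarrow H(\alpha,\Omega_+)$ is continuous by Propositions \ref{prop:tr_th_cla} and \ref{prop:prop_incl_ctn}. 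Since $f_n\to\tr u_-$ in $H^{-1/2}(\Sigma)^4$, all three corrective terms vanish in the respective $H(\alpha,\cdot)$-norm, proving iii). The main obstacle is really the algebraic miracle in ii): once one understands that the extension $E_+(\mathcal{A}(\cdot))$ is engineered precisely so that $\mathcal{C}_+$ can be replaced by its formal adjoint $\mathcal{C}_+^*$ at the boundary, the rest reduces to routine continuity and to the regularisation result of Proposition \ref{prop:regul_trac}.
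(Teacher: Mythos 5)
Your proof is correct and follows essentially the same route as the paper: the same trace computation $\tr u_{n,-}=\mathcal{C}_+(\tr u_-)+\mathcal{C}_-(f_n)$, the same use of \eqref{eqn:def_anticom} and of points \emph{iii)}--\emph{iv)} of Proposition \ref{prop:propcaldproj} to verify the transmission condition (you merely run the chain of identities from $\tr u_{n,+}$ towards $i\varepsilon(\alpha\cdot\bfn)\tr u_{n,-}$ instead of the reverse), and the same continuity estimates for part \emph{iii)}. The only cosmetic slip is the claim that $\mathcal{C}_-(f_n)$ is \enquote{smooth}: since $\Sigma$ is only $\mathcal{C}^2$ one should just say it lies in $H^{1/2}(\Sigma)^4$ because $f_n\in H^{1/2}(\Sigma)^4$ and $\mathcal{C}_-|_{H^{1/2}(\Sigma)^4}$ is bounded there, which is all the argument needs.
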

We postpone the proof of Lemma \ref{lem:prop_seq_clos} until the end of this subsection. We now have all the tools to prove Proposition \ref{prop:clos_dom}.
\begin{proof}[Proof of Proposition \ref{prop:clos_dom}] As $\mathcal{H}_\tau(m)$ is symmetric it is closable and we have $\overline{\mathcal{H}_\tau(m)}\subset\mathcal{H}_\tau(m)^*$. Now we deal with the other inclusion.

Let $u=(u_+,u_-)\in\dom(\mathcal{H}_\tau(m)^*)$ and $(f_{n})_{n\in\mathbb{N}}$ be a sequence of functions $\mathcal{C}^\infty(\Sigma)^4$ that converges to $\tr u_-$ in the $\|\cdot\|_{H^{-1/2}(\Sigma,\mathbb{C}^4)}$-norm. Let $u_n = (u_{n,-},u_{n,+})$ be as in \eqref{eqn:def_seq}. Thanks to {\it i)-ii)}~Lemma \ref{lem:prop_seq_clos}, we know that $u_n\in\dom(\mathcal{H}_\tau(m))$. Moreover thanks to {\it iii)}~Lemma \ref{lem:prop_seq_clos} we know that $\|u_n - u\|_{H(\alpha,\Omega_+)\times H(\alpha,\Omega_-)}\rightarrow0$ when $n\rightarrow+\infty$. Consequently, $u\in\dom(\overline{\mathcal{H}_\tau(m)})$ and we obtain the reversed inclusion, that is $\mathcal{H}_\tau(m)^*\subset\overline{\mathcal{H}_\tau(m)}$.
\end{proof}

\begin{proof}[Proof of Lemma \ref{lem:prop_seq_clos}] For the sake of clarity, this proof is split into two steps. The proofs of {\it i)} and {\it ii)} are gathered in Step 1. Step 2 deals with the proof of {\it iii)}.

Let $u=(u_+,u_-)\in\dom(\mathcal{H}_\tau(m)^*)$ and $(u_n)_{n\in\mathbb{N}}$ be the associated sequence defined in \eqref{eqn:def_seq}.

\myemph{Step 1.} By definition, $u_n\in H(\alpha,\Omega_+)\times H(\alpha,\Omega_-)$. Thanks to Proposition \ref{prop:regul_trac}, it is enough to prove that the traces $\tr u_{n,-}$ and $\tr u_{n,+}$ are in $H^{1/2}(\Sigma)^4$. Let us start with $u_{n,-}$, we have:
\[
	\tr u_{n,-} = \tr u_-  -  \mathcal{C}_-(\tr u_-) + \mathcal{C}_-(f_n) = \mathcal{C}_+(\tr u_-) + \mathcal{C}_-(f_n).
\]
by Proposition \ref{prop:regCaldtr}, the first term in the right hand-side is in $H^{1/2}(\Sigma)^4$. The second term is also in $H^{1/2}(\Sigma)^4$ by Corollary \ref{thm:cont_tr}.

Let us prove that Transmission condition \eqref{eqn:transm_cond} holds. Taking into account {\it iii)}~Proposition \ref{prop:propcaldproj} and \eqref{eqn:def_anticom} we get:
\[
	\begin{array}{lcl}
	i\varepsilon (\alpha\cdot \bfn) \tr u_{n,-} 	& = & i\varepsilon (\alpha\cdot \bfn) \big(\tr u_- - \mathcal{C}_-(\tr u_- - f_n)\big)\\
										& = & \tr u_+ - i\varepsilon (\alpha\cdot \bfn)\mathcal{C}_-(\tr u_- - f_n)\\
										& = & \tr u_+ - i\varepsilon \mathcal{C}_+^*\big((\alpha\cdot \bfn) (\tr u_- - f_n)\big)\\
										& = & \tr u_+ - i\varepsilon\Big(\mathcal{C}_+\big((\alpha\cdot \bfn) (\tr u_- - f_n)\big) - i\mathcal{A}\big((\alpha\cdot \bfn) (\tr u_- - f_n)\big)\Big)\\
										& = &\tr u_+ - i\varepsilon (\alpha\cdot \bfn) \mathcal{C}_-^*\big((\tr u_- - f_n)\big) - \varepsilon\mathcal{A}\big((\alpha\cdot \bfn) (\tr u_- - f_n)\big)\\
										& = & \tr u_{n,+}.
	\end{array}
\]
As $u_{n,-}\in H^1(\Omega_-)^4$, it implies $u_{n,+}\in H^1(\Omega_+)^4$ and we get {\it ii)}.

\myemph{Step 2.} In this step, we prove {\it iii)}.

\noindent
Let us start with $u_{n,-}$, we have:
\begin{equation}
	u_{n,-} - u_- = i \Phi_-\big((\alpha\cdot \bfn)(\tr u_- - f_n)\big).
	\label{eqn:diff_u_neg}
\end{equation}
Hence, by Theorem \ref{thm:regu_distri} there exists a constant $c_1>0$ such that:
\[
	\|u_{n,-} - u_-\|_{H(\alpha,\Omega_-)} \leq c_1 \|\tr u_- - f_n\|_{H^{-1/2}(\Sigma)^4}.
\]
By hypothesis, the term in the right-hand side goes to zero as $n$ goes to infinity so we obtain $u_{n,-} \underset{n\rightarrow+\infty}{\longrightarrow} u_-$ in the $\|\cdot\|_{H(\alpha,\Omega_-)}$-norm. 

Now we deal with $u_{n,+}$. We have:
\begin{equation}
	u_{n,+} - u_{+} = - \varepsilon \Phi_+(f_n -\tr u_-) + \varepsilon E_+\Big(\mathcal{A}\big(\alpha\cdot N (f_n - \tr u_-)\big)\Big).
	\label{eqn:diff_u_pos}
\end{equation}
It yields
\begin{equation}
	\|u_{n,+} - u_+\|_{H(\alpha,\Omega_+)} \leq \varepsilon \|\Phi_+(f_n -\tr u_-)\|_{H(\alpha,\Omega_+)} + \varepsilon\Big\|E_+\Big(\mathcal{A}\big(\alpha\cdot N (f_n - \tr u_-)\big)\Big)\Big\|_{H(\alpha,\Omega_+)}.
	\label{eqn:up_bound1}
\end{equation}
By Theorem \ref{thm:regu_distri}, there exists $c_2>0$ such that:
\begin{equation}
	\|\Phi(f_n -\tr u_-)\|_{H(\alpha,\Omega_+)} \leq c_2 \|\tr u_- - f_n\|_{H^{-1/2}(\Sigma)^4}.
	\label{eqn:eqn1_lemm}
\end{equation}
Thanks to Proposition \ref{prop:tr_th_cla} and Proposition \ref{ref:reg_anti}, there exists $c_3>0$ such that:
\begin{equation}
	\begin{array}{lcl}
	\Big\|E_+\Big(\mathcal{A}\big((\alpha\cdot \bfn) (f_n - \tr u_-)\big)\Big)\Big\|_{H(\alpha,\Omega_+)} &\leq& \Big\|E_+\Big(\mathcal{A}\big((\alpha\cdot \bfn) (f_n - \tr u_-)\big)\Big)\Big\|_{H^1(\Omega_+)^4}\\
	&\leq& c_3 \|\tr u_- - f_n\|_{H^{-1/2}(\Sigma)^4}.
	\end{array}
	\label{eqn:eqn2_lemm}
\end{equation}
The upper-bounds of Equations \eqref{eqn:eqn1_lemm} and \eqref{eqn:eqn2_lemm} combined with \eqref{eqn:up_bound1} yield the existence of $c>0$ such that:
\[
	\|u_{n,+} - u_+\|_{H(\alpha,\Omega_+)} \leq c \|\tr u_- - f_n\|_{H^{-1/2}(\Sigma)^4}.
\]
By hypothesis, the right-hand side converges to zero as $n$ goes to infinity and so, we get the convergence of $u_{n,+}$ to $u_+$ in the $\|\cdot\|_{H(\alpha,\Omega_+)}$-norm.
\end{proof}

We finish this subsection proving Theorem \ref{th:sadirac}.
\begin{proof}[Proof of Theorem \ref{th:sadirac}] The only thing left to prove is that $\dom ({\mathcal{H}}_\tau(m))\subsetneq\dom (\overline{\mathcal{H}}_\tau(m))$. Let $0\neq f\in H^{-1/2}(\Sigma)^4$ such that $f\notin H^{1/2}(\Sigma)^4$. Either $\mathcal{C}_+(f)$ of $\mathcal{C}_-(f)$ does not belong to $H^{1/2}(\Sigma)$. Assume $\mathcal{C}_-(f) \notin H^{1/2}(\Sigma)^4$, we set $g = \mathcal{C}_-(f)$. We consider the function
\[
u = (u_+,u_-) = \Bigg(\varepsilon \Phi_+(g) - \varepsilon E_+\Big(\mathcal{A}\big((\alpha\cdot\bfn)g\big)\Big),\Phi_-\big((i\alpha\cdot\bfn)g\big)\Bigg).
\]
By definition, $u\in H(\alpha,\Omega_+)\times H(\alpha,\Omega_-)$ and we have:
\begin{align*}
	i\varepsilon(\alpha\cdot\bfn)\tr u_- &= -i\varepsilon(\alpha\cdot\bfn)\mathcal{C}_-(g)\\
&= -\varepsilon i \mathcal{C}_+\big((\alpha\cdot\bfn)g\big) - \varepsilon\mathcal{A}\big((\alpha\cdot\bfn)g\big)\\
& = \tr u_+.
\end{align*}
Hence $u$ satisfies Transmission condition \eqref{eqn:transm_cond} which gives $u\in\dom\big(\overline{\mathcal{H}_\tau(m)}\big)$ by Proposition \ref{prop:clos_dom}. However, $u\notin\dom(\mathcal{H}_\tau(m))$, otherwise $\tr u_- \in H^{1/2}(\Sigma)^4$ which is not possible because $\tr u_- = -g = - \mathcal{C}_-(f) \notin H^{1/2}(\Sigma)^4$.

If $\mathcal{C}_+(f)\notin H^{1/2}(\Sigma)^4$ the proof goes along the same line setting $g = \mathcal{C}_+(f)$ and considering the function
\[
	u = (u_+,u_-) = \Bigg(\Phi_+\big((i\alpha\cdot\bfn)g\big), - \varepsilon \Phi_-(g) + \varepsilon E_-\Big(\mathcal{A}\big((\alpha\cdot\bfn)g\big)\Big)\Bigg),
\]
where $E_-$ is the extension operator of Proposition \ref{prop:tr_th_cla} for the domain $\Omega_-$.
\end{proof}

\begin{remark} When $\tau=\pm 2$, the domain of the extension differs significantly from the one of the initial operator. Indeed, following the proof of Theorem \ref{th:sadirac}, we remark that any function $0\neq f \in H^{-1/2}(\Sigma)^4$ that is not in $H^{1/2}(\Sigma)^4$ generate an element of $\dom (\overline{\mathcal{H}}_\tau(m))$ that is not in $\dom ({\mathcal{H}}_\tau(m))$.
\end{remark}

\myemph{Acknowledgements.}
Part of this project was written while T.~O.-B. was working at the BCAM-Basque Center for Applied Mathematics and  was supported by the Basque Government through the BERC 2014-2017 program and by Spanish Ministry of Economy and Competitiveness MINECO: BCAM Severo Ochoa excellence accreditation SEV-2013-0323. Now, T.~O.-B. is supported by a public grant as part of the {\it Investissement d'avenir} project, reference ANR-11-LABX-0056-LMH, LabEx LMH.

L.V. is supported by an ERCEA Advanced Grant 2014 669689 - HADE, by the MINECO project MTM2014-53850-P and MINECO Severo Ochoa excellence accreditation SEV-2013-0323

\bibliographystyle{abbrv}

\end{document}